\newcommand{\mean}{d}
\def\<{\langle}
\def\>{\rangle}
\def\be{\begin{equation}}
\def\ee{\end{equation}}
\newtheorem{theorem}{Theorem}
\newtheorem{lemma}[theorem]{Lemma}
\newtheorem{proposition}[theorem]{Proposition}
\newenvironment{proof}[1][Proof]{\noindent\textbf{#1.} }{\ \rule{0.5em}{0.5em}}
\begin{document}

\title{Reconstruction of Gaussian quantum mechanics from Liouville mechanics with an epistemic restriction}
\author{Stephen D. Bartlett}
\affiliation{School of Physics, The University of Sydney, Sydney, New South Wales 2006,
Australia}
\author{Terry Rudolph}
\affiliation{Controlled Quantum Dynamics Theory, Imperial College London, London SW7
2BW, United Kingdom}
\author{Robert W. Spekkens}
\affiliation{Perimeter Institute for Theoretical Physics, 31 Caroline St.~N, Waterloo, Ontario, Canada, N2L 2Y5}
\date{10 July 2012}

\begin{abstract}
How would the world appear to us if its ontology was that of classical mechanics but every agent faced a restriction on how much they could come to know about the classical state?  We show that in most respects, it would appear to us as quantum. The statistical theory of classical mechanics, which specifies how probability distributions over phase space evolve under Hamiltonian evolution and under measurements, is typically called \emph{Liouville mechanics}, so the theory we explore here is Liouville mechanics with an epistemic restriction.
The particular epistemic restriction we posit as our foundational postulate specifies two constraints.   The first constraint is a classical analogue of Heisenberg's uncertainty principle -- the second-order moments of position and momentum defined by the phase-space distribution that characterizes an agent's knowledge are required to satisfy the same constraints as are satisfied by the moments of position and momentum observables for a quantum state.  The second constraint is that the distribution should have maximal entropy for the given moments.
Starting from this postulate, we derive the allowed preparations, measurements and transformations
and demonstrate that they are isomorphic to those allowed in Gaussian quantum mechanics and generate the same experimental statistics.
We argue that this reconstruction of Gaussian quantum mechanics constitutes additional evidence in favour of a research program wherein quantum states are interpreted as states of incomplete knowledge,
and that the phenomena that do \emph{not} arise in Gaussian quantum mechanics provide the best clues for how one might reconstruct the full quantum theory.

\end{abstract}

\maketitle


\section{Introduction}

What is innovative about quantum mechanics from the perspective of classical
physics?  The thesis we shall defend in this article is that a large part
of quantum mechanics can be understood as arising from a single
innovation relative to classical theories: there is a restriction on how
much any agent can know about the physical state of a classical system.  To be a bit
more precise, the claim is that if one begins with the \emph{statistical}
classical theory, which is to say the one that quantitatively describes an
agent's knowledge of a classical system and therefore specifies how
probability distributions over the classical state space evolve over time
and how they are updated in the course of measurements, and if one then
assumes as a new fundamental postulate that agents are restricted in the
sorts of knowledge they can have about the classical state (or equivalently,
the form of the probability distributions they can prepare), then one can
derive a large part of quantum mechanics in the sense of reproducing its
operational predictions.

We shall consider classical particle mechanics here.  The statistical
theory in this case is known as \emph{Liouville mechanics}. The restriction
on knowledge that we adopt, and which we shall refer to as the \emph{epistemic restriction}, is inspired by Heisenberg's uncertainty principle
together with a principle of entropy maximization.  We derive which
preparations, measurements and transformations are consistent with the
epistemic restriction.  The result is a theory that we refer to as \emph{epistemically-restricted Liouville mechanics}, or ERL mechanics.  We then demonstrate its
equivalence to a subtheory of quantum mechanics which we call \emph{Gaussian
quantum mechanics} (about which we shall say more in a moment).  Significantly, this
implies that all phenomena arising in Gaussian quantum mechanics can be
interpreted in terms of probability distributions over a classical phase
space.  ERL mechanics provides a noncontextual hidden variable model for Gaussian quantum mechanics\footnote{The model is noncontextual in the generalized sense defined in \cite{Spe05}.}.

Within this model, all quantum states are represented by probability distributions that cover a non-vanishing volume of the phase space.  Consequently, they correspond to states of incomplete knowledge about the physical state of the system.  Furthermore, non-orthogonal quantum states correspond to overlapping probability distributions. Consequently, many distinct quantum states are consistent with the system being at a particular point in phase space; a change in the quantum state need not imply a change in reality.  Theories of this sort have been described as $\psi$\emph{-epistemic}~\cite{Spe07,Har10}.  ERL mechanics therefore provides a $\psi$-epistemic hidden variable
model for Gaussian quantum mechanics. The success of this model in reproducing aspects of quantum theory provides additional evidence in favour of interpretations of quantum theory where quantum states describe states of incomplete knowledge rather than states of reality.

We define Gaussian quantum mechanics in terms of the Wigner representation.  Among pure states, it is well known that a wavefunction has a magnitude with Gaussian profile over configuration space if and only if the state admits of a Gaussian (hence non-negative) Wigner representation, and that these are the \emph{only} pure states that have a non-negative Wigner representation~\cite{Hud74}. We here consider a mixed state to be Gaussian only if it has a Gaussian Wigner representation\footnote{Note that these are not the only mixed states that have non-negative Wigner representations -- mixing Gaussian pure states with a non-Gaussian measure, for instance, can yield such a state \cite{BW95}.}.  For all Gaussian states, the Wigner representation can be interpreted as a probability distribution over phase space.  The ability to interpret the Wigner representation of a state as a probability distribution over phase space is sometimes taken as a condition for classicality.  However, as emphasized in Ref.~\cite{Spe08}, this is not sufficient, because one needs to verify that \emph{the entire experiment}, including the measurements and transformations, admits of a classical explanation.  We take the Gaussian measurements to be those associated with positive operator valued measures (POVMs) all the elements of which have Gaussian (hence non-negative) Wigner representations.  This ensures that they can be interpreted as indicator functions (sometimes called ``response'' functions), which specify the conditional probability of the associated outcome for every classical phase space point. Similarly, we take the Gaussian transformations to be those associated with completely positive maps (CP maps) that also have Gaussian (hence non-negative) Wigner representations, which ensures that they can be interpreted as transition probabilities over the phase space.  To summarize, Gaussian quantum mechanics is defined as the subtheory of quantum mechanics\footnote{Note that, following Ref.~\cite{BartlettRowe}, the  Gaussian subtheory of quantum mechanics can be obtained from full quantum mechanics by applying a constraint to motion (in the sense of Dirac~\cite{Dirac}).} including only those preparations, measurements and transformations that have Gaussian Wigner representations, and as we have just noted (and will explain more carefully in Sec.~\ref{sec:GaussianQM}), all such procedures can be given a classical statistical interpretation.  We prove that ERL mechanics is operationally equivalent to Gaussian quantum mechanics by demonstrating that it reproduces the Wigner representation of the latter.

For those familiar with the Wigner representation, the definition of Gaussian quantum mechanics we are adopting here will seem natural, and the possibility of a classical statistical interpretation of this subtheory of quantum mechanics will come as no surprise.
What is not so obvious, and what it is the purpose of this article to demonstrate, is
that it is possible to \emph{derive} Gaussian quantum mechanics starting
from Liouville mechanics and imposing a restriction on knowledge.  This is
a distinction worth emphasizing. Finding a subtheory of quantum mechanics
admitting a non-negative Wigner representation is primarily an exercise in \emph{interpretation} -- one starts from the quantum formalism and proceeds to find a representation that admits of an interpretation in terms of noncontextual hidden variables. By contrast, showing that one can derive
this subtheory of quantum mechanics starting from classical mechanics and
imposing a restriction on knowledge is primarily an exercise in \emph{axiomatization}\footnote{Indeed, imposing an epistemic restriction on a statistical classical theory can be understood as a novel kind of quantization scheme, although, strictly speaking, it may not deserve the title given that it generally returns only a subtheory of quantum theory or an analogue thereof.}.

Of course, only \emph{part} of quantum mechanics, the Gaussian part, has been derived.   We have claimed above that this constitutes a ``large part'' of quantum mechanics, but a
skeptic may rightfully ask what is meant by this.  Insofar as Gaussian
quantum mechanics admits only quadratic Hamiltonians, it might seem to be a
very \emph{small} (and some might say uninteresting) part of the theory.
We argue that it \emph{does} capture a large part of quantum mechanics in
the sense that it captures a large number of the qualitative phenomena that
are usually highlighted as nonclassical, i.e., those that are usually deemed
to rule out a classical worldview.  It is this sort of counting that we
feel to be significant for the project of understanding what is innovative
about quantum theory from a classical perspective\footnote{And likewise for the project of finding physical principles that have some hope of implying quantum theory; see below.}. And by these lights, the counting is very favorable. ERL mechanics
succeeds at reproducing: (i) most basic quantum phenomena (including the
``usual suspects'' on the list of phenomena
that seem to defy classical explanation), for instance, the existence of
complementary measurements (i.e., that cannot be implemented jointly), the
existence of noncommuting measurements (i.e., where the statistics depend on
the order in which they are implemented), the collapse of the wavefunction,
and the no-cloning theorem; (ii) most of the information-processing tasks
that distinguish quantum theory from classical theories, such as
teleportation, key distribution, quantum error correction, and improvements
in metrology; (iii) a large part of entanglement theory, for instance, the
monogamy of pure entanglement, distillation, deterministic and probabilistic
single copy entanglement transformation, catalysis, etcetera; (iv) a large
part of what might be termed the ``statistical
structure'' of quantum theory, such as the isomorphism
between operations on a system and states on a pair of systems (the
Choi-Jamiolkowski isomorphism \cite{BZbook}), the fact that every mixed state has multiple
convex decompositions into pure states and multiple extensions to a pure
state on a larger system (purifications), the fact that every unsharp
measurement can be considered to be a sharp measurement on a larger system
(the Naimark extension \cite{BZbook}), the fact that every irreversible transformation can
be obtained by a reversible transformation on a larger system (the
Stinespring dilation \cite{BZbook}).

But what is the point of deriving only \emph{part} of quantum theory?  Why is a subtheory of quantum mechanics such as Gaussian quantum mechanics interesting?  We are certainly \emph{not} proposing this theory as an empirical competitor to quantum theory.  It is straightforward to prepare Hamiltonians that are not quadratic in position and momentum and hence to demonstrate the existence of deterministic dynamics that is not part of Gaussian quantum mechanics.
Indeed, there is a large range of quantum phenomena that have been predicted and observed in continuous-variable systems for which a description requires non-Gaussian operations, such as states with negative Wigner functions~\cite{GrangierNWF1,GrangierNWF2,GrangierNWF3,GrangierNWF4,Laiho}.  Rather, such theories are of interest as \emph{foils} to quantum theory.  They depict ways in which the world \emph{might have been}.  This is useful for identifying principles from which one can derive quantum theory because it is only by describing a broad landscape of possible theories that we can specify the sense in which quantum theory is special.
Foil theories that reproduce many quantum phenomena (such as the one considered here) are particularly useful for ruling out possible axiom schemes.  For instance, if one is contemplating a possible axiom scheme and all of the axioms hold true for Gaussian quantum mechanics, then one recognizes immediately that they are not sufficient for deriving the whole of quantum mechanics.  One needs to look at the phenomena that Gaussian quantum mechanics does \emph{not} reproduce in order to find an adequate set of axioms.

The most significant phenomena that are not included in Gaussian quantum mechanics are as follows.  Bell inequality violations~\cite{Bell64} are not included because epistemically-restricted Liouville mechanics provides a local hidden variable model for Gaussian quantum mechanics.  Indeed, locality for bipartite continuous-variable EPR experiments was already established in Refs.~\cite{BellEPR,OuCVBell}. The Kochen-Specker theorem~\cite{KS67} is not included, nor are violations of operational noncontextuality inequalities, as defined in Ref.~\cite{Spe09}.  This follows from the fact that epistemically-restricted Liouville mechanics is a noncontextual hidden variable model, or equivalently because the Wigner representation is a nonnegative quasi-probability representation and any such representation is a noncontextual hidden variable model~\cite{Spe08}.  Exponential speed-up for computation is not included (assuming it exists)~\cite{Nie00}.  This follows from the existence of an efficient classical simulation of Gaussian quantum mechanics~\cite{Bar02}. Quantum interference phenomena and quantization of quantities such as angular momentum and energy are also absent from the theory.   Another example is the phenomenon described in the recent article by Pusey, Barrett and Rudolph~\cite{PBR11}, which can be understood as a violation of a notion of noncontextuality for preparations~\cite{Spe05}.   There is little doubt that a direct test of the predictions of Gaussian quantum mechanics versus the predictions of the full quantum theory on any of these fronts would rule in favour of the full quantum theory.  Proposals for experimental tests of the Bell inequalities for continuous variable systems have been made in Refs.~\cite{Cavalcanti07,He10,Leonhardt95,Gilchrist98,Yurke99,Yurke01}.

An even more informative distinction is between phenomenon that can occur in \emph{some} classical statistical theory with an epistemic restriction (not necessarily classical particle mechanics), and those that cannot. For instance, for phenomena that are characteristic of finite-dimensional quantum systems, the relevant question is whether they arise in statistical theories of \emph{discrete} classical systems, such as the theory considered in Ref.~\cite{Spe07}.  As another example, interference and quantization phenomena may yet be incorporated under the umbrella of epistemically-restricted statistical theories wherein the ontology is \emph{fields} rather than particles.  The phenomena of nonlocality, contextuality and quantum exponential speed-up are distinguished in the list insofar as they are clearly insensitive to the degree of freedom one is considering.

This categorization of phenomena -- into those which arise in classical statistical theories with an epistemic restriction and those which do not -- is a useful application of our results.  From the perspective of the $\psi$-epistemic research program, there are two tiers of nonclassicality: the
first tier contains the phenomena that can be explained merely by
postulating an epistemic restriction but maintaining the notion of an
underlying classical ontology, while the second tier contains the rest.
For the purposes of moving the research program forward, it is the second
tier that is the most interesting, for it is by studying these phenomena
that one can hope to deduce additional principles that might supplement the
epistemic restriction and allow a derivation of the full quantum theory.
Consequently, it is useful to categorize as many phenomena as possible in
order to extend the list of second tier phenomena.

Nonetheless, we feel that the diversity and foundational importance of the
quantum phenomena that can be reproduced in classical
epistemically-restricted theories suggests that there is something right
about this research program.  In particular, its success suggests to us
that there may be an axiomatization of quantum theory of the following sort.
The first axiom states that there is a fundamental restriction on how much
observers can know about systems.  The second embodies some novel principle
about reality (rather than our knowledge thereof).  Ultimately, the first axiom ought to be derivable from the second because what one physical system can know about another ought to be a consequence of the nature of the
dynamical laws.

\subsection{Previous work that is relevant to this article}

The idea that quantum states are states of incomplete knowledge (i.e., \emph{epistemic} states) rather than states of reality (i.e., \emph{ontic} states) is an old one.  In Ref.~\cite{Har10}, it is argued that Einstein was an early advocate of $\psi$-epistemic hidden variable models.
\begin{quote}
\emph{... \strut I incline to the opinion that the wave function does not
(completely) describe what is real, but only a (to us) empirically
accessible maximal knowledge regarding that which really exists [...] This
is what I mean when I advance the view that quantum mechanics gives an
incomplete description of the real state of affairs.}
\newline --A. Einstein~\cite{einsteinquote}
\end{quote}
E.~T.~Jaynes, famous for his information-theoretic derivation of many results of
classical thermodynamics~\cite{Jay57}, also argued that many results in
quantum theory could be understood in this manner, but that a prerequisite
for doing so is to properly distinguish between ontic and epistemic concepts
in quantum theory.  
\begin{quote}
\emph{...present quantum theory not only does not use -- it does not even
dare to mention -- the notion of a \textquotedblleft real physical
situation.\textquotedblright\ Defenders of the theory say that this notion
is philosophically naive, a throwback to outmoded ways of thinking, and that
recognition of this constitutes deep new wisdom about the nature of human
knowledge. I say that it constitutes a violent irrationality, that somewhere
in this theory the distinction between reality and our knowledge of reality
has become lost, and the result has more the character of medieval
necromancy than of science.}
\newline --E.T. Jaynes~\cite{jaynesquote}
\end{quote}
\begin{quote}
\emph{But our present QM formalism is not purely epistemological; it is a peculiar mixture describing in part realities of Nature, in part incomplete human information about Nature --- all scrambled up by Heisenberg and Bohr into an omelette that nobody has seen how to unscramble. Yet we think that the unscrambling is a prerequisite for any further advance in basic physical theory. For, if we cannot separate the subjective and objective aspects of the formalism, we cannot know what we are talking about; it is just that simple.}
\newline --E.T. Jaynes~\cite{jaynesquote2}
\end{quote}
Ballentine has argued in favour of the
thesis that quantum states describe the statistical properties of a virtual
ensemble of systems, which is equivalent to saying that it describes one's
limited information about a single system drawn from the ensemble~\cite{Bal70,Bal94}.  More
recent work that we take to be indicative of the explanatory power of $\psi$-epistemic hidden variable models for quantum theory are Refs.~\cite{Eme01,Har99,Kir03,Spe07}.

There is also much interest in the notion that quantum states are states of
knowledge outside the context of hidden variable approaches. Given that
\emph{pure} quantum states are the ones with maximum information content
(i.e., the most predictability), if one accepts that even these are
epistemic, one is accepting that \emph{maximal} information is not \emph{complete} information.  This is a notion that has been popular of late as a principle from which quantum theory might be
derived.  For instance, it is central to the quantum Bayesian or ``Q-bist'' research program of Fuchs and his collaborators \cite{CF96, CFS02b, Fuc02, Fuchssamizdat,FuchsQBism}.
The work of Leifer and developments thereof \cite{Lei06,SpeLei11} are also along this vein.
The idea also appears in the context of operational reconstructions of quantum theory,
for instance, in Refs.~\cite{Zei99,BZ99, PDB08}.

It should be noted that researchers who may agree
that quantum states are states of incomplete knowledge may still not agree
on what this knowledge is knowledge \emph{about}.  For instance, in quantum
Bayesianism, it is about the ``outcomes of future
interventions'' on the system rather than about some
pre-existing reality.  Because of Bell's theorem and the Kochen-Specker
theorem, it is clear that if quantum states are states of incomplete
knowledge, this knowledge \emph{cannot} be about local and noncontextual
hidden variables.  But local and noncontextual hidden variables do not
necessarily exhaust the possibilities for something to meet the description
of a pre-existing reality and consequently there may still be room for an
interpretation along these lines.  Indeed, this is the idea of the speculative
axiomatization described above.

The previous work that is most relevant to this article is Ref.~\cite{Spe07}, where an epistemic restriction is applied in the context of a classical theory of systems with discrete state spaces to obtain a ``toy theory'' that is very close -- but not equivalent -- to a subset of quantum theory, namely, the stabilizer formalism for qubits. The present work can be seen as an application to continuous variable systems of the idea proposed there.

\subsection{Structure of the paper}

We review the key features of quantum mechanics and Liouville mechanics in Sec.~\ref{sec:Prelim}, focusing on the properties that are important for our discussion.  We introduce ERL mechanics in Sec.~\ref{sec:ERLTheory}, first with a formulation of the epistemic constraint (Sec.~\ref{sec:EpistemicRestriction}), as well as a pedagogical discussion of the basic features of this theory (Sec.~\ref{sec:BasicFeatures}) along with an analysis of some of the quantum phenomena that it reproduces (Sec.~\ref{sec:phenomena}).  Our main result is presented in Sec.~\ref{sec:Equivalence}, where we prove the operational equivalence of ERL mechanics and Gaussian quantum mechanics.

\section{Preliminaries}

\label{sec:Prelim}

The restriction we adopt is motivated by Heisenberg's uncertainty principle in quantum theory, so we begin with a review of the latter in Sec.~\ref{sec:QuantumIntro}, focussing on the particular elements that will be important for ERL mechanics.  Next, in Sec.~\ref{sec:Liouville}, we review the formulation of Liouville mechanics.

\subsection{Quantum Mechanics}
\label{sec:QuantumIntro}

For a system with a configuration described by $n$ degrees of freedom (e.g., 1 particle in $n$ dimensions, $n$/3 particles in 3 dimensions, $n$ particles in 1 dimension, etcetera), the $2n$ canonical operators for the positions $\{\hat{q}_{i},i=1,\ldots ,n\}$ and corresponding momenta $\{\hat{p}_{i},i=1,\ldots ,n\}$ satisfy $[\hat{q}_{i},\hat{p}_{j}]=i\hbar \delta _{ij}\hat{I}$, with $\hat{I}$ the identity operator. We express the $2n$ canonical operators in the form of phase space coordinates, defining $\hat{z}_{2i-1}=\hat{q}_{i}$ and $\hat{z}_{2i}=\hat{p}_{i}$ for $i=1,\ldots ,n$.  These operators satisfy $[\hat{z}_{i},\hat{z}_{j}]=i\hbar \Sigma _{ij}$, with $\Sigma$ the skew-symmetric $2n\times 2n$ matrix $\Sigma_{ij}=\delta _{i,j+1}-\delta _{i+1,j}$, that is,
\begin{equation}
\label{eq:SymplecticForm}
\Sigma =
\begin{pmatrix}
0 & -1 &  0 & 0 & \dots \\
1 & 0 &  0 & 0 & \\
0 & 0 & 0  & -1 & \\
0 & 0 & 1 & 0 &  \\
 \vdots & &  & & \ddots
\end{pmatrix}
\end{equation}


The state of a quantum system is described by a density operator $\rho$.  For a state $\rho$, the \emph{means} of the canonical operators are defined
to be
\begin{equation}
  \mean_{i}(\rho)=\mathrm{Tr}(\rho \hat{z}_{i}),  \label{means}
\end{equation}
and the \emph{covariance matrix} is defined as
\begin{align}
  \gamma _{ij}(\rho)& =\mathrm{Tr}\bigl(\rho (\hat{z}_{i}-\mean_{i})(\hat{z}_{j}-\mean_{j})\bigr)-i\hbar \Sigma _{ij}  \notag \\
  & =2\mathrm{Re}\,\mathrm{Tr}\bigl(\rho (\hat{z}_{i}-\mean_{i})(\hat{z}_{j}-\mean_{j})\bigr)\,,
  \label{eq:CovarianceMatrix}
\end{align}
where the operator ordering in this definition is chosen such that $\gamma_{ij}$ is Hermitian.  In terms of the covariance matrix, a general form of the quantum uncertainty principle can be expressed as
\begin{equation}
\gamma (\rho )+i\hbar \Sigma \geq 0\,.
\label{eq:CanonicalUncertaintyRelations}
\end{equation}
This can be \emph{derived} from the canonical commutation relations of the operators \cite{GossonTextbook}.

For illustration, we now show that this inequality reduces to the usual uncertainty relation for a single system in 1 dimension (single $\hat{q}$ and $\hat{p}$). We have
\begin{equation}
\gamma (\rho )=
\begin{pmatrix}
2(\Delta q)^{2} & \langle \hat{q}\hat{p}+\hat{p}\hat{q}\rangle -2\langle
\hat{q}\rangle \langle \hat{p}\rangle \\
\langle \hat{q}\hat{p}+\hat{p}\hat{q}\rangle -2\langle \hat{q}\rangle
\langle \hat{p}\rangle & 2(\Delta p)^{2}%
\end{pmatrix}
\,,  \label{eq:Gamma}
\end{equation}
where $(\Delta q)^{2}=\langle (\hat{q}-\langle q\rangle )^{2}\rangle $ and
similarly for $(\Delta p)^{2}$. The condition $\gamma (\hat{\rho})+i\hbar
\Sigma \geq 0$ for a $2\times 2$ matrix is equivalent to $\mathrm{det}
(\gamma (\hat{\rho})+i\hbar \Sigma )\geq 0$. Thus,
\begin{equation}
4(\Delta q)^{2}(\Delta p)^{2}\geq (\langle \hat{q}\hat{p}+\hat{p}\hat{q}
\rangle -2\langle \hat{q}\rangle \langle \hat{p}\rangle )^{2}+\hbar ^{2}\,,
\label{eq:FullUncertaintyRelation}
\end{equation}
which implies the standard form of the quantum uncertainty principle
\begin{equation}
  \Delta q\Delta p\geq \hbar /2\,.
  \label{eq:QuantumUncertaintyPrinciple}
\end{equation}

Because all unitary transformations preserve the commutation relations, they also preserve the general form of the uncertainty relation, Eq.~\eqref{eq:FullUncertaintyRelation}.  In this article, what will be relevant are those unitary transformations that act \emph{linearly} on the canonical operators.  These are the \emph{linear symplectic transformations}.  Each such transformation can be represented by a $2n \times 2n$ real matrix $A$ satisfying
\begin{equation}
  A^\dag \Sigma A = \Sigma\,,
\end{equation}
that acts on the canonical operators as
\begin{equation}
\hat{\mathbf{z}} \rightarrow A^\dag \hat{\mathbf{z}},
\label{eq:SymplecticTrans}
\end{equation}
where $\hat{\mathbf{z}} = (\hat{z}_i)$ is the vector of canonical operators.
It follows that the action of such a symplectic transformation on the vector of means can be inferred from
Eq.~(\ref{means}) to be simply
\begin{equation}
\mathbf{\mean} \rightarrow A^\dag\mathbf{\mean} \,,
\end{equation}
and it then follows from Eq.~(\ref{eq:CovarianceMatrix}) that the action on the
covariance matrix is
\begin{equation}
\gamma \rightarrow A^{\dag}\gamma A\,.
\end{equation}
Because $A^{\dag }\Sigma A=\Sigma$, the transformed covariance matrix $A^{\dag}\gamma A$ also satisfies Eq.~(\ref{eq:CanonicalUncertaintyRelations}).

\subsection{Liouville mechanics}
\label{sec:Liouville}

Liouville mechanics is the dynamical theory for states of knowledge about a classical system.  A classical system is described by a \emph{phase space}, and the real state of affairs of a classical system, i.e., its \emph{ontic state}, corresponds to a point in phase space.  Recall that a phase space is an even-dimensional differentiable manifold $\mathcal{M}$ with a symplectic structure, meaning that it locally admits coordinates $\{q_{i},p_{i};i=1,\ldots ,n\}$ with a Poisson bracket $\{q_{i},p_{j}\}=\delta _{ij}$~\cite{Arn97}.  We express these $2n$ canonical coordinates in the form $z_{2i-1} = q_i$ and $z_{i}=p_i$ for $i=1,\ldots,n$, and the Poisson bracket in these coordinates is $\{z_i,z_j\}=\Sigma_{ij}$, with $\Sigma$ defined as above.  We use $\mathbf{z}$ to denote the vector of coordinates, defining a point in the phase space $\mathcal{M}$, i.e., $\mathbf{z} \in \mathcal{M}$.  Systems can be combined into composite systems, with a phase space given by the Cartesian product of the phase spaces of the components, $\mathcal{M}_{AB} = \mathcal{M}_A \times \mathcal{M}_B$.

Let $L(\mathcal{M})$ be the space of real-valued functions on the phase space $\mathcal{M}$, that is, $L(\mathcal{M})=\{f:\mathcal{M}\rightarrow\mathbb{R}\}.$  The space of functions on a composite system's phase space is the \emph{tensor product} of the space of functions over the component phase spaces, $L(\mathcal{M}_{AB})=L(\mathcal{M}_{A})\otimes L(\mathcal{M}_{B})$, that is, the closure of the Cartesian product $L(\mathcal{M}_{A})\times L(\mathcal{M}_{B})$ under linear combinations.
A function is non-negative, $f \geq 0$, if $f(\mathbf{z}) \geq 0$ for all $\mathbf{z} \in \mathcal{M}$.  We can define a norm on the set of functions by
\begin{equation}
  |f|=\int_{\mathcal{M}}d\mathbf{z}\, f(\mathbf{z} )\,, \quad f \in L(\mathcal{M}).
\end{equation}

In a classical theory, any probability distribution on phase space, sometimes called a \emph{Liouville distribution}, represents a possible description of an observer's knowledge of that system.  That is, any Liouville distribution is a valid \emph{epistemic state} for the system.  The probability distributions on $\mathcal{M}$ are the functions $\mu$ that
are non-negative with norm $1$.  We define this set to be $L_+(\mathcal{M})$, i.e.,
\begin{equation}
  L_+(\mathcal{M}) = \{ \mu \in L(\mathcal{M})\ \text{s.t.} \ \mu \geq 0,
  |\mu | =1 \}\,.
\end{equation}

A Liouville distribution $\mu \in L_+(\mathcal{M})$ is a probability distribution (strictly speaking, a probability density), with which one can define expectation values of functions $f$ on $\mathcal{M}$ denoted $\langle
f\rangle_{\mu}=\int_{\mathcal{M}}f(\mathbf{z})\mu(\mathbf{z})d\mathbf{z}$. Thus, to every Liouville distribution $\mu$ we assign a set of means $\mean_{i}(\mu)=\langle z_{i}\rangle _{\mu}$ and a covariance matrix
\begin{align}
  \gamma _{ij}(\mu)& =2\bigl\langle(z_{i}-\mean_{i}(\mu))(z_{j}-\mean_{j}(\mu))\bigr\rangle_{\mu} \notag \\
  & =2\langle z_{i}z_{j}\rangle _{\mu}-2\langle z_{i}\rangle _{\mu}\langle
  z_{j}\rangle _{\mu}\,,
  \label{eq:CovarianceMatrixLiouville}
\end{align}
to the canonical coordinates $\{z_i; i=1,\ldots,2n\}$.  For any Liouville distribution, the covariance matrix is positive semidefinite
\begin{equation}
  \gamma (\mu)\geq 0\,.
  \label{eq:LiouvilleCovarianceCondition}
\end{equation}
In the case of a single system in 1 dimension, we have
\begin{equation}
  \gamma (\mu)=
  \begin{pmatrix}
  2(\Delta q)^{2} & 2\left( \left\langle qp\right\rangle -\langle q\rangle
  \langle p\rangle \right) \\
  2\left( \left\langle qp\right\rangle -\langle q\rangle \langle p\rangle
  \right) & 2(\Delta p)^{2}
  \end{pmatrix}
  \,.
\end{equation}
Eq.~(\ref{eq:LiouvilleCovarianceCondition}) yields no restriction on the product of variances of position and momentum except for the trivial one,
\begin{equation}
  \Delta q\Delta p\geq 0.
\end{equation}
It is the $i\hbar \Sigma $ term that appears in Eq.~(\ref{eq:CanonicalUncertaintyRelations}) and that is absent from Eq. (\ref{eq:LiouvilleCovarianceCondition}) which accounts for the existence of a restriction on the product of the variances in position and momentum in quantum theory and the absence of any such restriction in Liouville mechanics.


\section{Epistemically-restricted Liouville mechanics}
\label{sec:ERLTheory}

\subsection{The epistemic restriction}
\label{sec:EpistemicRestriction}

What we shall consider in this paper is a theory that can be obtained from Liouville mechanics by adding a foundational postulate,  a restriction on the allowed epistemic states (phase-space distributions) within the theory.
\medskip

\noindent\textbf{Epistemic Restriction.}
A distribution over phase space, $\mu\in L_+(\mathcal{M})$, can describe
an observer's knowledge of the ontic state of a physical system if and only if it satisfies both of the following constraints\footnote{Exceptions to this rule arise if the evidence upon which the observer's knowledge is conditioned is related to the physical system in a non-standard way, for instance, from a pair of measurements on the system, one in the past and the other in the future (pre- and post-selection)~\cite{Aharonov2008}. We discuss this caveat at the end of this section.}:
\begin{description}
\item[(a) The classical uncertainty principle (CUP).] The covariance matrix of the distribution, $\gamma(\mu)$, must satisfy the inequality
\begin{equation}
  \gamma (\mu)+i\lambdabar \Sigma \geq 0\,,  \label{eq:EpistemicRestriction}
\end{equation}
where $\lambdabar > 0$ is a free parameter of the theory (with units of action).
\item[(b) The maximum entropy principle (max-ent).] The distribution $\mu$ must have maximum entropy over the phase space,
\begin{equation}
  S(\mu)=-\int_{\mathcal{M}} \mu(\mathbf{z})\mathrm{\log }\mu(\mathbf{z})\mathrm{d}\mathbf{z}\,,
\end{equation}
among all possible phase-space distributions with the same covariance matrix.
\end{description}

\noindent If a distribution $\mu \in L_+(\mathcal{M})$ satisfies both the CUP and the max-ent condition, we say that it is a \emph{valid} epistemic state, i.e., $\mu \in L_{\rm valid}(\mathcal{M})$ where $L_{\rm valid}(\mathcal{M}) \subset L_+(\mathcal{M})$ is the set of all valid epistemic states on $\mathcal{M}$.

The CUP is obviously chosen to parallel the quantum uncertainty principle, Eq.~(\ref{eq:FullUncertaintyRelation}), with $\lambdabar$ playing the role of Planck's constant.  For a single system in 1 dimension, the CUP implies
\begin{equation}
4(\Delta q)^{2}(\Delta p)^{2}\geq (\langle \hat{q}\hat{p}+\hat{p}\hat{q}
\rangle -2\langle \hat{q}\rangle \langle \hat{p}\rangle )^{2}+\lambdabar ^{2}\,,
\label{eq:ClassicalFullUncertaintyRelation}
\end{equation}
which in turn implies
\begin{equation}
  \Delta q\Delta p\geq \lambdabar /2\,,
  \label{eq:ClassicalUncertaintyPrinciple}
\end{equation}
where variances and expectation values are relative to the classical distribution over phase space.  So we see that the free parameter $\lambdabar$ fixes the minimum product of variances of position and momentum in our theory.  Note that the presence of the imaginary number $i$ in the CUP does not imply that we have made some kind of transition from probability distributions to complex amplitudes; the inequality (\ref{eq:EpistemicRestriction}) represents a set of inequalities on the real eigenvalues of the Hermitian matrix $\gamma (\mu)+i\lambdabar \Sigma$  and is therefore simply a compact way of expressing a set of constraints on the variances and cross-correlations.


There are many distributions $\mu$ that have a given set of mean values of
the canonical coordinates, $\mathbf{\mean}$, and covariance matrix $\gamma$. The max-ent part of our
epistemic restriction specifies that among these, the only distribution that an agent can assign to a system is the one that maximizes the entropy of the distribution over the phase space for this set of mean values and covariance matrix. According to Jaynes' max-ent principle~\cite{Jay57}, this assumption ensures that an agent should have the maximum uncertainty about the physical state of the system consistent with knowing the means and the covariance matrix. The max-ent constraint is ultimately justified \emph{a posteriori} -- we assume it because the theory that one derives without it is less analogous to quantum theory.  In particular, while more distributions would be allowed in a theory that did not assume the max-ent condition, they have no counterpart in quantum theory. Moreover, these additional distributions come at a cost, namely, that the set of allowed measurements is highly proscribed relative to the theory that does assume the max-ent condition.  See Appendix~\ref{sec:Whymaxent} for more details.

It can be shown that the set of
distributions that satisfy \emph{both} the CUP and the max-ent condition are multi-variate Gaussians, given by
\begin{equation}
\mu (\mathbf{z})=\frac{1}{(2\pi)^{n}{\rm det}{\gamma}^{1/2}}\mathrm{exp}
\left( -\frac{1}{2}(\mathbf{z}-\mathbf{\mean})^{T}{\gamma}^{-1}(\mathbf{z}-\mathbf{\mean})\right)\,,
\label{eq:GaussianDistFromCovMatrix}
\end{equation}
where $\gamma$ is the covariance matrix and $\mathbf{\mean}$ is the vector of
mean values of the coordinates.  (The analogy with the Wigner
function will be explored in Sec.~\ref{sec:Wigner}.)  Note that if $\gamma(\mu)$ is not strictly positive-definite, one is required to use a pseudoinverse $\gamma^{-1}$ in this expression.

The theory of ERL mechanics describes a world that is classical in its ontology but wherein there is a fundamental restriction on experimental operations, that is, a restriction on what sorts of preparations, measurements and transformations are possible such that an observer's knowledge of a system must always be given by a probability distribution $\mu$ that satisfies the epistemic restriction.  Whereas it is often argued that one cannot interpret the quantum uncertainty principle, Eq.~(\ref{eq:QuantumUncertaintyPrinciple}), as expressing a constraint on what one knows about well-defined and pre-existing values of the position and momentum\footnote{For this reason, the term \emph{indeterminacy relation} is sometimes argued to be preferable to \emph{uncertainty relation} in the quantum context.}, this is precisely the physical content of the \emph{classical} uncertainty principle, Eq.~(\ref{eq:EpistemicRestriction}).

Finally, we emphasize that the epistemic constraint has implications both for predictions as well as retrodictions within ERL mechanics, and it is worth taking note of a subtlety in this regard.  In quantum theory, the Heisenberg uncertainty principle applies for pure predictions and pure retrodictions, and not for inferences based on pre- and post-selection \cite{Aharonov2008}.  To see this, consider a sequence of three von-Neumann measurements on a completely mixed state.  The first is a position measurement with outcome $q$ (the pre-selection), the last is a momentum measurement finding outcome $p$ (the post-selection), and the intermediate measurement is the one whose outcome is to be estimated.  There is no uncertainty relation because there is no trade-off between the certainty one has about the outcome of an intermediate position measurement and the certainty one has about the outcome of an intermediate momentum measurement.  If the intermediate measurement is of position, then one knows that its outcome will be $q$, based on the pre-selection, while if it is of momentum, then one knows that its outcome will be $p$, based on the post-selection.  One is certain of the outcome in both of the counterfactual scenarios.  In ERL mechanics, one can also come to learn both the position and the momentum of a physical system using pre- and post-selection.  In other words, the epistemic restriction, like the uncertainty principle in quantum theory, applies only for pure predictions and pure retrodictions, and not for inferences based on pre and post-selection.

\subsection{Basic features of ERL mechanics}
\label{sec:BasicFeatures}

In this section, we will describe some of the basic features of ERL mechanics, with an emphasis on the qualitative rather than formal descriptions.  In Sec.~\ref{sec:phenomena}, we illustrate in detail how several paradigmatic quantum phenomena are reproduced.



\subsubsection{Reversible transformations}
\label{sec:Reversible}

Every transformation between epistemic states must be the result of a transformation of the ontic states. The reason is that we are contemplating a world that obeys classical dynamics, so that \emph{by assumption} dynamics corresponds to a mapping of the ontic state space to itself.  If an agent lacks knowledge of this dynamics, then they might describe what they know by a stochastic map, determining a probability distribution over final ontic states for every initial ontic state.  However, if the agent can reverse the transformation, then it follows that this map must be a bijective function over the ontic state space.  The set of reversible transformations on canonical coordinates that are allowed in classical mechanics are the symplectic transformations~\cite{Arn97}.  These are precisely the transformations generated by time evolution under an arbitrary Hamiltonian (that is, one which is an arbitrary function of the canonical coordinates).  Liouville's Theorem~\cite{Arn97} tells us that phase space volumes are preserved under such symplectic transformations.  Thus, in that any covariance matrix can be viewed as defining a volume of phase space via an ellipsoid with axes given by the eigenvectors of the covariance matrix, satisfaction of the CUP is preserved by all symplectic transformations.  That is, if the CUP is satisfied and a symplectic transformation is applied to the system, then it continues to be satisfied.

In contrast to the CUP, the max-ent condition is only preserved by a subset of the symplectic transformations, namely, the \emph{linear} symplectic transformations.
These are defined as the symplectic transformations that act linearly on the canonical coordinates.  Each such transformation can be represented by a $2n \times 2n$ real matrix $A$ satisfying $A^\dag \Sigma A = \Sigma$, where $\Sigma$ is defined in Eq.~\eqref{eq:SymplecticForm}, and that acts on the symplectic vector space $\mathcal{M}$ as
\begin{equation}
\mathbf{z} \rightarrow A^\dag \mathbf{z}\,.
\end{equation}
 The linear symplectic transformations are those that can be generated by time evolution under a Hamiltonian at most quadratic in the canonical coordinates.  These are the only symplectic transformations that preserve the max-ent condition because these are the only ones that map all Gaussian functions to Gaussian functions.  Such transformations correspond to phase-space displacements, rotations and squeezing.
We noted that any symplectic transformation preserves the CUP, so linear symplectic transformations must as well.  Nonetheless, it is illustrative to see a direct proof of this fact. It suffices to note that a linear symplectic transformation $A$ induces a transformation of the covariance matrix of the form $\gamma \to A\gamma A^{\dag}$, so that if $\gamma +i\hbar\Sigma$ is a positive matrix, then so is $A(\gamma +i\hbar\Sigma)A^{\dag}$, and this in turn implies that $A\gamma A^{\dag} +i\hbar\Sigma$ is positive.  Fig.~\ref{fig:NoDecrease} illustrates a transformation on a valid Liouville distribution that is not allowed within ERL mechanics.

\begin{figure}
\begin{center}
\includegraphics[width=1\hsize]{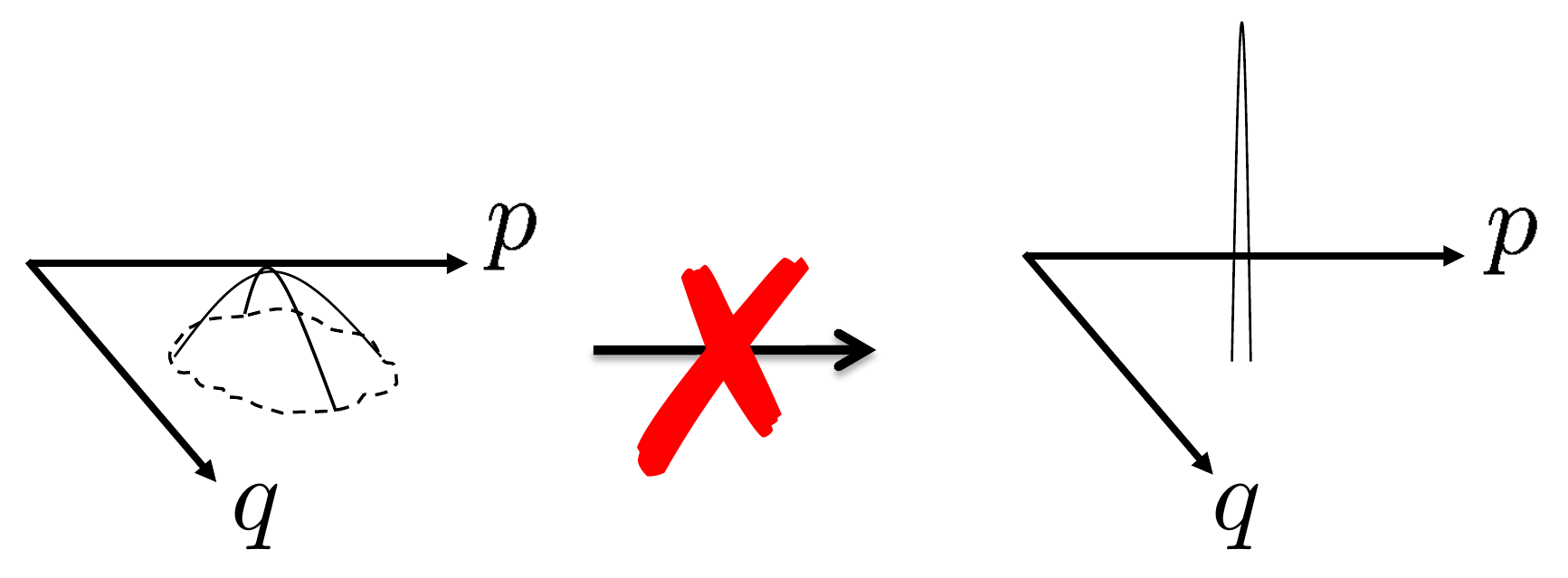}
\caption{The reversible transformations within ERL theory preserve the epistemic constraint.  As a result, a transformation that reduces the uncertainty in both position and momentum of a Gaussian distribution, as illustrated, is not allowed within ERL mechanics.}\label{fig:NoDecrease}
\end{center}
\end{figure}



\subsubsection{Perfect knowledge of quadrature variables}

For a single degree of freedom, the classical uncertainty principle states that there is a trade-off between the degree of certainty an agent can have about each of two variables in a canonically conjugate pair.  One form of this trade-off is to have perfect knowledge of one of the variables and no knowledge of the other.  Such states of knowledge form an interesting subset of the valid epistemic states which correspond in quantum theory to eigenstates of \emph{quadrature operators}, which are linear combinations of the position and momentum operator (strictly speaking, these eigenstates are not normalized vectors in the Hilbert space, but we will not concern ourselves with these mathematical subtleties).  Here, we show how such states of knowledge can be described and shown to be consistent with the epistemic constraint.

Consider a Gaussian distribution $\mu$ on a phase space $\mathcal{M}$ for a single degree of freedom, with mean position $a$, mean momentum $b$ and covariance matrix
\begin{equation}
  \gamma_s =
    \begin{pmatrix}
	2s^{2} & 0 \\
	0 & 2\lambdabar^2 s^{-2}
	\end{pmatrix}
  \,,
\end{equation}
where $s$ is a real parameter.  This covariance matrix clearly saturates the CUP for all $s$, and being Gaussian the distribution is therefore in $L_{\rm valid}(\mathcal{M})$.  The corresponding epistemic state of the form (\ref{eq:GaussianDistFromCovMatrix}) factorizes into a Gaussian distribution over $q$ and a Gaussian distribution over $p$, i.e.,
\begin{equation}
\label{eq:GaussGauss}
  \mu(q,p) =  G_{a,s}(q) G_{b,\lambdabar s^{-1}}(p)\,,
\end{equation}
where $G_{a,s}$ is a single-variable Gaussian with mean $a$ and standard deviation $s$, i.e.,
\begin{align}
  G_{a,s}(q) &= \frac{1}{2\sqrt{\pi}s} \exp\bigl(-\frac{(q-a)^2}{4s^2}\bigr) \,, \\
  G_{b,\lambdabar s^{-1}}(p) &= \frac{1}{2\sqrt{\pi}\lambdabar s^{-1}} \exp\bigl(-\frac{(p-b)^2}{4\lambdabar^2 s^{-2}}\bigr) \,.
\end{align}
Now consider the limit $s\rightarrow 0$.   This is the limit where uncertainty about position vanishes and uncertainty about momentum diverges.  Given that decreasing $s$ corresponds to squeezing the epistemic state along the position axis, we can also consider the limit $s\rightarrow 0$ to be the limit of infinite squeezing.  In this limit, the position distribution $G_{a,s}(q)$ becomes a Dirac delta function $\delta(q-a)$ centred at $a$,
\begin{equation}
  G_{a,s}(q) \to \delta(q-a) \equiv \lim_{s\to 0} \frac{1}{2\sqrt{\pi}s} \exp\bigl(-\frac{(q-a)^2}{4s^2}\bigr) \,,
\end{equation}
and the momentum distribution $G_{b,\lambdabar s^{-1}}$ approaches a uniform distribution.  Thus, the epistemic state corresponding to infinite squeezing along position (with mean position $q=a$) is the limit $s \to 0$ of Eq.~\eqref{eq:GaussGauss},
\begin{equation}
  \mu_{q=a}(q,p) = \lim_{s\to 0} G_{a,s}(q) G_{b,\lambdabar s^{-1}}(p)  \propto \delta(q-a) \,.
\end{equation}
It is the analogue within ERL mechanics of the eigenstate of the position operator $\hat{q}$ with eigenvalue $a$.

By applying a rotation to the phase space, which is a linear symplectic transformation, we can obtain related distributions of the form
\begin{equation}
  \mu_{q_\theta = a_\theta}(q,p) \propto \delta(q_\theta - a_\theta) \,,
\end{equation}
for any positive $a_{\theta}$ where $q_{\theta} = \cos(\theta)q +\sin(\theta)p$ is an arbitrary \emph{quadrature} (the quadrature variables are the linear combinations of position and momentum). This distribution corresponds to having perfect knowledge of the quadrature $q_{\theta}$ and no knowledge of the canonically conjugate quadrature $q_{\theta+\pi}$.  Because every linear symplectic transformation takes a valid epistemic state to another valid epistemic state, it follows that all these distributions are valid.  They are the analogues within ERL mechanics of the eigenstates of the quadrature operators $\hat{q}_{\theta} = \cos(\theta)\hat{q} +\sin(\theta)\hat{p}$.  See Fig.~\ref{fig:Quadratures} for an illustration.

\begin{figure}
\begin{center}
\includegraphics[width=1\hsize]{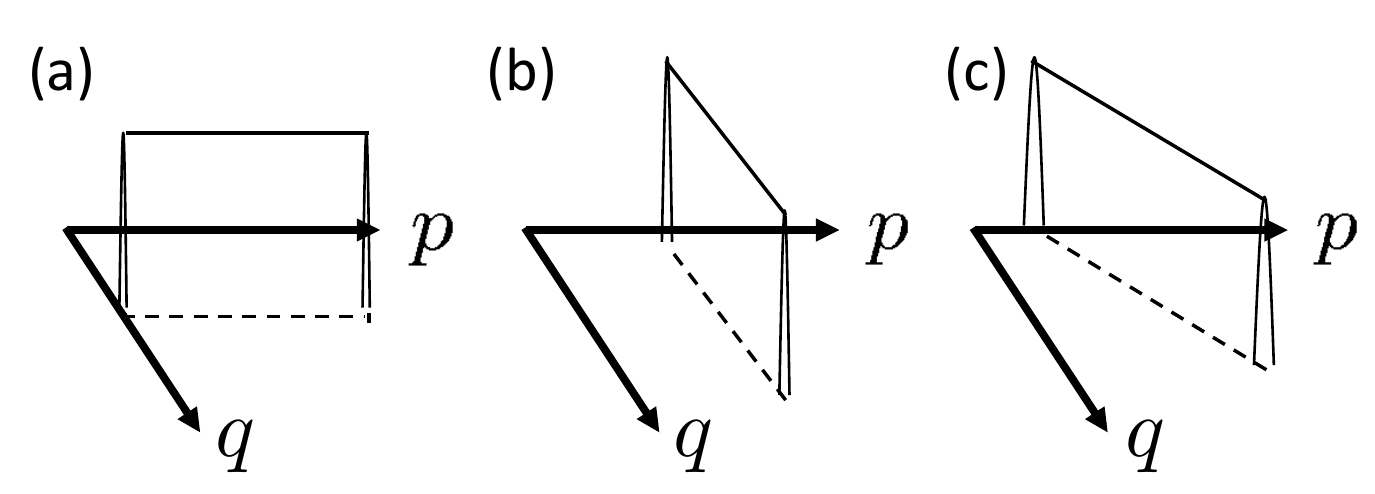}
\caption{Valid Liouville distributions corresponding to perfect knowledge of (a) position, (b) momentum, (c) a general quadrature.}\label{fig:Quadratures}
\end{center}
\end{figure}

For a composite system of $n$ canonical degrees of freedom, an argument paralleling the one above shows that one can have perfect knowledge of all of the canonical positions and no knowledge of any of the canonical momenta, corresponding to the distribution
\begin{equation}
  \mu_{\mathbf{q}=\mathbf{a}}(\mathbf{z}) = \prod_{i=1}^n \mu_{q_i=a_i}(q_i,p_i)\,,
\end{equation}
where $\mathbf{z} \in \mathcal{M}$.  Because one can implement a linear symplectic transformation on each component, it is clear that one can prepare any product of valid epistemic states for the components. That is, one can have perfect knowledge of some arbitrary quadrature for each component. But one can also have correlated epistemic states, as we now demonstrate.

\subsubsection{Correlated epistemic states}
\label{subsec:EPR1}

Given that linear symplectic transformations can mix the canonical variables of different systems, this allows states of perfect knowledge of relational and collective variables.  For instance, for a pair of systems $A$ and $B$ each with a single degree of freedom, the following linear map is easily shown to be symplectic:
\begin{align}
  q_A &\to q_A-q_B\,, &\quad p_A &\to p_A-p_B\,, \nonumber \\
  q_B &\to q_A+q_B\,, &\quad p_B &\to p_A+p_B\,.
\end{align}
Therefore, by starting with a valid epistemic state for which one has perfect knowledge of $q_A$ and $p_B$, we can map to a valid epistemic state with perfect knowledge of the relative position $q_A-q_B$ and the total momentum $p_A+p_B$ while having no knowledge of the canonically conjugate variables $q_A+q_B$ and $p_A-p_B$.

The particular epistemic state for which the relative position and total momentum are both known to vanish, $q_A-q_B=0$ and $p_A+p_B=0$, corresponds to the quantum state described by Einstein, Podolsky and Rosen (EPR)~\cite{EPR35}, which exhibits maximal entanglement between the pair of systems.  We define a valid epistemic state with these properties explicitly as the limit of Gaussians that are squeezed along $q_A-q_B$ and $p_A+p_B$,
\begin{multline}
  \mu^{\text{corr}}_{AB}(q_A,p_A,q_B,p_B) \\
   = \lim_{s\to 0} G_{0,s}(q_A-q_B) G_{0,\lambdabar s^{-1}}(p_A-p_B)
   \\ \times G_{0,s^{-1}}(q_A+q_B) G_{0,\lambdabar s}(p_A+p_B)\,.
   \label{eq:corr}
\end{multline}
Clearly, this distribution corresponds to knowing that $q_A-q_B=0$ and $p_A+p_B=0$,
\begin{equation}
  \mu^{\text{corr}}_{AB}(q_A,p_A,q_B,p_B) \propto \delta(q_A-q_B)\delta(p_A+p_B)\,.
  \label{eq:corr2}
\end{equation}
So we see that maximal bipartite entanglement in ERL mechanics is modelled by an epistemic state that describes perfect correlations between the pair of systems.

Such an epistemic state is easily generalized to the case of two copies of any system.  If the system has a $2n$-dimensional phase space, and the coordinates of each subsystem are paired into a set $\{q_{iA},p_{iA},q_{iB},p_{iB};i=1,\ldots,n\}$, then the analogue of the EPR state is
\begin{equation}
  \mu^{\text{corr}}_{AB}(\mathbf{z}_A,\mathbf{z}_B) =
 \prod_{i=1}^n \mu^{\text{corr}}_{AB}(q_{iA},p_{iA},q_{iB},p_{iB})\,.
 \label{eq:corr3}
\end{equation}
This epistemic state will be useful in our development of the formalism of ERL mechanics in Section~\ref{sec:Characterisation}.

\subsubsection{Variables whose values can be jointly known}

Given that linear symplectic transformations keep us within the space of valid epistemic states, one can have perfect knowledge of any set of variables that are the image of the set of canonical positions under a linear symplectic transformation.  To characterize these, recall that the Poisson bracket between two functions $f$ and $g$ of the canonical coordinates is
\be
\{f,g\}_{\rm PB}= \sum_i \left( \frac{\partial f}{\partial q_i} \frac{\partial g}{\partial p_i}-\frac{\partial f}{\partial p_i} \frac{\partial g}{\partial q_i} \right).
\ee
The set of canonical positions clearly all commute relative to the Poisson bracket, and no canonical momentum can be added to this set while maintaining commutativity. Furthermore, the Poisson bracket is preserved by symplectic transformations
 and therefore the sets of variables for which one can have perfect knowledge in ERL mechanics are precisely the sets of quadrature variables that commute relative to the Poisson bracket.  This is the analogue in ERL mechanics of the fact that in quantum theory one can jointly measure a set of observables if and only if they are commuting relative to the matrix commutator.

Whereas in quantum mechanics, commutation relative to the matrix commutator is a criterion for two observables to be jointly \emph{measurable}, in ERL mechanics, commutation relative to the Poisson bracket is a criterion for two variables to be jointly \emph{known}.

\subsubsection{The impossibility of concentrating uncertainty in a subsystem}

Consider the case of a composite system.  Although the epistemic restriction constrains what can be known about the ontic state of the whole system, it is not immediately obvious whether it also constrains what can be known about the ontic state of each subsystem.
For instance, given that it is possible to decrease one's uncertainty about one canonical variable by increasing it for its canonically conjugate partner, might it also be possible to decrease one's uncertainty about a pair of canonically conjugate variables by increasing it for a different pair of canonically conjugate variables?
As it turns out, this is not possible, as we now show.

Consider a system consisting of two subsystems $A$ and $B$. Let $\mu_{AB}$ be an epistemic state for the joint system that satisfies the epistemic restriction.  It is then straightforward to prove that the marginals of $\mu_{AB}$ on system $A$ and $B$, denoted $\mu_A$ and $\mu_B$ respectively, will also satisfy the epistemic restriction.  First, note that if $\gamma_{AB}, \gamma_A$ and $\gamma_B$ denote the covariance matrices of $\mu_{AB}$, $\mu_{A}$ and $\mu_{B}$ respectively, then
\begin{equation}
\gamma_{AB} =
\begin{pmatrix}
\gamma_A & X \\
X^{\dag } & \gamma_B
\end{pmatrix}
\,,  \label{gammablock}
\end{equation}
for some matrix $X$.  Note also that $\Sigma_{AB}$ has the form
\begin{equation}
\Sigma_{AB} =
\begin{pmatrix}
\Sigma_A & 0 \\
0 & \Sigma_B
\end{pmatrix}
\,,  \label{sigmablock}
\end{equation}
where $\Sigma_{AB}$, $\Sigma_A$ and $\Sigma_B$ are defined as in Eq.~\eqref{eq:SymplecticForm} for the phase spaces of the composite $AB$, the subsystem $A$ and the subsystem $B$ respectively.

From the fact that $\gamma_{AB} + i\lambdabar \Sigma_{AB} \geq 0$, we can infer that $\gamma_A + i\lambdabar \Sigma_A \geq 0$ and $\gamma_B + i\lambdabar \Sigma_B \geq 0$ using the following well-known result from linear algebra (\cite{HorneJohnson}, p.~472): for real matrices $a,b,c$ and $d$,
\begin{equation} \label{eq:partitionedmatrix}
\begin{pmatrix}
a & b \\
b^{T} & c
\end{pmatrix}
> 0 \quad \text{iff }\quad a > 0 \text{ and } c > b^{T}a^{-1}b \,.
\end{equation}
(This result will be used on many occasions in this article. Note that the conditions of positive definiteness can be replaced by conditions of positive semi-definiteness by continuity for the covariance matrices of Gaussian states that we consider here.)

Finally, because the marginal of a Gaussian distribution is also a Gaussian, if $\mu_{AB}$ satisfies the max-ent condition, then $\mu_A$ and $\mu_B$ do as well.

\begin{figure}
\begin{center}
\includegraphics[width=1\hsize]{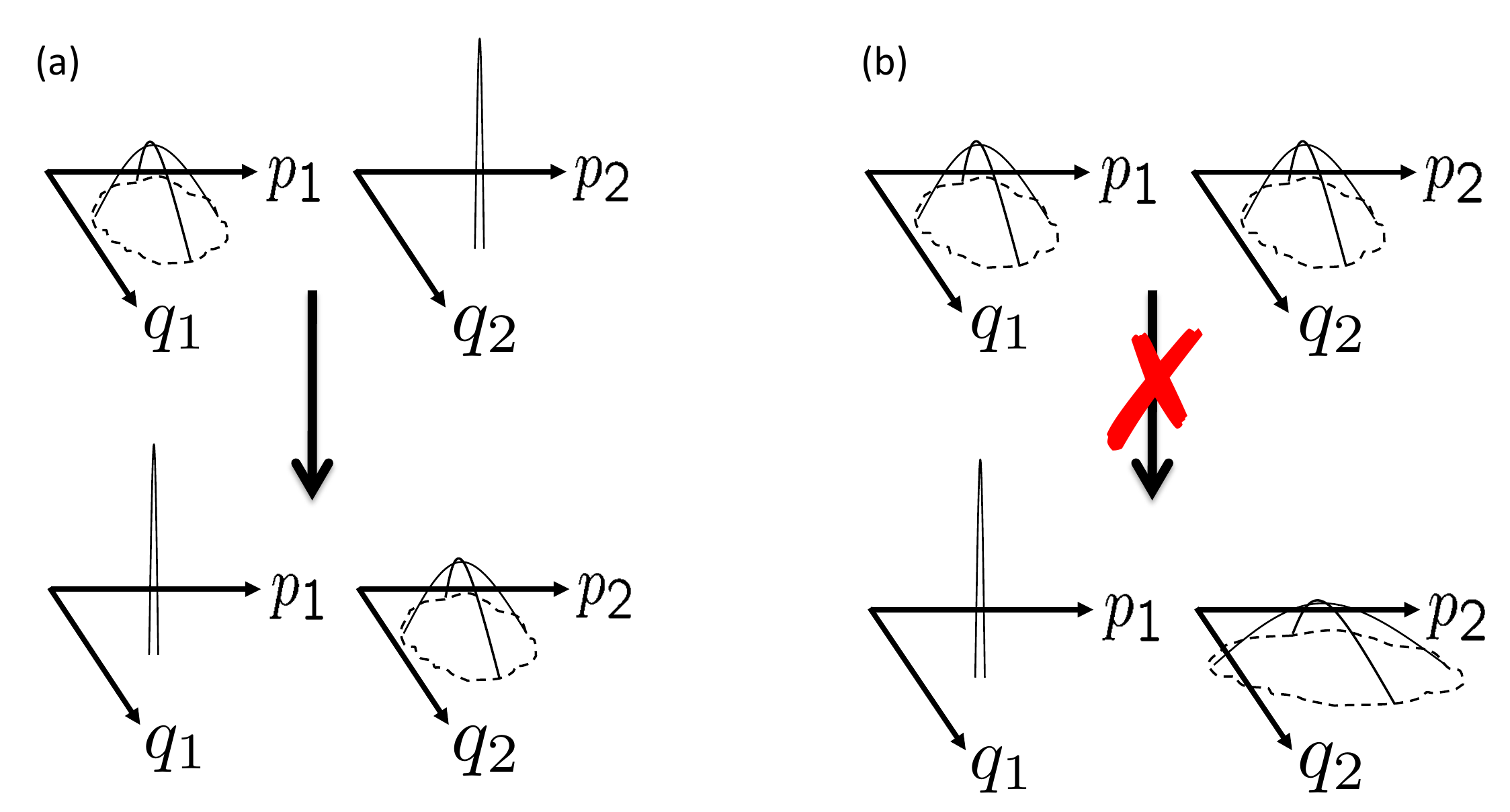}
\caption{A transformation that shuffles uncertainty from one subsystem to another, as illustrated in (a), is allowed within ERL mechanics.  However, a transformation as illustrated in (b) that concentrate uncertainty into one subsystem, such that one can have perfect knowledge of the other, is not allowed within the theory.}\label{fig:Gromov}
\end{center}
\end{figure}

This result can easily be generalized to any partition of the symplectic vector space of the whole system into symplectic subspaces, and therefore applies to the marginals on \emph{virtual} as well as physical subsystems.  For instance, one cannot achieve certainty about the canonically conjugate pair of variables $(q_A - q_B,p_A - p_B$) by concentrating one's uncertainty into the canonically conjugate pair $(q_A + q_B, p_A + p_B)$.
Note that this result is a special case, applicable only to symplectic vector spaces, of a general result due to Gromov~\cite{Gro85}.  For an analysis of the consequences of Gromov's theorem for uncertainty in Liouville mechanics, see Refs.~\cite{GossonTextbook,Hsiao07,Gosson09}.

To summarize, Liouville's theorem only predicts that one's uncertainty about an isolated system cannot be reduced by a symplectic transformation.  Therefore, it leaves open the possibility that this uncertainty can be partitioned amongst interacting subsystems in such a way that one is left with no uncertainty about one of them.  However, this possibility is precluded by the result just described, which may be considered as a strengthening of Liouville's theorem.  If an observer had access to even a \emph{single} system with a phase-space distribution that violated the CUP, for instance a Gaussian state that saturates the CUP for a value  $\lambdabar'<\lambdabar$, then other systems which initially satisfied the CUP could be made to violate it by transferring the certainty from the special system to the system of interest.  However, whatever is the minimum uncertainty for any subsystem initially, $\lambdabar'$ in our example, becomes a lower bound on the uncertainty for any subsystem finally.  Uncertainty can never be ``squeezed out'' of one subsystem and into another; see Figure~\ref{fig:Gromov}.
This result is quite reminiscent of Bohr's defense of the consistency of the uncertainty principle \cite{BohrReplytoEPR}, wherein he appealed to the unavoidable uncertainty in the initial position and momentum of the \emph{apparatus} to show that one could not reduce one's uncertainty about the position and momentum of the system\footnote{Bohr presented this defense of the uncertainty principle as part of his reply to the paper by Einstein, Podolsky and Rosen \cite{EPR35}. As we'll show in Sec.~\ref{sec:EPR}, ERL mechanics can reproduce the correlations in the original EPR thought experiment and indeed delivers the sort of interpretation of the correlations that EPR favoured, namely, one wherein position and momentum are jointly well-defined but not jointly known.  Even though Bohr sought to dispute this sort of interpretation in his reply, his description of the thought experiment makes explicit reference to the positions and momenta of the systems: ``In fact, even if we knew the position of the diaphragm relative to the space frame before the first measurement of its momentum, and even though its position after the last measurement can be accurately fixed, we lose, on account of the uncontrollable displacement of the diaphragm during each collision process with the test bodies, the knowledge of its position when the particle passed through the slit.''  Indeed, his argument for the consistency of the uncertainty principle makes no reference to the quantum formalism at all.  It reads better as an argument for the consistency of the uncertainty principle within ERL mechanics.  Nonetheless, Bohr denies the interpretation suggested by ERL mechanics: ``we have in each experimental arrangement suited for the study of proper quantum phenomena not merely to do with an ignorance of the value of certain physical quantities, but with the impossibility of defining these quantities in an unambiguous way.''  The only way we see to reconcile this tension in Bohr's reply is that Bohr believed that two quantities can be jointly \emph{well-defined} only if they can be jointly \emph{measured}.  In essence, Bohr was a radical positivist. Otherwise, why from the impossibility of two quantities being  jointly measured would he infer the impossibility of their being jointly well-defined, as opposed to merely inferring the impossibility of their being jointly known?}.

\subsubsection{Measurements of canonical variables}
\label{sec:SharpMmts}

We now consider what the epistemic restriction says about which measurements can be performed.  In particular, we consider what sorts of canonical variables can be jointly measured\footnote{Our development follows Ref.~\cite{SchreiberSpekunpublished}.}.


At first glance, it might seem that the epistemic restriction could constrain the variables that can be jointly measured because the latter specify how the epistemic state of the system ought to be updated as a result of the measurement.  However, specifying which variable is measured only specifies what one can \emph{retrodict} about the ontic state of the system prior to the measurement, and says nothing about what one can predict about the ontic state of the system after the measurement.  This is analogous to how, in quantum theory, the observable that is measured specifies what one can retrodict about the system prior to the measurement and says nothing about how the system updates.  The fact that physicists have focussed upon the von Neumann-L\"{u}ders rule (i.e., the projection postulate) might generate the mistaken impression that the observable being measured \emph{does} fix how the quantum state updates.  However, there are many other update rules consistent with a given observable. For instance, a ``measure-and-reprepare'' update rule is one wherein regardless of the observable being measured and regardless of the measurement's outcome, some fixed quantum state is prepared after the measurement.  Similarly, in the case of Liouville mechanics, there are many possible rules for updating the epistemic state for any given set of variables being measured, in particular, a rule that prepares a fixed epistemic state after the measurement. Because of the possibility of such an update rule, the epistemic state after the measurement can always be made to satisfy the epistemic constraint \emph{regardless} of the variables being measured.

Nonetheless, it is possible to constrain the canonical variables one can jointly measure by considering measurements on one element of a pair of perfectly correlated systems as described in Sec.~\ref{subsec:EPR1}.  By assumption, the ontic dynamics is classical.  Therefore, if a measurement is made on particle $A$ and nothing is done to particle $B$, the ontic state of $B$ will not change.  However, the epistemic state for $B$ may well change as a result of learning about the ontic state of $A$ (via the measurement's outcome) and knowing that $A$ and $B$ are perfectly correlated.

For instance, if one could jointly measure the values of position and momentum on system $A$, then, by virtue of the fact that $A$ and $B$ are perfectly correlated in position and perfectly anticorrelated in momentum, one could infer the values of position and momentum on system $B$.  But this would correspond to having an epistemic state on $B$ that violates the epistemic restriction, so a joint measurement of position and momentum must be ruled out.

More generally, measuring the values of a set of canonical variables on $A$ implies learning the values of the same variables (modulo an inversion of the momenta) on $B$.  We have already seen that the only set of canonical variables that can be jointly \emph{known} according to the epistemic restriction are those that commute relative to the Poisson bracket.  It is also clear that if a set of variables on $B$ is obtained by taking the momentum inversion of a set of variables on $A$, then the first set commutes relative to the Poisson bracket if and only if the second does.  We therefore conclude that the only set of canonical variables that can be jointly \emph{measured} according to the epistemic restriction are those that commute relative to the Poisson bracket.

We have here considered only measurements of canonical variables. No other variables (for instance, nonlinear combinations of canonical variables) can be measured in ERL mechanics, a claim that we will justify in Sec.~\ref{sec:Characterisation}.  Furthermore, we have here only considered measurements wherein the outcomes are determined uniquely by the phase-space point.  The latter are the analogues of projective measurements in quantum theory.  The more general kind of measurement, for which the outcome is only determined probabilistically by the phase-space point, is the analogue of a \emph{positive operator-valued measure} in quantum theory.  We also leave the characterization of these to Sec.~\ref{sec:Characterisation}.

\subsubsection{Transformations induced by measurements of canonical variables}

Finally, we must consider what the epistemic restriction says about how the epistemic state associated with a system is updated when that system is subjected to a measurement.  More precisely, given the information that the measurement has revealed some outcome, and given the epistemic state describing one's knowledge of what the ontic state of the system was at time $t$, prior to the measurement, what is the epistemic state describing one's knowledge of what the ontic state of the system is at time $t'$, after the measurement?\footnote{We have been careful in our description of the problem so as not to confuse two times in the problem: the time at which the agent assigns a given epistemic state and the time to which the agent's knowledge pertains.}

Consider the case of a quadrature measurement that is perfectly reproducible, in the sense that if the measurement is repeated in a sequence, with vanishing time between the measurements (and hence trivial evolution), the same outcome is always found.  Such measurements are perfectly consistent with the epistemic restriction, and thus are allowed in ERL mechanics.  What we will show now is that a measurement of one quadrature necessarily induces a completely unknown shift in the canonically conjugate quadrature.  That is, we will show the necessity of a \emph{disturbance} on the system as a result of measurement.

For simplicity, let this reproducible measurement be a measurement of position on a single system.  Suppose that one has perfect knowledge of the momentum of the system at time $t$, prior to the measurement.  We start by showing that if there were no disturbance to the momentum of the system as a result of the measurement of position, then after the measurement was complete, one would know both the position and the momentum of the system.  Note first that because the measurement is assumed to be reproducible, the distribution over position at time $t'$, after the measurement, must be a delta-function centered at the position revealed by the measurement.  Otherwise, there would be some probability of finding a different position upon repeating the measurement, contrary to the hypothesis of reproducibility.  Consequently, the final (i.e., post-measurement) position of the system is known based on the outcome of the reproducible measurement.  But note also that under the hypothesis that the momentum is not disturbed by the position measurement, one would also know the final momentum of the system based on one's knowledge of its initial momentum and the knowledge that it hasn't changed. Thus, no disturbance would imply the possibility of simultaneous knowledge of position and momentum and given that such knowledge is forbidden by the epistemic constraint it follows that if a measurement of position is to be possible, it cannot leave the momentum undisturbed.

Indeed, because the final position is perfectly known, the epistemic constraint dictates that the final momentum must be completely unknown.  But given the assumption that the \emph{initial} momentum is perfectly known, it follows that the position measurement must lead to a shift in momentum that is drawn uniformly at random from among all possible shifts.  The same argument could be run for the measurement of any quadrature, and so we have reached our desired conclusion: the only way to maintain the epistemic constraint is if a measurement of one quadrature necessarily induces a completely unknown shift in the value of the canonically conjugate quadrature\footnote{This feature of ERL mechanics is reminiscent of the language of ``uncontrollable disturbances'' used by Heisenberg and Bohr in their descriptions of quantum measurements.}.

We have shown that if a measurement of a quadrature variable $q_{\theta}$ is performed in a reproducible manner, then the final epistemic state is the one wherein one has perfect knowledge of $q_{\theta}$ and no knowledge of $q_{\theta+\pi}$. This is analogous to how, in quantum theory, if a measurement of the quadrature operator $\hat{q}_{\theta}$ is performed in a reproducible manner (i.e., the state updates according to the projection postulate), then the final quantum state is an eigenstate of $\hat{q}_{\theta}$.  ERL mechanics provides a simple picture of the projection postulate applied to quadrature observables.  In this view, the collapse describes the change in an agent's knowledge of a system.  This change is not merely a Bayesian updating based on acquiring knowledge of the value of a quadrature, but a combination of such a Bayesian updating followed by a uniform probabilistic mixture of shifts in the canonically conjugate quadrature.

This unknown disturbance also explains how ERL mechanics reproduces the noncommutativity of measurements of canonically conjugate quadratures, that is, the fact that the statistics of outcomes of consecutive reproducible measurements of position and momentum depends on the order of the measurements.  Consider the quantum case first. Suppose the initial state is a position eigenstate. If the position measurement comes first, it has a deterministic outcome, while if it comes second, then it has a probabilistic outcome because the intervening momentum measurement collapses the quantum state to a momentum eigenstate.  Analogously, in ERL mechanics if the initial epistemic state is one wherein position is known perfectly, then if a position measurement comes first, it has a deterministic outcome, while if it comes second, then it has a probabilistic outcome because the intervening momentum measurement randomizes the position of the system.

\subsubsection{Modeling measurements with deterministic dynamics: the motility of the cut}
\label{sec:motilityofthecut}

The existence of an unknown disturbance might suggest that ERL mechanics presumes an underlying dynamics that is objectively stochastic.  This is not the case.
To see that the unknown disturbance is consistent with deterministic dynamics, it is sufficient to consider the measurement not as an external intervention but as a dynamical process (as was done for quantum mechanics by von Neumann~\cite{JvNtext}).  To do so, we imagine that the measurement couples the system to a probe through an interaction Hamiltonian $H= \chi q_A p_B$, where $q_A$ is the position of the system and $p_B$ is the momentum of the probe and $\chi$ is the interaction strength.  We imagine that the free Hamiltonian is negligible compared to the interaction Hamiltonian for the duration of the measurement. Recalling that momentum is the generator of translations, it follows that this interaction Hamiltonian causes the probe particle to be shifted by an amount that is proportional to the initial position of the system.  Thus, by measuring the shift in the position of the probe, one can infer the position of the system.  However, in order to be able to infer the exact value of the shift in position of the probe, it is necessary that the initial position of the probe be perfectly known.  Consequently, for the apparatus to achieve a measurement of position, it is necessary that the epistemic state describing the initial ontic state of the probe particle be one wherein there is perfect knowledge of position and complete uncertainty about momentum.

Now note that because position is the generator of shifts in momentum, the interaction Hamiltonian $H= \chi q_A p_B$ also causes the system to have its momentum shifted by an amount that is proportional to the initial momentum of the probe.  Effectively, while the probe acquires information about the position of the system, the system acquires information about the momentum of the probe.  This is an instance of the action-reaction principle of classical mechanics.  Given that the initial momentum of the probe is completely unknown (as highlighted above), it follows that the system suffers a shift in its momentum which is also completely unknown.

We conclude that the momentum disturbance in a position measurement is not a result of underlying objective stochasticity.  Rather, the final momentum of the system is uniquely determined by the initial momentum of the probe, but by virtue of the complete uncertainty about the latter, we are left with complete uncertainty about the former.  Effectively, under deterministic dynamics, our knowledge of the system's momentum is infected by our uncertainty about the probe's.

The other fact that is highlighted by this analysis is that the predictions of ERL mechanics are insensitive to the position of the `cut' between what is treated internally to the theory and what is treated externally.  This is the analogue of von Neumann's demonstration of the `motility of the cut' in quantum theory~\cite{JvNtext}.

\subsection{Some quantum phenomena reproduced in ERL mechanics}
\label{sec:phenomena}

We have seen how to understand some basic quantum phenomena by the lights of ERL mechanics, for instance, the collapse of the wavefunction and noncommutativity of conjugate measurements.  We proceed to consider a few more examples.  Note that when we say that we have reproduced a quantum phenomena, we are not claiming that we are necessarily reproducing all quantitative
predictions of quantum theory related to that phenomenon.  Rather, we are claiming that we are reproducing precisely those aspects of the phenomenon that have been hitherto considered to rule out any explanation of the phenomenon in terms of a classical worldview.

\subsubsection{The EPR thought experiment}
\label{sec:EPR}

Consider the thought experiment proposed by Einstein, Podolsky and Rosen~\cite{EPR35}.  A pair of particles, denoted $A$ and $B$, are prepared in a quantum state such that they are correlated in their position along some axis $\hat{x}$. The correlation is described by the EPR entangled state $|\Psi^{\textrm{corr}}\rangle = \int \textrm{d}q_A\,\textrm{d}q_B\,\delta(q_A-q_B)|q_A\rangle|q_B\rangle$.  The pair of particles are distributed to two points that are spatially separated (along an axis orthogonal to $\hat{x}$).
If a measurement of position is implemented on $A$, then the quantum formalism states that one can immediately predict with certainty what would be the outcome of a measurement of position on particle $B$.  Similarly, if a measurement of momentum is implemented on particle $A$, then one can predict with certainty what would be the outcome of a measurement of momentum on particle $B$.

EPR point out that if the wavefunction is taken to be a complete description of reality, then a free choice made in one region of space will instantaneously effect the ontic state in another distant region of space.  Specifically, the free choice of measurement at particle $A$ determines whether the wavefunction of particle $B$ becomes an eigenstate of position or an eigenstate of momentum, and given that these describe different ontic states under the assumption that the wavefunction is a complete description of reality, it follows that the free choice effects the distant reality.  EPR took this to be in conflict with the principle of relativity.  On the other hand, EPR argued, if the wavefunction merely described one's knowledge of an underlying reality, then the experiment needn't be in conflict with relativity.  As an observer learns the outcome of the measurement on particle $A$,  they merely update their knowledge of the ontic state of particle $B$. ERL mechanics is precisely the sort of hidden variable model that allows the EPR experiment to be explained in this sort of way, as we now demonstrate.

ERL mechanics models the EPR entangled state by a distribution over the two-particle phase space of the form $\mu^{\text{corr}}_{AB}(q_A,p_A,q_B,p_B)\propto \delta(q_A - q_B)\delta(p_A+p_B)$, describing perfect correlation of the particles' positions and perfect anti-correlation of the particles' momenta.  This epistemic state was highlighted in Sec.~\ref{subsec:EPR1}.  The marginal on the phase space of either particle is the completely uniform distribution.  Consequently, prior to learning the outcome of the measurement on particle $A$, an observer knows nothing of the position or the momentum of particle $B$.  If she measures the position of particle $A$, then by virtue of knowing that the positions of the two particles are initially perfectly correlated, she can infer the position of particle $B$. On the other hand, if she measures the \emph{momentum} of particle $A$, then by virtue of knowing that the momenta of the two particles are initially perfectly anti-correlated, she can infer the momentum of particle $B$. In both cases, particle $B$ has some definite position and momentum all along that does not change in any way as a consequence of her measurement on particle $A$.  All that changes as a result of this measurement is how the observer refines her knowledge of the ontic state of particle $B$.  She either refines her knowledge of its position or she refines her knowledge of its momentum.  No ``spooky action at a distance'' is required to understand the EPR experiment if one adopts the interpretation offered by ERL mechanics.

We emphasize that we are not arguing that a $\psi$-epistemic local hidden variable model could explain \emph{all} quantum correlations, only that the particular correlations described in the EPR experiment can be so explained (in precisely the way that EPR suggested they should). This is not at odds with Bell's theorem because the correlations in the EPR experiment do not violate a Bell inequality.  Of course, because it is locally causal by construction, ERL mechanics cannot hope to reproduce Bell-inequality violations.  Such violations are one of the quantum phenomena that ERL mechanics emphatically \emph{cannot} reproduce, not even qualitatively.
The fact that it is possible to find a local hidden variable model for the original EPR set-up with measurements restricted to quadrature observables, is well-known~\cite{BellEPR,OuCVBell,Reid09}.

\subsubsection{The no-cloning theorem}


Imagine one is given a system prepared in an unknown quantum state $|\psi\rangle$.  A cloning process is one which adjoins to the system an ancilla in a fiducial state $|\chi\rangle$ and implements the map $|\psi\rangle |\chi\rangle \to |\psi\rangle |\psi\rangle $ for all $|\psi\rangle \in \mathcal{H}$.  Given that $|\psi\rangle$ is unknown, the map cannot depend on $|\psi\rangle$.  No such process exists in quantum theory\cite{NoCloning}.  In fact, one cannot even clone a pair of non-orthogonal states.  That is, if the system is prepared in a quantum state drawn from the set $\{ |\psi_1\rangle,|\psi_2\rangle \}$, where $0< |\langle \psi_1|\psi_2 \rangle| < 1$, then no process can implement the map
\be
|\psi_k\rangle |\chi\rangle \to |\psi_k\rangle |\psi_k\rangle \textrm{ for } k\in\{1,2\}.
\ee
The proof is as follows.  Unitary dynamics preserves inner products, so if a process is to be implemented by a unitary, it must preserve inner products.  In the cloning process, the magnitude of the inner product between the two possible initial states is
\be
|(\langle \psi_1 |\langle \chi|)(|\psi_2\rangle |\chi\rangle)|=|\langle \psi_1 |\psi_2\rangle|
\ee
while the inner product between the two possible final states is
\be
|(\langle \psi_1 |\langle \psi_1|)(|\psi_2\rangle |\psi_2\rangle)|=|\langle \psi_1 |\psi_2\rangle|^2.
\ee
Thus, the magnitude of the inner product is preserved only if $|\langle \psi_1 |\psi_2\rangle|=0$ or $1$, which implies that the two states are orthogonal or collinear.  Irreversible quantum operations will not help because they necessarily lead to an \emph{increase} in the quantum fidelity, while a cloning process requires a decrease in this fidelity.

If one takes an ontic view of quantum states, then given that classically the ontic state of a system can always be measured and then copied, it would appear that no-cloning is a nonclassical phenomenon. By contrast, if one adopts an epistemic view of quantum states, then the cloning process is properly understood as a process which clones the applicability of a given state of knowledge and it is seen to occur even classically.  We present the analogue of the no-cloning of two non-orthogonal quantum states.  Suppose that we are told that a system has been prepared by sampling its ontic state $\mathbf{z}$ from the distribution $\mu_1(\mathbf{z})$ or from the distribution $\mu_2(\mathbf{z})$.  Suppose that $\mu_1(\mathbf{z})$ and $\mu_2(\mathbf{z})$ are nondisjoint, that is, they overlap in some part of the phase space, $\mu_{1}(\mathbf{z})\mu_{2}(\mathbf{z}) \ne 0$ for some $\mathbf{z}$.  A cloning process is one which adjoins to the system an ancilla (with the same ontic state space as the system) prepared in a fiducial epistemic state $\nu(\mathbf{z}')$ and transforms the ontic state in such a way that the following map over epistemic states is induced:
\be
\mu_k(\mathbf{z})\nu(\mathbf{z}') \to \mu_k(\mathbf{z})\mu_k(\mathbf{z}') \textrm{ for } k\in\{1,2\}.
\ee
To see that this cloning process is impossible, we first define the \emph{classical fidelity} between distributions $\mu_1(\mathbf{z})$ and $\mu_2(\mathbf{z})$ as $\int \textrm{d}\mathbf{z} \sqrt{\mu_1(\mathbf{z})}\sqrt{\mu_2(\mathbf{z})}$; it is 0 if the distributions are disjoint and 1 if they are identical (the analogy to the magnitude of the inner product between quantum states should be clear).  It then suffices to note that a pair of distributions can encode a bit of information and that the classical fidelity between the distributions is a measure of their indistinguishability.  Given that the amount of information in an encoding cannot be increased by processing (this is the content of the data processing inequality), it should not be possible to increase their distinguishability, i.e. it should not be possible to decrease the fidelity by any processing.  At best, one can preserve it.  Let us consider what this constraint implies.
The classical fidelity between the two possible initial states is
\begin{multline}
  \int d\mathbf{z}\,d\mathbf{z}'\,\sqrt{\mu_{1}(\mathbf{z})\nu(\mathbf{z}')}
  \sqrt{\mu_{2}(\mathbf{z})\nu(\mathbf{z}')} \\
  =\int d\mathbf{z}\,\sqrt{\mu_{1}(\mathbf{z})}\sqrt{\mu_{2}(\mathbf{z})}\,,
\end{multline}
where we have used the fact that $\int d\mathbf{z}'\,\nu(\mathbf{z}')=1$, whereas between the two possible
final epistemic states it is
\begin{multline}
  \int d\mathbf{z}\,d\mathbf{z}'\,\sqrt{\mu_{1}(\mathbf{z})\mu_{1}(\mathbf{z}')}\sqrt{\mu_{2}(\mathbf{z})\mu_{2}(\mathbf{z}')}  \\
  =\left( \int d\mathbf{z}\,\sqrt{\mu_{1}(\mathbf{z})}\sqrt{\mu_{2}(\mathbf{z})}\right)^{2}\,.
\end{multline}
For the classical fidelity to be preserved, we require
\begin{equation}
  \int d\mathbf{z}\,\sqrt{\mu_{1}(\mathbf{z})}\sqrt{\mu_{2}(\mathbf{z})}
  =\left( \int d\mathbf{z}\,\sqrt{\mu_{1}(\mathbf{z})}\sqrt{\mu_{2}(\mathbf{z})}\right)^{2},
\end{equation}
which implies that $\int d\mathbf{z}\,\sqrt{\mu_{1}(\mathbf{z})}\sqrt{\mu_{2}
(\mathbf{z})}=0$ or $1$, or equivalently, that $\mu_{1}(\mathbf{z})\mu_{2}
(\mathbf{z})=0$ or $\mu_{1}(\mathbf{z})=\mu_{2}(\mathbf{z})$.  Thus, a pair of
epistemic states can be cloned if and only they are
disjoint or identical.

Note that the proof proceeds in direct analogy with the quantum proof, where
the role of orthogonality and Hilbert space inner
product are played by disjointness and classical
fidelity respectively. That there is a no-cloning theorem for non-disjoint probability distributions has also been noted in Refs.~\cite{DPP02,CF96,Har99,Spe07}  \footnote{A similar point can be made about the phenomenon of quantum chaos. Many researchers have been puzzled by the apparent differences between the classical and quantum signatures of chaos. While classical states of an isolated system can exponentially diverge under Hamiltonian chaotic
evolution, quantum states of an isolated system cannot because the inner
product between two quantum states is invariant under unitary evolution.
However, the analogy between quantum states and Liouville distributions
suggests that the quantum inner product should not be compared with the
distance in phase space but rather with the overlap of the Liouville
densities. One can then reconcile the signatures of classical and quantum
chaos \cite{Bal94, Eme01, EmersonBallentine01}.}.


We have yet to specify in which sense the epistemic restriction is necessary to properly model the quantum no-cloning theorem. We have seen that by simply defining cloning in terms of epistemic states rather than ontic states, one finds that certain pairs of epistemic states --nondisjoint ones-- cannot be cloned.  However, in Liouville mechanics (without the epistemic restriction) only \emph{mixed} epistemic states can be nondisjoint; the pure epistemic states are point distributions over the phase space and \emph{can} be cloned.  On the other hand, in ERL mechanics, the pure epistemic states (defined as those that are extremal in the convex set of epistemic states) are themselves states of incomplete knowledge and can be nondisjoint. It follows that only in ERL mechanics does one have an analogue of no-cloning for sets of pure quantum states.

There is one other sense in which the epistemic restriction is important for emulating all the limitations on cloning that are seen in quantum theory. Although in quantum theory it is impossible to have pure states as marginals without the state of the composite being a product state, one might wonder whether, in the context of ERL mechanics we could achieve a joint distribution over the composite system that has $\mu_{k}(\mathbf{z})$ as the marginal distribution for both subsystems but with the possibility of correlations between the systems.  We did not consider this possibility above, where we required the final distribution to be a product distribution.  Such a process would be a classical \emph{broadcasting} map.  As it turns out, the data processing inequality does not exclude this possibility.  Indeed, if it were not for the epistemic restriction, such a map could be realized.  One could measure the ontic state $\mathbf{z}$ of the system, prepare the target in the same ontic state, then forget the outcome of the measurement.  But such a measurement would violate the epistemic restriction, so this strategy will not work. Alternatively, one could simply implement the deterministic dynamics $(\mathbf{z},\mathbf{z'}) \to (\mathbf{z},\mathbf{z})$ on the pair of systems. This would achieve broadcasting regardless of the value of $\mathbf{z'}$, but it is not allowed because it does not preserve the Poisson bracket and hence is not Hamiltonian. Finally, there are Hamiltonian maps that can implement broadcasting for one particular value of $\mathbf{z'}$, but given the epistemic restriction, one cannot have such knowledge of $\mathbf{z'}$.

\subsubsection{Teleportation}
\label{sec:Teleportation}

We begin by providing the quantum description of teleportation for continuous variable systems.  The scenario is similar to that of the EPR experiment.  A pair of particles are prepared in the EPR entangled state $|\Psi^{\textrm{corr}}\rangle = \int \textrm{d}q_A\,\textrm{d}q_B\,\delta(q_A-q_B)|q_A\rangle|q_B\rangle$, that is, correlated in their position along the $\hat{x}$-axis, and distributed to Alice and Bob, who are presumed to be spatially separated (along an axis orthogonal to $\hat{x}$).  We assume trivial dynamics so that we can neglect dispersion over time.  A third party, Victor, prepares another particle, denoted $V$, in the quantum state $|\psi\rangle,$ (again, describing the position of the particle along the $\hat{x}$-axis) and passes it to Alice.  The identity of particle $V$'s quantum state is unknown to Alice and Bob. Their task is to implement a protocol that leaves particle $B$ in the quantum state $\left| \psi \right\rangle$.  The initial quantum state of the triple of particles is $|\psi\rangle |\Psi^{\textrm{corr}}\rangle$.
This initial state can be rewritten (preserving the order of the Hilbert spaces) as
\begin{equation}
  \frac{1}{2\pi\hbar} \int \textrm{d}a\,\textrm{d}b\,(D_{a,b} \otimes I)|\Psi^{\textrm{corr}}\rangle D^{\dag}_{a,b}|\psi\rangle\,,
\end{equation}
where $D_{a,b}=\exp(-\frac{i}{\hbar}(a\hat{p}-b\hat{q}))$ is the unitary operator that corresponds to a displacement in phase space by the vector $(a,b)$.

Note that the state of particles $V$ and $A$ appearing in the $a,b$ term in this decomposition is simply the joint eigenstate of the commuting pair of operators $\hat{q}_V-\hat{q}_A$ and $\hat{p}_V+\hat{p}_A$ associated with eigenvalues $a$ and $b$.  Consequently, if Alice measures $\hat{q}_V-\hat{q}_A$ and $\hat{p}_V+\hat{p}_A$ on the pair of particles in her possession, and obtains outcomes $a$ and $b$ respectively, then (assuming the projection postulate as the collapse rule) the total quantum state is updated to just one of the terms in the integrand.  Alice's two particles have been left in a maximally entangled state (a local phase space displacement of the EPR state), and Bob's particle has been left in the state $D^{\dag}_{a,b} |\psi \rangle$.  Therefore, to complete the protocol, Alice need only communicate $a,b$ to Bob, who then applies the unitary $D_{a,b}$ to his particle and leaves it in the state $\left| \psi \right\rangle$, as required.  The protocol succeeds regardless of the identity of $\left| \psi \right\rangle$, so Alice and Bob need not know its identity. Finally, note that if particle $V$ is entangled with another particle, say particle $C$, then the quantum state of the composite of particles $V$ and $C$ is transferred to the composite of particles $B$ and $C$, a phenomenon known as \emph{entanglement swapping}.

What is surprising about continuous variable teleportation, if one takes the view that quantum states are ontic, is that while it takes an infinite number of complex parameters to completely specify the quantum state, this state can be transferred from Alice to Bob by communicating only two real numbers.  Even if we restrict the unknown quantum state to be a Gaussian state, we still require five parameters to describe it (specifically, two for specifying the mean position and mean momentum and three for specifying the covariance matrix) but only two to transfer it.  On the other hand, if one takes the view that quantum states are epistemic, then teleportation is a protocol wherein someone's \emph{knowledge} about a system becomes applicable to another system and, as we shall see, such a transfer can be achieved with only two real parameters.

If Alice could jointly measure the position \emph{and} momentum of particle $V$, she could simply communicate this information to Bob who could then prepare particle $B$ with that precise position and momentum (this is essentially how teleportation is imagined to occur on Star Trek).  In this way, whatever Victor knew about particle $V$ would now pertain to particle $B$.   However, the epistemic constraint forbids such a joint measurement.  The magic of the teleportation protocol, by the lights of ERL mechanics, is that it provides a way of transferring the applicability of Victor's knowledge in spite of the epistemic constraint.

Teleportation of Gaussian states can be modeled in ERL mechanics as follows\footnote{The discussion provided here closely parallels the one provided in Ref.~\cite{Spe07}; see also~\cite{Cav04}.}.  The pair of particles shared by Alice and Bob are prepared in the epistemic state $\mu^{\text{corr}}_{AB}(q_A,p_A,q_B,p_B)\propto \delta(q_A-q_B)\delta(p_A+p_B)$ (the model of the EPR state), which corresponds to knowing the relative position of the two particles to be $q_B-q_A = 0$ and the total momentum to be $p_B+p_A =0$, which is to say that they are known to have the same position and opposite momenta.
Alice makes a measurement on particles $V$ and $A$, both in her possession.  Specifically, she measures the relative position, $q_V-q_A$, and the sum of their momenta, $p_V+p_A$.  (This is allowed by the epistemic restriction because these variables have commuting Poisson bracket).   Combining this new data with her previous knowledge, Alice can infer what the relative position $q_V-q_B$ and relative momentum $p_V-p_B$ of particles $V$ and $B$ were prior to the measurement (because this is an inference based on pre and post selection of the triple of particles, the epistemic constraint need not apply, as discussed in Sec.~\ref{sec:EpistemicRestriction}).  Specifically, if Alice finds through her measurement that $q_V-q_A= a$ and $p_V+p_A=b$, then she infers that $q_V- q_B = (q_V-q_A) - (q_A-q_B) =a$ and that $p_V- p_B = (p_V + p_A) - (p_A+ p_B) =b$.
Given that the measurement is implemented on particles $V$ and $A$, it will not disturb the ontic state of particle $B$, so that the ontic state of particle $B$ after the measurement is precisely what it was prior to the measurement, namely, $q_B = q_V - a$ and $p_B = p_V - b$.  So Alice simply tells Bob to shift the position of particle $B$ by $a$ and its momentum by $b$, so that it will come to have the same position and momentum as particle $V$ had before the measurement.  In this way, whatever Victor knew about the ontic state of particle $V$ prior to the measurement now pertains to the ontic state of particle $B$ after the measurement.  Meanwhile, because particles $V$ and $A$ have undergone a measurement, there is an unknown disturbance to these, and consequently Victor's knowledge is no longer applicable to particle $V$ (which is why teleportation is not in conflict with no-cloning).  Had Victor initially known particle $V$ to have a particular correlation with particle $C$, then at the end of the protocol, he would judge particle $B$ to have this correlation with particle $C$, and so we also have a model of entanglement swapping for Gaussian states.  A formalized presentation of this entanglement swapping relation is given as Lemma~\ref{lemma:swap} in Sec.~\ref{sec:Characterisation}.

The reason that Alice can get away with communicating only two real parameters to Bob is that this amount of communication is sufficient (in the context of the teleportation protocol) for Bob to be able to prepare his particle in the ontic state that initially described the particle supplied by Victor. Once this is done, whatever knowledge Victor had of his particle's original ontic state, it now applies to Bob's particle, \emph{regardless of how many parameters are required to specify Victor's state of knowledge}.  Note furthermore that the transfer of the applicability of Victor's state of knowledge does not, strictly speaking, require \emph{any} communication from Alice to Bob. Suppose Alice only sends the outcome of her measurement to Victor, and not to Bob, so that Bob never does any correction operation on his particle.  Then, in the special case where Alice's measurement finds particles $V$ and $A$ to have had the same position and momentum, Victor can still conclude that whatever knowledge he initially had of his particle now pertains to Bob's particle.

\section{Operational Equivalence of ERL mechanics and Gaussian quantum mechanics}
\label{sec:Equivalence}

Having described some of the basic features of ERL mechanics, we will proceed to provide a complete operational formulation of the theory in Sec.~\ref{sec:Characterisation}.  We will then provide an operational formulation of a subtheory of quantum mechanics which we call \emph{Gaussian quantum mechanics} in Sec.~\ref{sec:GaussianQM}.  Finally, in Sec.~\ref{sec:proofmain}, we prove the main result of this article:
\begin{theorem}[Equivalence]
\label{thm:Gaussian}
Gaussian quantum mechanics is operationally equivalent to ERL mechanics with $\lambdabar=\hbar$.
\end{theorem}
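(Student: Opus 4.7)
The plan is to establish a three-part correspondence between the elementary ingredients of ERL mechanics (preparations, transformations, measurements) and those of Gaussian quantum mechanics, and then check that operational probabilities agree. The natural bridge in both directions is the Wigner representation. For a Gaussian quantum state, the Wigner function is a genuine Gaussian probability distribution on phase space whose covariance matrix satisfies $\gamma + i\hbar \Sigma \geq 0$. With $\lambdabar = \hbar$, the CUP combined with the max-ent principle isolates precisely the Gaussians with the same covariance constraint. This yields the preparation bijection immediately; in particular, the pure quantum Gaussian states map to the CUP-saturating extreme points of $L_{\rm valid}(\mathcal{M})$.

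For transformations I would treat the reversible and general cases separately. The Gaussian unitaries are those generated by Hamiltonians at most quadratic in $\hat{q}, \hat{p}$; on Wigner functions these act as linear symplectic transformations, matching exactly the reversible operations identified in Sec.~\ref{sec:Reversible}. For general Gaussian CP maps, I would invoke a Choi--Jamiolkowski-like isomorphism on both sides: a Gaussian CP map on system $A$ is in one-to-one correspondence with a Gaussian state on $A' \otimes A$ via its action on half of an EPR state, while in ERL the analogous role is played by the correlated epistemic state $\mu^{\text{corr}}_{AB}$ of Sec.~\ref{subsec:EPR1}, as captured by the entanglement-swapping lemma foreshadowed in Sec.~\ref{sec:Teleportation}. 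Provided the Wigner correspondence maps the EPR state to $\mu^{\text{corr}}_{AB}$ (which it does, as both are delta-correlated along $q_A - q_B$ and $p_A + p_B$), the state-level bijection automatically lifts to a transformation-level bijection.

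The measurement correspondence then follows analogously. On the quantum side, every Gaussian POVM element has a non-negative Gaussian Wigner function $W_{E_m}$, and the Born-rule statistic $\mathrm{Tr}(\rho E_m)$ equals the phase-space overlap $\int W_\rho(\mathbf{z}) W_{E_m}(\mathbf{z}) \, d\mathbf{z}$. On the ERL side, the allowed measurements -- the sharp jointly-knowable quadrature measurements of Sec.~\ref{sec:SharpMmts} together with their noisy generalizations obtained by coupling to a Gaussian ancilla via a linear symplectic transformation and then measuring and discarding the ancilla -- have Gaussian response functions $\xi_m(\mathbf{z})$ with an analogous covariance condition, and outcome probabilities $\int \mu(\mathbf{z}) \xi_m(\mathbf{z}) \, d\mathbf{z}$. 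Identifying $\xi_m$ with $W_{E_m}$ and $\mu$ with $W_\rho$ then reproduces the Born rule.

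The hard part will be checking that the set of operations admitted by ERL mechanics is neither strictly larger nor strictly smaller than the set of Gaussian CP maps and POVMs. The ``no more'' direction requires showing that every composition of linear symplectic couplings with Gaussian ancillae and coarse-grained quadrature readouts produces a map whose Wigner representation is a bona fide Gaussian CP map (i.e., one does not sneak in a non-positive operation). The ``no fewer'' direction requires the converse: that every Gaussian CP map on the quantum side can be realized by such an ERL protocol. Both rely on Stinespring-/Naimark-type dilations specialized to the Gaussian subtheory, together with the Sec.~\ref{sec:Reversible} fact that linear symplectic maps are the only reversible dynamics that preserve the max-ent structure. Once this closure matching is in hand, the operational equivalence reduces to verifying, in each of the three categories, the state/map/effect identifications already sketched above.
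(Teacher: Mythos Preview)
Your proposal is correct and follows essentially the same route as the paper: the Wigner representation serves as the bridge, the state-level bijection comes from the covariance-matrix identity $\gamma(\rho)=\gamma(W_\rho)$ together with Eq.~\eqref{eq:CanonicalUncertaintyRelations}, the measurement correspondence comes from normalizing POVM elements to density operators, and the transformation correspondence is handled via the Choi--Jamiolkowski isomorphism on both sides (Proposition~\ref{lem:validtransfer} in ERL and the standard one in quantum mechanics), with the dilation results you invoke appearing as Lemma~\ref{lemma:irreversible}. The paper organizes the ``no more / no fewer'' matching by first proving that CVP, ontic supervenience, and the CJ-isomorphism condition are all equivalent characterizations of valid ERL transformations (Lemmas~\ref{lem:CVPIso}--6), which streamlines the closure argument you outline via Stinespring/Naimark, but the substance is the same.
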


A few definitions are required to make sense of this result.  An \emph{operational formulation} of a theory is one which only specifies what are the possible preparations, transformations and
measurements according to the theory, as well as a rule for computing the probability of the outcome of every measurement when performed on a given preparation followed by a given transformation.  An operational formulation of a theory needn't make any reference to ontological structure.
Two theories that are formulated operationally are said to be \emph{operationally equivalent}
 if there is a one-to-one mapping between the preparations, measurements and transformations that are allowed in the first theory and those that are allowed in the second, and if the statistics predicted for every possible experiment in the first theory are precisely the same as those predicted for the corresponding experiment in the second theory.  Finally, a subtheory of an operational theory is what one obtains by allowing only a \emph{subset} of the preparations, transformations and measurements that are allowed in the parent theory. \emph{Gaussian quantum mechanics} is the subtheory of quantum mechanics wherein the allowed preparations, measurements and transformations are those for which the associated Wigner representations are Gaussian functions.

\subsection{Operational formulation of ERL mechanics}
\label{sec:Characterisation}

The most general preparation in ERL mechanics is represented by a phase-space distribution.  We have already specified, in Sec.~\ref{sec:ERLTheory}, which distributions satisfy the epistemic restriction.  They are denoted $\mu \in L_{\text{valid}}(\mathcal{M})$ on a phase space $\mathcal{M}$.  Consequently, we have already specified the set of possible preparation procedures.  It therefore suffices to characterize the set of possible transformations and measurements.


\subsubsection{General measurements}
\label{sec:indicatorfns}

In Sec.~\ref{sec:SharpMmts}, we described which canonical variables could be measured jointly on a system.
However, Liouville mechanics admits a more general form of measurement wherein the ontic state does not determine the outcome deterministically but only fixes the relative probabilities of various outcomes.  This occurs whenever the outcome of the measurement depends on other degrees of freedom besides the system of interest and the states of these are not completely known. For example, consider a system consisting of a single canonical degree of freedom. If it interacts with several ancillas via a quadratic Hamiltonian and measurements of quadrature variables are implemented upon the ancillas, the resulting measurement on the system will not in general yield full information about a singe quadrature, but rather will yield partial information about each of a pair of canonically conjugate quadratures.  As another example, if the system interacts with an auxiliary system that is subsequently ignored, the effective measurement on the system is not maximally informative (these sorts of measurements are in fact generic, because the ability to avoid all such noise is always an idealization within classical mechanics).


The most general sort of measurement, which incorporates both the deterministic and probabilistic varieties, is associated with a set of \emph{indicator functions} on the phase space $\mathcal{M}$, that is, a set $\{ \xi_{\mathbf{y}}(\mathbf{z}) \}$ where $\xi_{\mathbf{y}}(\mathbf{z})\textrm{d}{\mathbf{y}}$ is the probability of obtaining a measurement outcome within $\textrm{d}\mathbf{y}$ of $\mathbf{y}$ given that the ontic state of the system is $\mathbf{z}$.  The variable $\mathbf{y}$ labels elements of the outcome space of he measurement. For instance, a measurement of position is associated with a set of indicator functions labeled by a position variable $q_0$, specifically, $\{ \xi_{\mathbf{q}}(\mathbf{z}) \propto \delta(q-q_0) \}$. For general measurements, the outcome space may be higher-dimensional.   Because $\xi_{\mathbf{y}}(\mathbf{z})$ is a probability density, we have $\xi_{\mathbf{y}}(\mathbf{z})\textrm{d}\mathbf{y} \ge 0$, and because \emph{some} outcome is certain to occur, we have $\int \textrm{d}\mathbf{y} \xi_{\mathbf{y}}(\mathbf{z}) =1$ for all $\mathbf{z}$.
Clearly, if the system is described by the epistemic state $\mu$ and a measurement described by the set of indicator functions $\{\xi_{\mathbf{y}} \}$ is performed, the probability density for outcome $\mathbf{y}$ is
\begin{equation}
p(\mathbf{y}) = \int d\mathbf{z}\, \xi_{\mathbf{y}}(\mathbf{z}) \mu(\mathbf{z}) \,.
\end{equation}

We now consider what constraints on the indicator functions follow from the epistemic restriction.

As discussed in Sec.~\ref{sec:SharpMmts}, the way to infer these constraints is by imagining the measurement to be performed on a system $A$ that is correlated with another system $B$, and applying the epistemic restriction to the final distribution assigned to $B$.
Specifically, we require that a valid indicator
function acting on a system $\mathcal{M}_A$ must always result in a valid
epistemic state on $\mathcal{M}_B$ when applied to any (possibly correlated) epistemic state on
$\mathcal{M}_A \times \mathcal{M}_B$.  This implies the following constraint.

\begin{proposition}[Valid indicator functions]
\label{thm:validindicatorfunctions}An indicator function on $\mathcal{M}$ is valid if and
only if, when normalized, it satisfies the epistemic constraint, that is,
\begin{equation} \label{eq:validindicatorfunctions}
\xi\ \text{\textrm{is valid iff}}\ \frac{\xi }{|\xi |}\in L_{\rm valid}(\mathcal{M})\,.
\end{equation}
\end{proposition}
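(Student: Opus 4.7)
The plan is to prove both directions via an argument based on the EPR-correlated state $\mu^{\rm corr}_{AB}$ of Sec.~\ref{subsec:EPR1}, which, as emphasised in the paper, plays the role of the Choi--Jamiolkowski isomorphism in this classical setting: it sets up a duality between valid indicator functions on $A$ and valid epistemic states on $B$.

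For the ``only if'' direction (necessity), I would have the measurement described by $\xi$ act on system $A$ of $\mu^{\rm corr}_{AB}$ (viewed as the $s\to 0$ limit of the Gaussians in Eq.~(\ref{eq:corr})). By the delta-function structure of $\mu^{\rm corr}_{AB}$, the integral $\int d\mathbf{z}_A\,\xi_{\mathbf{y}}(\mathbf{z}_A)\mu^{\rm corr}_{AB}(\mathbf{z}_A,\mathbf{z}_B)$ collapses to $\xi_{\mathbf{y}}(T\mathbf{z}_B)$, where $T:(\mathbf{q},\mathbf{p})\mapsto(\mathbf{q},-\mathbf{p})$ is momentum inversion. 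The map $T$ preserves $L_{\rm valid}(\mathcal{M})$: being linear and Gaussian-preserving it keeps the max-ent property, and it sends a covariance via $\gamma\mapsto T\gamma T$ and satisfies $T\Sigma T=-\Sigma$, so that $T\gamma T+i\lambdabar\Sigma=T(\gamma-i\lambdabar\Sigma)T$, which is positive semidefinite iff $\gamma+i\lambdabar\Sigma$ is (Hermitian matrices are positive iff their complex conjugates are, and $\gamma$ is real). Requiring the induced posterior on $B$ to be a valid epistemic state then forces $\xi_{\mathbf{y}}/|\xi_{\mathbf{y}}|\in L_{\rm valid}(\mathcal{M}_A)$.

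For the ``if'' direction (sufficiency), assume $\bar\xi:=\xi/|\xi|\in L_{\rm valid}(\mathcal{M}_A)$, and consider the posterior
\[
\mu^{\rm post}_B(\mathbf{z}_B)\propto\int d\mathbf{z}_A\,\bar\xi(\mathbf{z}_A)\,\mu_{AB}(\mathbf{z}_A,\mathbf{z}_B)
\]
for an arbitrary valid $\mu_{AB}$. The integrand is a product of Gaussians, so $\mu^{\rm post}_B$ is itself Gaussian and the max-ent condition holds automatically; only the CUP remains. I would decompose the computation into two steps: (i) a Gaussian Bayesian update producing the joint distribution $\mu^{\rm up}_{AB}\propto\bar\xi(\mathbf{z}_A)\mu_{AB}(\mathbf{z}_A,\mathbf{z}_B)$, whose covariance satisfies the parallel-addition identity $(\gamma^{\rm up}_{AB})^{-1}=\gamma_{AB}^{-1}+(\gamma_\xi^{-1}\oplus 0_B)$, followed by (ii) marginalisation over $\mathbf{z}_A$. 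Step (ii) preserves the CUP by the ``no concentration of uncertainty'' result of Sec.~\ref{sec:BasicFeatures}.

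The main obstacle is step (i): one must show that $\gamma^{\rm up}_{AB}+i\lambdabar\Sigma_{AB}\geq 0$ whenever $\gamma_{AB}+i\lambdabar\Sigma_{AB}\geq 0$ and $\gamma_\xi+i\lambdabar\Sigma_A\geq 0$, i.e., that the parallel-addition construction above sends pairs of CUP-satisfying covariances to CUP-satisfying covariances. I would attack this directly via Schur-complement manipulations, invoking the positivity criterion~(\ref{eq:partitionedmatrix}) cited earlier to convert the block-matrix inequality into a pair of scalar-style conditions that can be verified using the hypotheses on $\gamma_{AB}$ and $\gamma_\xi$. Should the direct calculation prove cumbersome, a backup route is a dilation argument: prepare an ancillary copy $A'$ of $A$ in the (valid, by the argument above) epistemic state $\bar\xi\circ T$, so that $\mu_{AB}\otimes(\bar\xi\circ T)_{A'}$ is a valid tripartite state, and perform a sharp joint measurement of the Poisson-commuting variables $\mathbf{q}_A-\mathbf{q}_{A'}$ and $\mathbf{p}_A+\mathbf{p}_{A'}$; conditioning on outcome $(\mathbf{0},\mathbf{0})$ and using $T^2=I$ recovers $\mu^{\rm post}_B$ and reduces the claim to validity of posteriors for sharp measurements of commuting canonical variables, already justified in Sec.~\ref{sec:SharpMmts}. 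A final subtlety is the treatment of unnormalizable indicators such as $\xi\propto\delta(q-q_0)$ for sharp quadrature measurements; these should be interpreted as the $s\to 0$ limit of Gaussian indicators with vanishing variance in the sharp direction, paralleling the quadrature eigenstates of Sec.~\ref{sec:BasicFeatures}, with all validity statements passed through the limit by continuity of the covariance relations involved.
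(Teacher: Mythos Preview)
Your necessity argument is essentially the paper's: act with $\xi$ on the $A$ half of $\mu^{\rm corr}_{AB}$, read off the posterior on $B$ as $\xi\circ T$ with $T$ the momentum inversion, and use that $T$ preserves $L_{\rm valid}$. Fine.

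The sufficiency argument, however, has a genuine gap. Your step~(i) asserts that the \emph{intermediate joint} $\mu^{\rm up}_{AB}\propto\bar\xi(\mathbf{z}_A)\mu_{AB}$ satisfies the CUP. This is false in general. Take $n=1$, a product state with $\gamma_A=\gamma_B=\lambdabar I_2$ (each saturating the CUP), and $\gamma_\xi=\lambdabar I_2$. Then by your own parallel-addition formula the $A$-block of $\gamma^{\rm up}_{AB}$ is $(\gamma_A^{-1}+\gamma_\xi^{-1})^{-1}=(\lambdabar/2)I_2$, and $(\lambdabar/2)I_2+i\lambdabar\Sigma_A$ has a negative eigenvalue. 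The joint update \emph{does} squeeze the $A$-marginal below the CUP; what has to be shown is only that the \emph{$B$-marginal} still obeys it, and your decomposition into ``update then marginalise, each step CUP-preserving'' cannot deliver that. Your backup dilation route then leans on the claim that sharp measurements of Poisson-commuting quadratures yield valid posteriors, citing Sec.~\ref{sec:SharpMmts}; but that section establishes only \emph{necessity} (which commuting sets are jointly measurable), not the sufficiency statement you need, so the reduction is not grounded.

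The paper bypasses the problematic intermediate object entirely. It writes the posterior covariance on $B$ directly as the Schur complement
\[
\gamma'_B=\gamma_B-X^T(\gamma'_A+\gamma_A)^{-1}X,
\]
and then uses a one-line trick: add $\gamma_{AB}+i\lambdabar\Sigma_{AB}\geq 0$ to the \emph{complex conjugate} $\gamma'_A-i\lambdabar\Sigma_A\geq 0$ of the CUP for $\bar\xi$. The $i\lambdabar\Sigma_A$ contributions cancel in the $A$-block, leaving
\[
\begin{pmatrix}\gamma_A+\gamma'_A & X\\ X^T & \gamma_B+i\lambdabar\Sigma_B\end{pmatrix}\geq 0,
\]
whose Schur complement with respect to the $A$-block (via Eq.~\eqref{eq:partitionedmatrix}) is exactly $\gamma'_B+i\lambdabar\Sigma_B\geq 0$. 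The key idea you are missing is this sign-flip-and-add manoeuvre, which makes the $A$-sector symplectic form disappear rather than trying to control it through an intermediate state that simply does not satisfy the CUP.
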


\begin{proof}
First we prove necessity. If $A$ and $B$ are prepared in the perfectly-correlated state $\mu_{AB}^{\mathrm{corr}}$ and a measurement on $A$ yields the outcome associated with the indicator function $\xi_A$, then by Bayes' theorem, one updates the description of $AB$ to $\mu'_{AB} \propto \xi_A \mu_{AB}^{\mathrm{corr}}$.  Given that $\mu_{AB}^{\mathrm{corr}}$ describes perfect correlation for position and anti-correlation for momentum (see Eq.~\eqref{eq:corr2}), it follows that the marginal on $B$ is $\mu'_B(\textbf{z}) \propto \xi_A (\Lambda \textbf{z})$, where $\Lambda$ indicates momentum inversion, $\Lambda \equiv \textrm{diag}(1,-1,1,-1,\dots)$ (or equivalently, time inversion).
Thus, $\xi_A(\textbf{z})$ is a valid indicator function only if $ \xi_A(\Gamma \textbf{z})/| \xi_A(\Gamma \textbf{z})|$ is a valid state.  If a distribution is positive and satisfies the CUP, then so does its momentum inversion, consequently, it suffices to require that $ \xi_A(\textbf{z})/| \xi_A(\textbf{z})|$ is a valid state.

To prove sufficiency, we show that any Gaussian indicator function acting on system $A$ of any bipartite state $\mu_{AB} \in L_{\rm valid}(\mathcal{M}_A \times \mathcal{M}_B)$ yields a valid updated state on $B$.  We follow a similar proof to that found in Ref.~\cite{Eis02}.  Consider a measurement described by a Gaussian indicator function on $A$ with covariance matrix $\gamma'_A$, on a bipartite Gaussian state $\mu_{AB} \in L_{\rm valid}(\mathcal{M}_A \times \mathcal{M}_B)$ with covariance matrix $\gamma_{AB}$.  For clarity, we will assume these covariance matrices are both strictly positive-definite.  (A general proof for positive semi-definite matrices follows by appropriately using a pseudoinverse.) It is convenient to partition the matrix $\gamma_{AB}$ as
\begin{equation}
\gamma_{AB} =
\begin{pmatrix}
\gamma_{A} & X \\
X^{T} & \gamma_{B}
\end{pmatrix}
\, ,
\end{equation}
so as to respect the division of the joint state into subsystems $A$ and $B$.

The post measurement state $\mu'_B$ on system $B$ is found from the
the probability distribution $\mu_{AB}$ by conditionalizing on $\xi_A$ having been found on $A$ and marginalization on $A$ in the standard
manner.  We make use of the fact that Gaussian integrals performed over a subset of the variables concerned yields a Gaussian in terms of the remaining variables.  For our case, the relevant Gaussian distribution over $B$ has a covariance matrix given by the Schur complement~\cite{Eis02}
\begin{equation} \label{eq:gammaBprime}
  \gamma'_{B}=\gamma_{B}-X^{T}\left( \gamma'_A +\gamma_{A}\right)^{-1}X \,.
\end{equation}
We now need to confirm that $\gamma'_{B}$ satisfies the classical
uncertainty relation.  Given that $\gamma_{AB}$ and $\gamma'_{A}$ satisfy
the CUP, we have the relations
\begin{align}
  \gamma_{AB} + i\lambdabar\Sigma_{AB} &\geq 0\,, \\
  \gamma'_{A} - i\lambdabar\Sigma_A &\geq 0\,.
\end{align}
where in the second expression we have taken the complex conjugation of the usual expression.  Adding these two equations yields
\begin{equation}  \label{eq:CovMatrixAfterMeas}
\begin{pmatrix}
\gamma_{A}+\gamma'_{A} & X \\
X^{T} & \gamma_{B}+i\lambdabar\Sigma_B
\end{pmatrix}
\geq 0\,.
\end{equation}
Applying the result from linear algebra described in Eq.~\eqref{eq:partitionedmatrix}
and making use of Eq.~\eqref{eq:gammaBprime}, we find
\begin{equation}
  \gamma'_{B} + i \lambdabar \Sigma_B \geq 0 \,.
\end{equation}
\end{proof}

We note that, as a result of this theorem, the indicator function
\begin{equation}  \label{eq:PerfectlyCorrIndicatorFunction}
\xi^{\mathrm{corr}} \propto \mu^{\mathrm{corr}} \,,
\end{equation}
is a valid indicator function on $\mathcal{M} \times \mathcal{M}$.


The valid \emph{sets} of indicator functions $\{ \xi_{\mathbf{y}}(\mathbf{z}) \}$ are simply those consisting entirely of valid indicator functions such that $\int \textrm{d}\mathbf{y}\xi_{\mathbf{y}}(\mathbf{z})=1$ for all $\mathbf{z}$.  For example, if we take any valid indicator function with means at the origin of the phase space and consider the set obtained by acting all phase-space displacements on the latter, we obtain a valid set of indicator functions where the outcome of the measurement is labeled by a point in phase space.   We will denote elements of a general outcome space by $\mathbf{y}$.

\subsubsection{General transformations}

In Sec.~\ref{sec:Reversible}, we demonstrated that the valid reversible transformations within ERL mechanics were the linear symplectic transformations.  However, a general operational theory includes non-reversible transformations as well.  These can include dissipation due to coupling to another system (and subsequently marginalizing over that system), transformations due to a measurement being performed on the system, and irreversibility due to an agent lacking knowledge of which reversible transformation was implemented.  We now consider how such general transformations are described within ERL mechanics, and what constraints are forced upon these transformations by the epistemic restriction.

Recall that, by assumption, the dynamics is classical, but an observer might lack knowledge of the nature of the dynamics (for instance, if the environment with which the system is interacting is in an unknown physical state).  In this case, they assign a probability distribution over the possibilities for the dynamics. Such ignorance can always be characterized by a probability distribution over the final ontic states for every initial ontic state, that is, by a set of \emph{transition probabilities} $\eta(\mathbf{z}'_{A}|\mathbf{z}_A)$ describing the probability that the system will evolve to $\mathbf{z}'_{A}$ given that it started in state $\mathbf{z}_A$.  Clearly, we require $\eta(\mathbf{z}'_{A}|\mathbf{z}_A)\ge 0$ for all $\mathbf{z}_{A},\mathbf{z}_{A'}$, and $\int \textrm{d}\mathbf{z}'_{A} \eta(\mathbf{z}'_{A}|\mathbf{z}_A) =1$ for all $\mathbf{z}_{A}$.  If an agent's knowledge of the dynamics is described by $\eta(\mathbf{z}'_{A}|\mathbf{z}_A)$, and their knowledge of the initial state is described by the epistemic state $\mu(\mathbf{z}_A)$, then their knowledge of the final state will be described by the epistemic state
\begin{equation}
\mu'(\mathbf{z}'_{A})=\int \textrm{d}\mathbf{z}_{A} \eta(\mathbf{z}'_{A}|\mathbf{z}_A) \mu(\mathbf{z}_A).
\end{equation}

We can also represent the transformation of the agent's knowledge by a \emph{transfer functional,} that is, a linear map over functions on phase space $\Gamma_{A} :L(\mathcal{M}_A)\rightarrow L(\mathcal{M}_{A})$, specifically,
\begin{equation}
  \Gamma_{A} [f](\mathbf{z}'_{A})=\int d\mathbf{z}_A\, \eta(\mathbf{z}'_{A}|\mathbf{z}_A)f(\mathbf{z}_A)\,.
\end{equation}
This map is \emph{norm-preserving}, that is, it satisfies $|\Gamma_{A} [f]|=|f|$, for all functions $f \in L(\mathcal{M}_A)$.  It is also \emph{positivity-preserving}, which is to say that if $f \in L_+(\mathcal{M}_A)$ then $\Gamma_{A} [f] \in L_+(\mathcal{M}_{A'})$).

The question is: which transition probabilities, or equivalently, which transfer functionals are valid within ERL mechanics?


A necessary condition on the set of valid transformations on a system is that, viewed as transfer functionals, they must take valid epistemic states on the system to valid epistemic states, that is, they must be \emph{validity-preserving}.  But it is also necessary that when acting on part of a larger system, they also take valid epistemic states on that larger system to valid epistemic states; we say that they are \emph{completely validity-preserving} or CVP (in analogy with the property of maps in quantum theory of being completely positivity-preserving).
Thus, we require that if
\begin{equation}
  \mu_{AB}(\mathbf{z}_A,\mathbf{z}_B)\in L_{\rm valid} (\mathcal{M}_A \times \mathcal{M}_B)\,,
\end{equation}
then
\begin{equation}
  \int d\mathbf{z}_A \,\eta(\mathbf{z}'_{A}|\mathbf{z}_A)\mu_{AB}(\mathbf{z}_A,\mathbf{z}_B)\in L_{\rm valid} (\mathcal{M}_{A} \times \mathcal{M}_B)\,.
\end{equation}
Defining the identity transfer functional $\mathrm{id}: \mathcal{M} \to \mathcal{M}$ by $\mathrm{id}[f]=f$, we can express the condition of a transfer functional $\Gamma_{A}$ being CVP compactly as
\begin{equation}
  (\Gamma_{A} \otimes \textrm{id}_B)[\mu_{AB}] \in L_{\rm valid}(\mathcal{M}_{A'} \times \mathcal{M}_B)\,.
\end{equation}

The other condition that a transformation must satisfy in order to be considered valid is that it must supervene on valid ontic dynamics -- either the transformation corresponds to linear symplectic evolution on the system's phase space (the reversible case) or it must correspond to adjoining to the system an ancillary system prepared according to a valid epistemic state, coupling the pair via a linear symplectic evolution on the joint phase space, and then marginalizing over the ancillary system.  If this condition holds, we say that the transformation satisfies \emph{ontic supervenience}.

To see why this condition is important, it suffices to note that a transformation may be validity-preserving but not satisfy the ontic supervenience property. Momentum reversal (or equivalently, time reversal) is such a transformation.  It is defined by the conditional $\eta(\mathbf{z}'_{A}|\mathbf{z}_A) =\delta_{q'_{A},q_{A}}\delta_{p'_{A},-p_{A}}$, which corresponds to the deterministic map $q'_{A}=q_{A}$ and $p'_{A}=-p_{A}$.  This cannot arise as a symplectic transformation on the system because the Poisson bracket is not preserved.  Furthermore, although one could conceive of implementing this map by measuring both the position and the momentum and then re-preparing the system with an inverted momentum, such a measurement is forbidden in the theory.
So, while momentum reversal takes every valid epistemic state to a valid epistemic state, it does not satisfy ontic supervenience.\footnote{In the context of the Spekkens toy theory, the ``universal state inverter'' transformation, which takes every epistemic state of a single elementary system to the epistemic state that has disjoint support with it, is an example of a transformation that is validity-preserving but does not supervene on the ontic dynamics, as discussed in Sec.~III.C of Ref.~\cite{Spe07}. If one tries to  supplement the toy theory with such transformations, as is considered in Ref.~\cite{Sko08}, one is left with a theory that no longer admits of a straightforward realist interpretation.}

Nonetheless, we will show that every transformation that is completely validity-preserving satisfies the ontic supervenience property and so is a valid transformation.

We will also show that one can characterize the set of valid transformations by their action on the perfectly correlated state.
\begin{proposition}[Valid transformations]\label{lem:validtransfer}\label{thm:validtransferfunctions}
A transformation on a system is valid if and only if the bipartite epistemic state one obtains by acting
it on half of a perfectly correlated state of a pair of such systems is valid.  In other words, if $\Gamma_A$ denotes the transfer functional on $L(\mathcal{M}_A)$, $\mu^{\mathrm{corr}}_{AB}$ denotes the perfectly correlated state on a pair of identical systems $\mathcal{M}_A \times \mathcal{M}_B$, and
\begin{equation}
  \label{eq:CJisomorphism}
  \mu_{AB}^{\Gamma }\equiv (\Gamma_{A}\otimes \textrm{id}_B)[\mu_{AB}^{\mathrm{corr}}]\,.
\end{equation}
then $\Gamma_A$ is valid if and only if $\mu_{AB}^{\Gamma }\in L_{\rm valid}(\mathcal{M}_A \times \mathcal{M}_B)$.
\end{proposition}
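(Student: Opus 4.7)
The plan is to prove both directions, with necessity being essentially immediate and sufficiency proceeding via a reconstruction theorem analogous to the Choi--Jamiolkowski isomorphism invoked through an entanglement-swapping protocol.

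\textbf{Necessity.} First I would dispatch the easy direction. If $\Gamma_A$ is a valid transformation, then it is by definition completely validity-preserving (CVP). The perfectly-correlated state $\mu_{AB}^{\mathrm{corr}}$ was already shown to lie in $L_{\mathrm{valid}}(\mathcal{M}_A\times\mathcal{M}_B)$ (Sec.~\ref{subsec:EPR1}), so applying $\Gamma_A\otimes\mathrm{id}_B$ to it produces a valid state on the bipartite system, which is precisely $\mu_{AB}^{\Gamma}$.

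\textbf{Sufficiency.} The strategy for the converse is to show that, whenever $\mu_{AB}^{\Gamma}$ is valid, the action of $\Gamma_A$ on an arbitrary input epistemic state $\mu_V$ can be realised by a recipe built solely out of (i) the preparation of the valid state $\mu_{AB}^{\Gamma}$, (ii) the valid correlated measurement with indicator function $\xi^{\mathrm{corr}}_{VA}$ from Eq.~\eqref{eq:PerfectlyCorrIndicatorFunction}, and (iii) valid linear symplectic corrections on $B$ conditioned on the measurement outcome. Because each ingredient satisfies ontic supervenience, and ontic supervenience is closed under sequential and parallel composition with marginalisation, the resulting effective map on $V\to B$ automatically satisfies ontic supervenience and is therefore valid (CVP in particular). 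This is the classical--statistical analogue of gate teleportation.

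\textbf{The key identity.} To verify that this recipe indeed implements $\Gamma_A$, I would use the explicit form $\mu_{AB}^{\mathrm{corr}}(\mathbf{z}_A,\mathbf{z}_B)\propto \delta(\mathbf{z}_A-\Lambda\mathbf{z}_B)$ from Eq.~\eqref{eq:corr2}, which yields
\begin{equation}
\mu_{AB}^{\Gamma}(\mathbf{z}_A,\mathbf{z}_B) \propto \eta(\mathbf{z}_A\,|\,\Lambda\mathbf{z}_B),
\end{equation}
so the transition probabilities $\eta$ are in one-to-one correspondence with $\mu_{AB}^{\Gamma}$. A Bayesian calculation of the joint conditional distribution on $V,A,B$ after the correlated measurement on $VA$, followed by the outcome-dependent symplectic correction on $B$ and marginalisation over the outcome, then reproduces $\int d\mathbf{z}_V\,\eta(\mathbf{z}_B\,|\,\mathbf{z}_V)\,\mu_V(\mathbf{z}_V) = \Gamma_A[\mu_V](\mathbf{z}_B)$. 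This is essentially the identity that will be packaged as Lemma~\ref{lemma:swap} in Sec.~\ref{sec:Characterisation} (entanglement swapping), and invoking it at this step would shorten the argument considerably.

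\textbf{Main obstacle.} The hardest part is the proper handling of the idealised limit: both $\mu_{AB}^{\mathrm{corr}}$ and $\xi^{\mathrm{corr}}$ are delta-function-like objects defined as $s\to 0$ limits of Gaussians, so the reconstruction identity above is formally a limit of finite-$s$ Gaussian manipulations. I would carry out the computation with a squeezing parameter $s>0$ throughout, verify that the resulting map approximates $\Gamma_A$ to arbitrary accuracy as the continuous-outcome measurement is sharpened, and only then take $s\to 0$, in close analogy with continuous-variable teleportation. A secondary subtlety is confirming that the set of outcomes together with their corrective displacements really constitutes a valid \emph{set} of indicator functions (normalised to unity, as required in Sec.~\ref{sec:indicatorfns}), so that no validity is lost in passing from a single conditional to the averaged transformation.
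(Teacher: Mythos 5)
Your overall strategy for sufficiency -- implement $\Gamma_A$ by a gate-teleportation circuit built from the valid resource state $\mu^{\Gamma}_{AB}$, a correlated measurement, and outcome-conditioned symplectic corrections, and then invoke closure of ontic supervenience under composition -- is a genuinely different route from the paper's. The paper never constructs such an implementation: it proves that a valid Choi state implies the CVP property (via Lemma~\ref{lemma:swap}, exactly the post-selected identity you cite), and separately proves ontic supervenience by showing, through an explicit finite-squeezing covariance-matrix computation, that the CUP condition \eqref{eq:ArbitraryABcondition} on $\gamma_{AB}$ is equivalent to the dilation condition \eqref{eq:XYcondition} of Lemma~\ref{lemma:irreversible}, imported from the theory of Gaussian-channel dilations. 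Your route, if completed, would buy a more operational proof that bypasses the dilation theorem; however, as written it has two concrete gaps.

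First, you teleport through the wrong half of the Choi state. Since $\mu^{\Gamma}_{AB}(\mathbf{z}_A,\mathbf{z}_B)\propto\eta(\mathbf{z}_A|\Lambda\mathbf{z}_B)$, the output slot is $A$ and the uniform-marginal slot is $B$; the correlated measurement must therefore be $\xi^{\mathrm{corr}}_{VB}$, with the output appearing (and corrections applied) on $A$. Measuring $\xi^{\mathrm{corr}}_{VA}$ and correcting on $B$, as you propose, yields on the zero outcome the distribution $\int d\mathbf{w}\,\mu_V(\mathbf{w})\,\eta(\Lambda\mathbf{w}|\Lambda\mathbf{z}_B)$, a transpose-like map rather than $\Gamma_A[\mu_V]$; e.g.\ for the ``prepare a fixed state $\nu$'' channel it outputs the uniform distribution instead of $\nu$. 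Second, and more substantively, the deterministic claim hinges on the correction step: for a nonzero outcome $\mathbf{c}$ the conditional output is $\Gamma_A$ applied to a \emph{displaced} input, and undoing this by acting \emph{after} $\Gamma_A$ requires that $\Gamma_A$ be displacement-covariant, with the correction a displacement linear in $\mathbf{c}$ determined by the cross-covariance block of $\gamma_{AB}$. This covariance is not automatic for an arbitrary transfer functional; it must be \emph{derived} from the hypothesis, namely that validity of $\mu^{\Gamma}_{AB}$ (hence Gaussianity, by max-ent) forces the conditional $\eta$ to be Gaussian with mean affine in the conditioning variable. You never establish this step, and without it Lemma~\ref{lemma:swap} gives only a post-selected (single-outcome) realization of $\Gamma_A$, which does not establish that the deterministic transformation $\Gamma_A$ itself supervenes on valid ontic dynamics. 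With the pairing fixed, the covariance property proved, and the finite-squeezing limit handled as you indicate (also checking that the feedback is implementable by a quadratic coupling, so the whole circuit is a single valid dilation), your argument would go through as an alternative to the paper's proof.
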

Note that $\mu_{AB}^{\Gamma }$ has the same marginal on $B$ as $\mu_{AB}^{\mathrm{corr}}$, that is, a uniform marginal.  Therefore, the valid transfer functionals on $A$ are in one-to-one correspondence with the valid epistemic states on a pair of copies of $A$ that have a uniform marginal on one of the copies.  We shall say simply that the valid transformations are isomorphic to valid bipartite states. This isomorphism is the analogy within ERL mechanics of the Choi isomorphism in quantum theory.

The rest of the section will seek to prove these results.  The strategy of the proof is to demonstrate (i) that a transfer functional is completely validity-preserving if and only if it is isomorphic to a valid bipartite state, and (ii) that a transfer functional satisfies the ontic supervenience property if and only if it is isomorphic to a valid bipartite state.  Together, these two facts imply that \emph{every} transformation that is completely validity-preserving satisfies the ontic supervenience property.  Therefore, the condition of being completely validity-preserving is not only necessary for a transformation to be valid but sufficient as well (unlike the condition of being validity-preserving, which is not sufficient). Given this characterization of the valid transformations, proposition \ref{lem:validtransfer} then follows from (i).




We begin by establishing the connection between the CVP property and the isomorphism property.  To do so, it is useful to note a general analogue of quantum teleportation (and entanglement swapping) within ERL mechanics (formalizing the discussion in Sec.~\ref{sec:Teleportation}).  We begin by defining a functional that represents marginalizing over $\mathcal{M}_A$, namely, $\mathrm{Tr}_A: \mathcal{M}_A \to \mathbb{R}$ defined by $\mathrm{Tr}_A[f] = \int \textrm{d}\mathbf{z}_A f(\mathbf{z}_A) = |f|$ (the notation is chosen to be suggestive of the analogous quantum trace operation).
\begin{lemma}[Teleportation]\label{lemma:swap}
Any epistemic state $\mu_{AB}$ on $\mathcal{M}_{A}\times \mathcal{M}_{B}$
satisfies the relation
\begin{equation}
\mu _{AB} \propto \textrm{Tr}_{CD} [\xi_{CD}^{\mathrm{corr} } \mu _{AC}^{\mathrm{corr}} \mu _{DB}]\,,
\end{equation}
where $C,D$ are ancillary systems identical to $A$, and $\xi_{CD}^{\mathrm{\
corr}} \propto \mu _{CD}^{\mathrm{corr}}$ is the indicator function associated with the maximally-correlated state.
\end{lemma}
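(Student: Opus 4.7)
The plan is a direct computation: expand the right-hand side as an integral over the two ancillary phase spaces $\mathcal{M}_{C}$ and $\mathcal{M}_{D}$ and show that the two perfectly-correlated factors collapse this integral to $\mu_{AB}(\mathbf{z}_{A},\mathbf{z}_{B})$ up to an overall normalization.  The key inputs are Eq.~\eqref{eq:corr2}, which gives $\mu^{\mathrm{corr}}\propto \delta(q-q')\delta(p+p')$ in each canonical pair, and Eq.~\eqref{eq:PerfectlyCorrIndicatorFunction}, which gives $\xi_{CD}^{\mathrm{corr}}\propto \mu_{CD}^{\mathrm{corr}}$ with the same delta-function profile.

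First I would write out
\begin{equation}
\mathrm{Tr}_{CD}[\xi_{CD}^{\mathrm{corr}}\mu_{AC}^{\mathrm{corr}}\mu_{DB}](\mathbf{z}_{A},\mathbf{z}_{B}) = \int d\mathbf{z}_{C}\,d\mathbf{z}_{D}\,\xi_{CD}^{\mathrm{corr}}(\mathbf{z}_{C},\mathbf{z}_{D})\,\mu_{AC}^{\mathrm{corr}}(\mathbf{z}_{A},\mathbf{z}_{C})\,\mu_{DB}(\mathbf{z}_{D},\mathbf{z}_{B}).
\end{equation}
Integrating against $\mu_{AC}^{\mathrm{corr}}$ over $\mathbf{z}_{C}$ enforces $q_{C}\mapsto q_{A}$, $p_{C}\mapsto -p_{A}$ in the remaining integrand.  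The factor $\xi_{CD}^{\mathrm{corr}}$ then reduces to $\delta(q_{A}-q_{D})\delta(p_{D}-p_{A})$, so the subsequent integration over $\mathbf{z}_{D}$ enforces $q_{D}\mapsto q_{A}$, $p_{D}\mapsto p_{A}$.  What is left is $\mu_{DB}$ evaluated at $(q_{A},p_{A},\mathbf{z}_{B})$, which as a function of its arguments coincides with $\mu_{AB}(\mathbf{z}_{A},\mathbf{z}_{B})$ upon relabeling the dummy subsystem $D$ as $A$.  For $n$ canonical pairs the same argument runs componentwise, since the perfectly correlated state factorizes across modes by Eq.~\eqref{eq:corr3}.

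The main obstacle is that $\mu^{\mathrm{corr}}$ is not literally a product of deltas but the $s\to 0$ limit of the squeezed Gaussians in Eq.~\eqref{eq:corr}, so the formal manipulations above need to be justified.  I would handle this by regularizing: work at finite squeezing parameters $s$ and $s'$ for $\mu_{AC}^{\mathrm{corr}}$ and $\xi_{CD}^{\mathrm{corr}}$ respectively, perform the resulting Gaussian integrals to obtain a convolution of $\mu_{DB}$ against a narrowing Gaussian kernel that concentrates on the diagonal $\mathbf{z}_{D}=\mathbf{z}_{A}$ (and on $p_{C}=-p_{A}$), and then take $s,s'\to 0$.  Since the valid states of interest are themselves Gaussians (possibly in the distributional limit), the convolution is explicit and converges to evaluation at the diagonal, reproducing $\mu_{AB}(\mathbf{z}_{A},\mathbf{z}_{B})$ up to the proportionality constant implicit in $\xi_{CD}^{\mathrm{corr}}\propto \mu_{CD}^{\mathrm{corr}}$.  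An equivalent route, if one prefers to avoid the limit, is to pair both sides of the claim with an arbitrary valid indicator function on $AB$ and verify that the two scalars agree; operationally this is all that is needed, since states in ERL mechanics are only determined by their pairings with the valid indicator functions.
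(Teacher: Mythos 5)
Your proposal is correct and follows essentially the same route as the paper: the paper's proof is exactly the direct computation in which $\xi_{CD}^{\mathrm{corr}}$ and $\mu_{AC}^{\mathrm{corr}}$ are written as products of $\delta(q-q')\delta(p+p')$ factors and the integrals over $\mathbf{z}_C,\mathbf{z}_D$ collapse to evaluation of $\mu_{DB}$ at $(\mathbf{z}_A,\mathbf{z}_B)$, mode by mode. The paper simply performs this as a formal delta-function manipulation, so your added regularization via finite squeezing is a harmless (and slightly more careful) elaboration rather than a different argument.
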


\begin{proof}
We make use of the explicit expression for $\mu^{\mathrm{corr}}$ given in
Eqs.~(\ref{eq:corr}-\ref{eq:corr3}),
and the proportionality $\xi^{\mathrm{corr}} \propto \mu^{\mathrm{corr}}$, to obtain
\begin{align}
\textrm{Tr}_{CD} [&\xi_{CD}^{\mathrm{corr} } \mu _{AC}^{\mathrm{corr}} \mu _{DB}] \notag \\
&\propto\prod_i \int dq_{iC}\,dp_{iC}\,dq_{iD}\,dp_{iD}\, \notag \\
&\qquad \times \delta(q_{iC}-q_{iD}) \delta(p_{iC}+p_{iD})\delta(q_{iA}-q_{iC}) \notag \\
&\qquad \times \delta(p_{iA}+p_{iD})\mu_{DB}(\mathbf{z}_D,\mathbf{z}_B)  \notag \\
&\propto \mu_{AB}(\mathbf{z}_A,\mathbf{z}_B) \,,
\end{align}
\end{proof}

We can now prove the first lemma concerning valid transformations.
\begin{lemma}[CVP and isomorphism]\label{lem:CVPIso}
A transformation on a system is completely validity-preserving if and only if the bipartite epistemic state one obtains by acting it on half of a perfectly correlated pair of such systems is valid.
\end{lemma}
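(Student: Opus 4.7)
The plan is to prove the two directions of the equivalence separately, with the forward direction being almost immediate and the reverse direction relying on the teleportation lemma just established.

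For the forward direction (CVP implies $\mu^{\Gamma}_{AB}$ valid): since $\mu^{\mathrm{corr}}_{AB}$ was constructed in Sec.~\ref{subsec:EPR1} as a limit of Gaussians saturating the CUP, it lies in $L_{\rm valid}(\mathcal{M}_A\times\mathcal{M}_B)$. So if $\Gamma_A$ is CVP then $\mu^{\Gamma}_{AB}=(\Gamma_A\otimes\mathrm{id}_B)[\mu^{\mathrm{corr}}_{AB}]$ is valid by definition of CVP.

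For the reverse direction, assume $\mu^{\Gamma}_{AC}$ is valid (using $C$ as a copy of $A$ to avoid name clashes). I want to show that for every $\mu_{AR}\in L_{\rm valid}(\mathcal{M}_A\times\mathcal{M}_R)$ (with $R$ an arbitrary reference system), the state $(\Gamma_A\otimes\mathrm{id}_R)[\mu_{AR}]$ lies in $L_{\rm valid}(\mathcal{M}_A\times\mathcal{M}_R)$. The idea is to reprocess $\mu_{AR}$ using the teleportation identity of Lemma~\ref{lemma:swap}, introducing two ancillary copies $C,D$ of $A$, so that
\begin{equation}
\mu_{AR}\propto \mathrm{Tr}_{CD}\bigl[\xi^{\mathrm{corr}}_{CD}\,\mu^{\mathrm{corr}}_{AC}\,\mu_{DR}\bigr].
\end{equation}
Because $\Gamma_A$ is linear on $L(\mathcal{M}_A)$ and acts trivially on the $C,D,R$ factors, it slides inside the integral and hits only the factor $\mu^{\mathrm{corr}}_{AC}$, giving
\begin{equation}
(\Gamma_A\otimes\mathrm{id}_R)[\mu_{AR}]\propto \mathrm{Tr}_{CD}\bigl[\xi^{\mathrm{corr}}_{CD}\,\mu^{\Gamma}_{AC}\,\mu_{DR}\bigr].
\end{equation}

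Now interpret the right-hand side. The tensor product $\mu^{\Gamma}_{AC}\otimes\mu_{DR}$ is a state on $\mathcal{M}_A\times\mathcal{M}_C\times\mathcal{M}_D\times\mathcal{M}_R$ with block-diagonal covariance matrix, so it satisfies the CUP (both blocks do, by hypothesis) and is Gaussian, hence lies in $L_{\rm valid}$. The indicator function $\xi^{\mathrm{corr}}_{CD}$ is valid by Eq.~(\ref{eq:PerfectlyCorrIndicatorFunction}). What remains is to invoke a multipartite version of Proposition~\ref{thm:validindicatorfunctions}: applying a valid indicator function on subsystems $CD$ of a valid state on $ACDR$ and marginalizing over $CD$ yields a valid state on $AR$. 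This follows by grouping $CD$ as the ``measured'' system and $AR$ as the ``complement'' and repeating verbatim the Schur-complement argument used to prove Proposition~\ref{thm:validindicatorfunctions}; the block structure of $\gamma_{ACDR}$ and $\Sigma_{ACDR}$ with respect to the $CD|AR$ partition is exactly what that argument requires.

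The main technical obstacle is therefore the generalization of Proposition~\ref{thm:validindicatorfunctions} to the multipartite case, but it is really only a notational extension because the Schur-complement inequality Eq.~(\ref{eq:partitionedmatrix}) is insensitive to whether the off-diagonal block $X$ couples a single system to another single system or to a composite. A secondary subtlety is that in the limiting (maximally squeezed) definitions of $\mu^{\mathrm{corr}}$ and $\xi^{\mathrm{corr}}$ the identities should be understood in the distributional sense; one may verify Lemma~\ref{lemma:swap} at finite squeezing $s$ and take $s\to 0$ at the end. With these pieces in place, every valid $\mu_{AR}$ is mapped to a valid state by $\Gamma_A\otimes\mathrm{id}_R$ for arbitrary $R$, so $\Gamma_A$ is CVP, completing the equivalence.
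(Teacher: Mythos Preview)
Your proof is correct and follows essentially the same route as the paper: the forward direction is immediate, and the reverse direction uses the teleportation identity (Lemma~\ref{lemma:swap}) to rewrite $(\Gamma_A\otimes\mathrm{id})[\mu_{AR}]$ as a valid indicator function applied to the product $\mu^{\Gamma}_{AC}\,\mu_{DR}$ of valid states, whence validity follows from Proposition~\ref{thm:validindicatorfunctions}. The paper's version is terser, simply asserting the last step; your explicit remark that Proposition~\ref{thm:validindicatorfunctions} is being invoked with the $CD|AR$ bipartition (and that the Schur-complement argument goes through unchanged) is a welcome clarification rather than a departure.
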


\begin{proof} Necessity is trivial to prove.  If a transfer function $\Gamma_{A}$ is CVP, then it maps all valid epistemic states on $\mathcal{M}_A \times \mathcal{M}_B$ to valid epistemic states on $\mathcal{M}_{A} \times \mathcal{M}_B$.
Because $\mu_{AB}^{\mathrm{corr}}$ is a valid epistemic state on $\mathcal{M}_A \times \mathcal{M}_B$, then $\mu_{AB}^{\Gamma} = (\Gamma_{A}\otimes \textrm{id}_{B})[\mu_{AB}^{\mathrm{corr}}]$ is as well.

To prove sufficiency, we must show that any $\Gamma_{A}$ satisfying
\begin{equation}
(\Gamma _{A}\otimes \textrm{id}_{B})[\mu _{AB}^{\mathrm{corr}}]\in L_{\rm valid}(\mathcal{M}_{A} \times \mathcal{M}_B)\,,
\label{eq:TransFunctConstraint}
\end{equation}
also satisfies $(\Gamma _{A}\otimes \textrm{id}_{B})[\mu_{AB}]\in L_{\rm valid}(\mathcal{M}_{A} \times \mathcal{M}_B)$ for all $\mu_{AB}\in L_{\rm valid}(\mathcal{M}_A \times \mathcal{M}_B)$.

Using Lemma~\ref{lemma:swap}, we now calculate the action of $(\Gamma _{A}\otimes \textrm{id}_{B})$
on an arbitrary state $\mu_{AB} \in L_{\rm valid}(\mathcal{M}_A \times \mathcal{M}_B)$.
\begin{align}
  (\Gamma_{A} & \otimes \textrm{id}_{B})[\mu_{AB}]  \notag \\
  &= \Bigl[(\Gamma_{A}\otimes \textrm{Tr}_{CD} \otimes \mathrm{id}_{B}) [\xi_{CD}^{\mathrm{corr}} \mu_{AC}^{\mathrm{corr}} \mu _{DB}]\Bigr]  \notag \\
  & =\Bigl[(\mathrm{id}_{A} \otimes \textrm{Tr}_{CD} \otimes \mathrm{id}_{B})[\xi_{CD}^{\mathrm{corr}} \mu_{AC}^{\Gamma} \mu_{DB}]\Bigr]\,.
\end{align}
Because $\mu_{AC}^{\Gamma} \mu _{DB}$ is a valid epistemic state (being a product of two valid epistemic states) and because $\xi_{CD}^{\mathrm{corr}}$ is a valid indicator function, the result is a valid epistemic state.
\end{proof}

Next, we need to establish that a transformation satisfies the ontic supervenience property if and only if it is isomorphic to a valid bipartite state.  We begin by characterizing what the ontic supervenience property implies about how the transformation acts on the covariance matrix.

\begin{lemma}\label{lemma:irreversible}
A transformation on $A$ satisfies the ontic supervenience property (i.e. it can be realized by coupling to an environment via a joint linear symplectic transformation followed by marginalization) if and only if the covariance matrix on $A$ transforms as
\begin{equation} \label{eq:howCMevolves}
  \gamma_A \mapsto X^T \gamma_A X + Y\,,
\end{equation}
for real matrices $X,Y$ that satisfy
\begin{equation}
  \label{eq:XYcondition}
  Y \geq i\Sigma_A - i X^T \Sigma_A X \,.
\end{equation}
\end{lemma}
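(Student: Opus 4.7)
The plan is to prove both directions separately, treating the forward (necessity) direction as a direct computation and reserving most effort for the reverse (sufficiency) direction, which amounts to constructing a symplectic Stinespring-type dilation.

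For necessity, I would start from a transformation built by appending an ancilla $E$ in a valid epistemic state with covariance matrix $\gamma_E$ satisfying $\gamma_E + i\lambdabar\Sigma_E \geq 0$, applying a joint linear symplectic $S$ on $\mathcal{M}_A\times\mathcal{M}_E$, and marginalizing over $E$. Writing
\begin{equation}
S = \begin{pmatrix} X & U \\ V & W \end{pmatrix}\,,
\end{equation}
and transforming the initial product covariance $\gamma_A\oplus\gamma_E$ as $S^T(\gamma_A\oplus\gamma_E)S$, the upper-left block is exactly $X^T\gamma_A X + V^T\gamma_E V$. So $\gamma_A\mapsto X^T\gamma_A X + Y$ with $Y\equiv V^T\gamma_E V$. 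The symplectic identity $S^T(\Sigma_A\oplus\Sigma_E)S = \Sigma_A\oplus\Sigma_E$ read in the upper-left block gives $V^T\Sigma_E V = \Sigma_A - X^T\Sigma_A X$. Conjugating $\gamma_E + i\lambdabar\Sigma_E \geq 0$ by $V$ then yields $Y + i\lambdabar(\Sigma_A - X^T\Sigma_A X)\geq 0$, which is the asserted inequality (understood as Hermitian positive semi-definiteness, with $\lambdabar$ playing the usual role).

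For sufficiency, suppose $X,Y$ satisfy the inequality; I must exhibit a valid ancilla state $\gamma_E$ and a linear symplectic $S$ on $\mathcal{M}_A\times\mathcal{M}_E$ whose upper-left block is $X$ and whose lower-left block $V$ satisfies $V^T\gamma_E V = Y$. The construction has two parts. First, choose the ancilla phase space $\mathcal{M}_E$ large enough (typically $\dim\mathcal{M}_E = 2\dim\mathcal{M}_A$ suffices) and build $V$ so that $V^T\Sigma_E V = \Sigma_A - X^T\Sigma_A X$; this reduces to representing the antisymmetric matrix $\Sigma_A - X^T\Sigma_A X$ in the symplectic form on $\mathcal{M}_E$, which is always possible by a Darboux-type argument. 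Then complete $(X,V)$ to a full symplectic $S$ on $\mathcal{M}_A\times\mathcal{M}_E$ using the standard fact that a pair of "rows" satisfying the appropriate symplectic identity can be extended to a full symplectic matrix (this is a classical result on the transitivity of the symplectic group on partial symplectic frames).

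Second, construct $\gamma_E$. Since $V^T\Sigma_E V = \Sigma_A - X^T\Sigma_A X$, the hypothesis reads $Y + i\lambdabar V^T\Sigma_E V \geq 0$, and I need $\gamma_E$ with $V^T\gamma_E V = Y$ and $\gamma_E + i\lambdabar\Sigma_E \geq 0$. The cleanest route is to invoke Williamson's normal form: decompose $V$ via the Darboux construction so that on a suitable symplectic splitting of $\mathcal{M}_E$, the condition reduces to choosing a Gaussian covariance that saturates the bound on the relevant subspace and is a large multiple of the identity on the complement. By taking the ancilla initially in a pure squeezed state aligned with this splitting, the induced $\gamma_E$ both reproduces $Y$ under pullback by $V$ and satisfies CUP.

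The main obstacle is precisely this second part: verifying that one can always jointly match $V^T\gamma_E V = Y$ and $\gamma_E + i\lambdabar\Sigma_E\geq 0$. The issue is that $V$ need not have full rank, so $Y$ only constrains $\gamma_E$ on the image of $V$, and one must ensure the freedom on the complement is sufficient. The key technical observation is that the hypothesized inequality on $Y$ is precisely what is needed for the Schur-complement analysis (analogous to the use of Eq.~\eqref{eq:partitionedmatrix} in the proof of Proposition~\ref{thm:validindicatorfunctions}) to guarantee the existence of a valid completion of $\gamma_E$. Once this is in hand, the max-ent condition on the ancilla is automatic because the construction uses a Gaussian, and the resulting transformation manifestly satisfies ontic supervenience.
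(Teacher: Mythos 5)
Your necessity argument coincides with the paper's own: write the joint linear symplectic $S$ in block form, identify $X$ with the $AA$ block and $Y = V^T\gamma_E V$ with $V$ the $EA$ block, read the upper-left block of $S^T\Sigma S = \Sigma$ to get $V^T\Sigma_E V = \Sigma_A - X^T\Sigma_A X$, and conjugate the ancilla's CUP by $V$. (What this literally gives is $Y \geq -i\lambdabar\Sigma_A + i\lambdabar X^T\Sigma_A X$, the complex conjugate of the asserted inequality~\eqref{eq:XYcondition}; since $X$, $Y$, $\Sigma_A$ are real the two are equivalent, a step the paper makes explicit.) That half is fine and is essentially identical to the paper's proof.

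The gap is in sufficiency, which is the direction the paper itself does not spell out (it calls the construction ``somewhat involved'' and defers to the dilation theorem for Gaussian channels in Ref.~\cite{Caruso}). Your sketch does not close it, for two concrete reasons. First, the order of your construction is wrong: you fix $V$ from the antisymmetric defect $\Omega \equiv \Sigma_A - X^T\Sigma_A X$ alone, and only afterwards look for a valid $\gamma_E$ with $V^T\gamma_E V = Y$. But any $\gamma_E$ compatible with the CUP is necessarily strictly positive definite (if $v^T\gamma_E v = 0$ for real $v\neq 0$, testing $\gamma_E + i\lambdabar\Sigma_E$ on $v + i\epsilon u$ with $u^T\Sigma_E v \neq 0$ and $\epsilon$ small of suitable sign gives a negative expectation value), so $V^T\gamma_E V$ has the same rank as $V$. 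Hence for $X$ symplectic and $Y=0$ (the identity channel, which trivially satisfies Eq.~\eqref{eq:XYcondition}) any nonzero $V$ makes the matching impossible, while for $\Omega=0$ and $Y$ of full rank a minimal-rank $V$ fails: the rank of $V$ and the placement of its image in $\mathcal{M}_E$ must be adapted to $Y$ (in effect to $Y + i\lambdabar\Omega$), not to $\Omega$ alone. Second, and more importantly, the decisive existence claim --- that whenever Eq.~\eqref{eq:XYcondition} holds one can simultaneously realize $V^T\Sigma_E V = \Omega$ and $V^T\gamma_E V = Y$ with a CUP-compatible $\gamma_E$, and then complete $(X,V)$ to a full symplectic $S$ --- is precisely the content of the dilation theorem, and you assert it (``is precisely what is needed for the Schur-complement analysis'', ``once this is in hand'') rather than prove it; the appeal to Williamson normal form and a saturating squeezed ancilla is a gesture, not an argument. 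The completion step itself is unproblematic (the block condition $X^T\Sigma_A X + V^T\Sigma_E V = \Sigma_A$ forces the columns of $\binom{X}{V}$ to be linearly independent, and a partial symplectic frame always extends), but without an actual construction of the pair $(V,\gamma_E)$ from the data $(X,Y)$ the sufficiency direction of Lemma~\ref{lemma:irreversible} remains open --- exactly the part the paper outsources to Ref.~\cite{Caruso}.
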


This result follows in a straightforward manner from previous work on unitary dilations of Gaussian quantum channels \cite{Caruso}. Nonetheless, for clarity, we repeat some of the details of the proof here.

\begin{proof} Consider necessity first. We begin by describing how transformations that satisfy the ontic supervenience property act on the covariance matrix.  Consider a system $A$, and a valid epistemic state $\mu_{A}$ with covariance matrix $\gamma_{A}$, initially uncorrelated with an environment $E$, described by a valid epistemic state $\mu_E$ and covariance matrix $\gamma_{E}$.  Because they are initially uncorrelated, the covariance matrix of the joint system $AE$ is $\gamma_{AE} = \text{diag}(\gamma_A,\gamma_E)$.  The pair of systems is then acted upon by a linear symplectic transformation $S$ on the joint phase space $AE$, describing a general reversible transformation.  This matrix can be expressed in block form as
\begin{equation}
  S = \begin{pmatrix} S_{AA} & S_{AE} \\ S_{EA} & S_{EE} \end{pmatrix}\,,
\end{equation}
and satisfies $S^T \Sigma_{AE} S = \Sigma_{AE}$.  The covariance matrix $\gamma'_{AE}$ for the joint system after this transformation is given by $\gamma'_{AE} = S^T \gamma_{AE} S$.  Considering only the resulting marginal distribution $\mu'_A$ on the system after the transformation (marginalizing over the environment), the resulting covariance matrix $\gamma'_A$ of $\mu'_A$ is
\begin{equation} \label{eq:gammaprime}
  \gamma'_A = S^T_{AA}\gamma_A S_{AA} + S^T_{EA}\gamma_E S_{EA} \,.
\end{equation}
Note that, because $\mu_A$ and $\mu_E$ are both valid epistemic states, and we applied valid operations (a reversible linear symplectic transformation, and a marginalization), then the final marginal $\mu'_A$ will be valid, i.e., $\gamma'_A+i\Sigma_A \ge 0$.
Eq.~\eqref{eq:gammaprime} shows that the covariance matrix transforms as Eq.~\eqref{eq:howCMevolves} prescribes, with $X = S_{AA}$ and $Y = S^T_{EA}\gamma_E S_{EA}$.  These are both real matrices, but it remains to show that they satisfy Eq.~\eqref{eq:XYcondition}.

Because $\gamma_{E}+i\Sigma_{E}\geq0$ it follows that
\begin{equation}
S_{EA}^{T}\left(  \gamma_{E}+i\Sigma_{E}\right)  S_{EA}\geq0.\label{eq:SEA}
\end{equation}
Using the fact that $S\Sigma S^{T}=\Sigma,$ where $\Sigma=\mathrm{diag}\left(
\Sigma_{A},\Sigma_{E}\right)  ,$ we infer that $S_{AA}^{T}\Sigma_{A}%
S_{AA}+S_{EA}^{T}\Sigma_{E}S_{EA}=\Sigma.$ \ Substituting this into Eq.~\eqref{eq:SEA}, we
have
\[
S_{EA}^{T}\gamma_{E}S_{EA}+i\left(  \Sigma-S_{AA}^{T}\Sigma_{A}S_{AA}\right)
\geq0.
\]
In other words,%
\[
Y\geq-i\Sigma_{A}+iX^{T}\Sigma_{A}X,
\]
from which Eq.~\eqref{eq:XYcondition} can be obtained by taking the complex
conjugate.

To prove sufficiency, one must show that it is possible to  find a
symplectic matrix $S$ and a covariance matrix $\gamma_{E}$ leading to any $X$
and $Y$ that satisfy $Y \geq i\Sigma_{A} - iX^{T}\Sigma_{A}X$. The construction is somewhat involved, so we do not repeat it here, but simply refer the reader to \cite{Caruso}.
\end{proof}

Having characterized the transformations that satisfy the ontic supervenience property
by how they act on the covariance matrix, we now demonstrate that all such transformations are isomorphic to a valid bipartite state.


\begin{lemma}
A transformation acts on covariance matrices in the manner described in lemma~\ref{lemma:irreversible} if and only if the bipartite epistemic state one obtains by acting it on half of the perfectly correlated state for a pair of such systems is valid.
\end{lemma}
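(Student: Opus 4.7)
The plan is to compute the covariance matrix of $\mu^{\Gamma}_{AB}$ explicitly in the block form inherited from $\mu^{\mathrm{corr}}_{AB}$, apply the classical uncertainty principle via the partitioned-matrix inequality \eqref{eq:partitionedmatrix}, and observe that the algebra collapses to precisely Eq.~\eqref{eq:XYcondition}. I would work at a finite value of the squeezing parameter $s$, where all covariance blocks are honest positive matrices and no $s\to 0$ limit need be taken.

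First I would record the block covariance matrix of the regularized correlated state $\mu^{\mathrm{corr}(s)}_{AB}$, defined by Eq.~\eqref{eq:corr} before the limit. For a single degree of freedom, a direct computation from the product of Gaussians in $q_A\pm q_B$ and $p_A\pm p_B$ gives
\[
\gamma^{\mathrm{corr}(s)}_{AB} = \begin{pmatrix} \gamma^{(s)} & (\beta/\alpha)\,\Lambda\gamma^{(s)} \\ (\beta/\alpha)\,\gamma^{(s)}\Lambda & \gamma^{(s)} \end{pmatrix},
\]
where $\gamma^{(s)} = \mathrm{diag}(\alpha,\lambdabar^2\alpha)$, $\alpha = \tfrac{1}{2}(s^{-2}+s^2)$, $\beta = \tfrac{1}{2}(s^{-2}-s^2)$, and $\Lambda = \mathrm{diag}(1,-1)$ is the momentum-inversion matrix from the proof of Proposition~\ref{thm:validindicatorfunctions}. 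The crucial algebraic identity $\alpha^2 - \beta^2 = 1$ is precisely the statement that $\mu^{\mathrm{corr}(s)}$ saturates the CUP. By Eq.~\eqref{eq:corr3} the $n$-mode case decomposes into a block-diagonal direct sum of single-mode copies, so this calculation extends mode-by-mode.

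Applying $\Gamma_A$ as characterized in Lemma~\ref{lemma:irreversible} sends the $(A,A)$ block from $\gamma^{(s)}$ to $X^T\gamma^{(s)}X + Y$ and the $(A,B)$ cross-block from $(\beta/\alpha)\Lambda\gamma^{(s)}$ to $(\beta/\alpha)X^T\Lambda\gamma^{(s)}$, while leaving the $(B,B)$ block unchanged; the max-ent property is automatic because the transfer functional convolves with a Gaussian and so maps Gaussians to Gaussians. The $(B,B)$ block of $\gamma^{\Gamma}_{AB} + i\lambdabar\Sigma_{AB}$ is manifestly positive, so by \eqref{eq:partitionedmatrix} the validity of $\mu^{\Gamma}_{AB}$ is equivalent to the Schur-complement inequality
\[
X^T\gamma^{(s)}X + Y + i\lambdabar\Sigma_A \;\geq\; \tfrac{\beta^2}{\alpha^2}\, X^T\Lambda\gamma^{(s)}\bigl(\gamma^{(s)}+i\lambdabar\Sigma_A\bigr)^{-1}\gamma^{(s)}\Lambda X.
\]
A direct $2\times 2$ calculation gives $\gamma^{(s)}(\gamma^{(s)}+i\lambdabar\Sigma_A)^{-1}\gamma^{(s)} = \tfrac{\alpha^2}{\alpha^2-1}(\gamma^{(s)}-i\lambdabar\Sigma_A)$, and conjugating by $\Lambda$ flips the sign of $\Sigma_A$ (since $\Lambda\Sigma_A\Lambda = -\Sigma_A$) to yield $\tfrac{\alpha^2}{\alpha^2-1}(\gamma^{(s)}+i\lambdabar\Sigma_A)$. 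Using $\alpha^2 - 1 = \beta^2$ the prefactor $\beta^2/(\alpha^2-1)$ collapses to unity, the $X^T\gamma^{(s)}X$ contributions on the two sides cancel exactly, and the Schur-complement condition reduces to $Y + i\lambdabar\Sigma_A - i\lambdabar X^T\Sigma_A X \geq 0$, which is Eq.~\eqref{eq:XYcondition}. Every step is reversible, so both directions of the lemma follow at once.

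The main obstacle I expect is bookkeeping: one must identify the coefficient $\beta/\alpha$ correctly in the cross-block (this depends on the factor-of-two convention in Eq.~\eqref{eq:CovarianceMatrixLiouville}) and check that the reduction to a single-mode $2\times 2$ problem really does cover the general $n$-mode case. Once these are settled the algebra is driven entirely by the saturation identity $\alpha^2 - \beta^2 = 1$, which reflects the fact that $\mu^{\mathrm{corr}}_{AB}$ sits at the boundary of the CUP constraint — geometrically, this is why the divergent pieces in $\gamma^{(s)}$ cancel and why the finite answer Eq.~\eqref{eq:XYcondition} is exactly what one would obtain by naively taking $s \to 0$ and formally composing.
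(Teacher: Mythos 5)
Your computation is, in substance, the paper's own proof: your regularized correlated state is exactly the paper's finite-squeezing parametrization (with $s=e^{-r}$ one has $\alpha=\cosh 2r$, $\beta=\sinh 2r$, so your $\gamma^{(s)}$ and $(\beta/\alpha)\Lambda\gamma^{(s)}$ are precisely the blocks $D_+(r)$ and $D_-(r)$ of Eq.~\eqref{eq:CJstateXY}); the Schur-complement step is the same invocation of Eq.~\eqref{eq:partitionedmatrix}; and your identity $\gamma^{(s)}\bigl(\gamma^{(s)}+i\lambdabar\Sigma_A\bigr)^{-1}\gamma^{(s)}=\tfrac{\alpha^2}{\alpha^2-1}\bigl(\gamma^{(s)}-i\lambdabar\Sigma_A\bigr)$, combined with $\Lambda\Sigma_A\Lambda=-\Sigma_A$ and $\alpha^2-\beta^2=1$, is exactly the paper's identity $D_+(r)-D_-(r)\bigl[D_+(r)+i\lambdabar\Sigma_B\bigr]^{-1}D_-(r)=-i\lambdabar\Sigma_A$. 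The inequality you land on is the complex conjugate of Eq.~\eqref{eq:XYcondition}, which is equivalent (the paper makes the same remark), and working at fixed finite $s$ rather than taking the limit at the end is immaterial because the equivalence is $s$-independent, which is also how the paper's argument works.

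The one genuine gap is in the ``if'' direction as you present it. Your derivation presupposes that the transformation already acts affinely on covariance matrices with some real $X,Y$ --- that assumption is what lets you write the transformed blocks as $X^T\gamma^{(s)}X+Y$ and $(\beta/\alpha)X^T\Lambda\gamma^{(s)}$ --- so ``every step is reversible'' only shows that, \emph{within that family}, validity of the Choi state is equivalent to Eq.~\eqref{eq:XYcondition}. The lemma's ``if'' direction, however, starts from an arbitrary transfer functional whose Choi state is valid and must conclude that it acts in the manner of Lemma~\ref{lemma:irreversible}. Closing this requires the surjectivity step the paper performs explicitly: given an arbitrary valid bipartite covariance matrix with blocks $\gamma_A$, $C$ and uniform marginal on $B$, solve the cross-block equation for $X$ (possible because $D_-(r)$, i.e.\ your $(\beta/\alpha)\Lambda\gamma^{(s)}$, is invertible at finite squeezing), set $Y=\gamma_A-CD_-(r)^{-1}D_+(r)D_-(r)^{-1}C^T$, and verify by the same algebra that Eq.~\eqref{eq:ArbitraryABcondition} for $(\gamma_A,C)$ is equivalent to Eq.~\eqref{eq:XYcondition} for $(X,Y)$; the given transformation is then identified with this one through the Choi-type isomorphism underwritten by Lemma~\ref{lemma:swap}, so it does act on covariance matrices in the prescribed way. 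Your setup contains every ingredient for this (the invertibility of the cross block is manifest, and the max-ent/Gaussianity bookkeeping you mention is fine), but the step is absent from the write-up, and without it the biconditional is established only for transformations assumed a priori to be of the required form.
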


\begin{proof}
Necessity is trivial to prove.  By lemma~\ref{lemma:irreversible}, the transformation of interest corresponds to coupling to an ancilla in a valid state by a linear symplectic transformation and marginalizing over the ancilla.  Given that the perfectly correlated state is a valid state, and given that every part of this transformation clearly keeps one within the set of valid states, the final bipartite epistemic state will be valid.

It remains to prove sufficiency. We assume that the transfer functional $\Gamma_A$ satisfies the isomorphism property, that is, that the bipartite epistemic state resulting from the transformation acting on half of the perfectly correlated state, denoted $\mu^{\Gamma}_{AB}$ and having a uniform marginal on $B$, is valid. The latter is described by its means $\mathbf{d}_A$ on $A$ (the means on $B$ are not well-defined because the distribution is uniform on $B$) and by its covariance matrix
\begin{equation}
  \label{eq:ArbitraryAB}
  \gamma_{AB} = \begin{pmatrix} \gamma_{A} & C \\ C^T & \gamma^{\rm uniform}_B \end{pmatrix} \,.
\end{equation}
 The assumption that $\mu^{\Gamma}_{AB}$ is a valid epistemic state places no restriction on $\mathbf{d}_A$, but it does restrict $\gamma_{AB}$ to satisfy the CUP condition, $\gamma_{AB} + i\lambdabar \Sigma_{AB} \geq 0$.
Making use of the result from linear algebra described in Eq.~\eqref{eq:partitionedmatrix}, we infer that the CUP condition on $\gamma_{AB}$ is equivalent to the condition
\begin{equation}
  \label{eq:ArbitraryABcondition}
  \gamma_{A} + i\lambdabar \Sigma_{A} \geq C[\gamma^{\rm uniform}_B+i\lambdabar \Sigma_B]^{-1} C^T \,.
\end{equation}
This alternative form will be useful in what follows.

We need to show that $\mu^{\Gamma}_{AB}$ being a valid epistemic state implies that $\Gamma_A$ acts in the manner described in Lemma~\ref{lemma:irreversible}.  It suffices to show that \emph{every} valid epistemic state on $AB$ can be obtained from the perfectly correlated epistemic state by \emph{some} transfer functional of this form.


First, we consider the covariance matrix of the perfectly correlated state $\mu_{AB}^{\mathrm{corr}}$.  This state is only defined as the limit of a squeezed Gaussian state, as in Eq.~(\ref{eq:corr}), and so in the following argument we consider finite squeezing throughout, and only take the limit in the final stages of our argument.  It is convenient to change our squeezing parameter $s$, for which $s\to 0$ is the desired limit, to be reparametrised as $s=\exp(-r)$, and thus $r \to \infty$ is the desired limit.  With this substitution, it is straightforward to show that the covariance matrix $\gamma^{\mathrm{corr}}_{AB}$ of the perfectly correlated state $\mu_{AB}^{\mathrm{corr}}$ is the $r\to \infty$ limit of
\begin{equation}
  \gamma^{\mathrm{corr}}_{AB}(r) = \begin{pmatrix} D_+(r) & D_-(r) \\ D_-(r) & D_+(r) \end{pmatrix} \,,
\end{equation}
where $D_{\pm}(r)$ are diagonal real matrices defined by
\begin{align}
  D_+(r) &= \cosh(2r)\, \text{diag}(1,\lambdabar^2,1,\lambdabar^2,\ldots)\,, \\
  D_-(r) &= \sinh(2r)\, \text{diag}(1,-\lambdabar^2,1,-\lambdabar^2,\ldots)\,.
\end{align}
We note that the marginals on $A$ and $B$ have covariance matrix $D_+(r)$ which, as $r \to \infty$, is the uniform distribution (as a limit of a Gaussian).

By acting on system $A$ of the pair of systems $AB$, initially described by the perfectly correlated state $\mu_{AB}^{\mathrm{corr}}$, with a general transformation of the form described in Lemma~\ref{lemma:irreversible}, the resulting state $\mu_{AB}$ has covariance matrix
\begin{equation}
  \label{eq:CJstateXY}
  \gamma_{AB}(r) = \begin{pmatrix} X^T D_+(r) X + Y & X^T D_-(r) \\ D_-(r)X & D_+(r) \end{pmatrix} \,.
\end{equation}
where the matrices $X$ and $Y$ must satisfy Eq.~(\ref{eq:XYcondition}).  We then wish to show that $X$ and $Y$ can be chosen such as to produce any state of the form of Eq.~(\ref{eq:ArbitraryAB}) satisfying Eq.~(\ref{eq:ArbitraryABcondition}).  We want $X$ and $Y$ such that
\begin{align}
  \label{eq:CfunctXY}
  C &= X^T D_-(r) \,, \\
  \label{eq:gammafunctXY}
  \gamma_{A} &= X^T D_+(r) X + Y \,.
\end{align}
As $D_-(r)$ is invertible, we can choose
\begin{align}
  X &= D_-(r)^{-1} C^T \,, \\
  Y &= \gamma_{A} - C D_-(r)^{-1} D_+(r) D_-(r)^{-1} C^T \,.
\end{align}
It remains to be shown whether $X$ and $Y$ can be chosen as such, because they are constrained to satisfy the condition given by Eq.~(\ref{eq:XYcondition}).  We now show that they can, by demonstrating that the condition given by Eq.~(\ref{eq:XYcondition}) is equivalent to the condition given by Eq.~(\ref{eq:ArbitraryABcondition}).

Recall that $\gamma^{\rm uniform}_B = \lim_{r\to\infty} D_+(r)$.  We will substitute $D_+(r)$ for $\gamma^{\rm uniform}_B$ in Eq.~(\ref{eq:ArbitraryABcondition}), and take the $r\to\infty$ limit at the final step.
With the substitutions given by Eqs.~(\ref{eq:CfunctXY}-\ref{eq:gammafunctXY}), the condition of Eq.~(\ref{eq:ArbitraryABcondition}) is expressed as
\begin{multline}
  X^T D_+(r) X + Y + i \lambdabar \Sigma_A \\
  \geq X^T D_-(r) [D_+(r) +i\lambdabar \Sigma_B]^{-1} D_-(r) X \,.
\end{multline}
Rearranging gives
\begin{multline}
  Y \geq -i\lambdabar \Sigma_A \\
  - X^T \bigl( D_+(r) - D_-(r)\bigl[ D_+(r) + i\lambdabar \Sigma_B \bigr]^{-1} D_-(r) \bigr) X \,.
\end{multline}
We then make use of the following identity, which holds for all $r$:
\begin{multline}
  D_+(r) - D_-(r)\bigl[ D_+(r) + i\lambdabar \Sigma_B \bigr]^{-1} D_-(r) \\
  = - i \lambdabar \Sigma_A \,.
\end{multline}
We thereby obtain
\begin{equation}
  Y \geq -i\lambdabar \Sigma_A + i \lambdabar X^T \Sigma_A X \,.
\end{equation}
By taking the complex conjugate of this equation, we recover Eq.~(\ref{eq:XYcondition}).  Therefore, in the limit $r \to \infty$, where $D_+(r)$ becomes the uniform distribution $\gamma^{\rm uniform}_B$, we have proved the equivalence of the conditions of Eqs.~(\ref{eq:ArbitraryABcondition}) and (\ref{eq:XYcondition}).
\end{proof}

As a final comment on transformations, note that we do not need to separately specify how the epistemic state of a system updates as the result of a measurement in ERL mechanics; this follows from cases we have already considered.  By the assumption that ERL mechanics is just classical mechanics with an epistemic restriction, every measurement must be understandable as adjoining some degrees of freedom of an apparatus to the system, coupling these by a linear symplectic transformation, followed by acquiring information about the apparatus. For any valid set of indicator functions on the system, one can achieve a measurement associated with this set by such a procedure.  The argument follows a similar logic to the ERL-mechanical analogue of von Neumann's dynamical analysis of measurement, presented in Sec.~\ref{sec:motilityofthecut}.  The update needn't always be analogous to the projection postulate however.  The manner in which the epistemic state of the system updates depends on the particular manner in which the measurement is implemented. (This is analogous to how, in quantum theory, there are many state update rules associated with a given POVM; even for projective measurements the projection postulate is just one of the possibilities.)  We do not need to consider this situation afresh because both components of the process have been considered already: how the bipartite epistemic state of a pair of systems transforms under a linear symplectic transformation, and how the epistemic state of a system updates as a result of a measurement on another system with which it is correlated.

\subsection{Operational formulation of Gaussian quantum mechanics}
\label{sec:GaussianQM}

We review the Wigner representation, and then proceed to define Gaussian quantum mechanics.  For further reading on these topics, see Ref.~\cite{GardinerZoller,Wee11}.

\subsubsection{The Wigner representation}
\label{sec:Wigner}

In the Wigner representation~\cite{Wig32,GardinerZoller}, quantum states are represented as real-valued functions over phase-space that integrate to unity.  Specifically, for a system of $n$ canonical degrees of freedom and following the notation of Sec.~\ref{sec:QuantumIntro}, the Wigner representation for a quantum state $\rho$ is
\begin{equation}
  W_{\rho}(\mathbf{z})=\mathrm{Tr}(\rho A_{\mathbf{z}})\,,
\end{equation}
where
\begin{equation}
  A_{\mathbf{z}}=\bigotimes_{i=1}^{n}A_{z_{i}}\,,
\end{equation}
and
\begin{equation}
  A_{z_{i}}= \frac{1}{\pi \hbar} \int {\rm d}y\, e^{-ip_{i}y/\hbar} | q_{i}-
  \tfrac{1}{2}y \rangle \langle q_{i}+\tfrac{1}{2}y| \,,
\end{equation}
and $|q\rangle$ is the position eigenstate.  We note that these operators satisfy $\mathrm{Tr}(A_{\mathbf{z}})=\frac{1}{\pi \hbar}$.

The operators $A_{z_i}$ satisfy the identity
\begin{equation}
  \label{eq:HSinnerproductinWigrep}
  \mathrm{Tr}\left( AB\right) = (\pi \hbar)^n \int \mathrm{d}\mathbf{z}\,\mathrm{Tr}\left( AA_{
  \mathbf{z}}\right) \mathrm{Tr}(BA_{\mathbf{z}})\,.
\end{equation}
This identity follows from the fact that the $A_{\mathbf{z}}$,
considered as vectors in the Hilbert-Schmidt operator space, form a
resolution of unity.  Denoting $\mathrm{Tr}(AB)$ as an inner
product on the Hilbert-Schmidt operator space, $\left\langle A|B\right\rangle$, we have simply used the
fact that $\left\langle A\right\vert (\int \mathrm{d}\mathbf{z}\left\vert A_{
\mathbf{z}}\right\rangle \left\langle A_{\mathbf{z}}\right\vert)
\left\vert B\right\rangle =\left\langle A|B\right\rangle$.

The Wigner representation of a Hermitian operator $O$ is the real-valued function
$W_{O}(\mathbf{z})=\mathrm{Tr}(O A_{\mathbf{z}}),$ and the expectation value of $O$ in state $\rho$ is recovered by the Euclidean inner product of the Wigner representations of $O$ and $\rho$,
\begin{align}
  \label{eq:WignerObservables}
  (\pi \hbar)^n \int \mathrm{d}\mathbf{z}\,W_{\rho}(\mathbf{z})W_{O}(\mathbf{z})
  &=(\pi \hbar)^n \int \mathrm{d}\mathbf{z}\,\mathrm{Tr}(\rho A_{\mathbf{z}})
  \mathrm{Tr}(OA_{\mathbf{z}}) \nonumber \\
  &=\mathrm{Tr}(\rho O)\,,
\end{align}
where we have used the identity~(\ref{eq:HSinnerproductinWigrep}).

The most general measurement allowed by quantum theory, associated
with a POVM $\{E_{\mathbf{y}}\}$, also admits a Wigner representation as a
set of real-valued functions over phase space that sum to the uniform measure over the phase space.  Specifically, we have
\begin{equation}
  W_{E_{\mathbf{y}}}(\mathbf{z})=(\pi \hbar)^n \mathrm{Tr}(E_{\mathbf{y}} A_{\mathbf{z}})\,,
\end{equation}
which, given that $\int \mathrm{d}{\mathbf{y}} E_{\mathbf{y}}=I$, implies that
\begin{equation}
  \int \mathrm{d}{\mathbf{y}} W_{E_{\mathbf{y}}}(\mathbf{z})=(\pi \hbar)^n\mathrm{Tr}(A_{\mathbf{z}})=1\,.
\end{equation}
The general form of the Born rule, which asserts that given a
preparation associated with quantum state $\rho$, and a
measurement associated
with POVM $\{E_{\mathbf{y}}\}$ the probability density for outcome ${\mathbf{y}}$, $\textrm{Tr}(\rho E_{\mathbf{y}})$, is recovered in the Wigner representation as the Euclidean inner product of the Wigner representation of the quantum
state with that of the POVM element associated with $\mathbf{y}$,
\begin{equation}
  \int \mathrm{d}\mathbf{z}\,W_{\rho}(\mathbf{z})W_{E_{\mathbf{y}}}(\mathbf{z})=
  \mathrm{Tr}(\rho E_{\mathbf{y}})\,,  \label{eq:Bornrule}
\end{equation}
where we have again used Eq.~(\ref{eq:HSinnerproductinWigrep}).

For certain unitary operations, such as displacement and squeezing
operations, it is well-known how to determine their effect
within the Wigner representation.
However, we must consider how the most general transformation, associated
with a completely-positive trace-nonincreasing linear map
$\mathcal{E}$, is represented in the Wigner representation.  (To our knowledge, this result has
not previously been made explicit in the literature on the Wigner representation).  Such a
transformation can be represented by a real-valued function over two copies of the phase space, denoted
 $W_{\mathcal{E}}(\mathbf{z}|\mathbf{z}')$, which satisfies
\begin{equation}
  W_{\mathcal{E}(\rho)}(\mathbf{z})=\int
  \mathrm{d}\mathbf{z}'\, W_{\rho}(\mathbf{z}')W_{\mathcal{E}}(\mathbf{z}|\mathbf{z}')\,,
  \label{eq:Wigneractionofmap}
\end{equation}
and has the form
\begin{equation}
  W_{\mathcal{E}}(\mathbf{z}|\mathbf{z}')=(\pi \hbar)^n\mathrm{Tr}\left( A_{\mathbf{z}}\mathcal{E}(A_{\mathbf{z}'})\right)\,.
\end{equation}
This result can be proved as follows.  Using the identity of Eq.~(\ref{eq:HSinnerproductinWigrep}) and the definition of the Hermitian adjoint $\mathcal{E}^{\dag}$ of $\mathcal{E}$, namely, $\mathrm{Tr}(\mathcal{E}(A)B)=\mathrm{Tr}(A\mathcal{E}^{\dag}(B))$, we can infer that
\begin{align}
  W_{\mathcal{E}(\rho)}(\mathbf{z})
  & =\mathrm{Tr}\left( \mathcal{E}(\rho)A_{\mathbf{z}}\right)  \\
  & =\mathrm{Tr}\left( \rho\mathcal{E}^{\dag }(A_{\mathbf{z}})\right)  \\
  & =(\pi \hbar)^n\int \mathrm{d}\mathbf{z}\,\mathrm{Tr}\left( \rho A_{\mathbf{z}'}\right)
  \mathrm{Tr}\left( \mathcal{E}^{\dagger}(A_{\mathbf{z}})A_{\mathbf{z}'}\right)  \\
  & =(\pi \hbar)^n\int \mathrm{d}\mathbf{z}'\, W_{\rho}(\mathbf{z}')
  \mathrm{Tr}\left( A_{\mathbf{z}}\mathcal{E}(A_{\mathbf{z}'})\right) \,.
\end{align}
Eq.~(\ref{eq:Wigneractionofmap}) then follows.

\subsubsection{Gaussian quantum mechanics}

We define Gaussian states, measurements and transformations in terms of their Wigner representations. The Gaussian \emph{states} are the $\rho$ for which
\begin{equation}
  W_{\rho}(\mathbf{z})=W_{\rho}(0)e^{-\frac{1}{2}(\mathbf{z}-
  \mathbf{d})^{T}\gamma ^{-1}(\mathbf{z}-\mathbf{d})}\,,
\end{equation}
where $\gamma$ is the covariance matrix of $\rho$ and $\mathbf{d}$ is the vector of its means.\footnote{It is more common to define the Gaussian quantum states as those for which the Wigner-characteristic function is Gaussian, but because the latter is the Fourier transform of the Wigner representation of the state, the two definitions are equivalent.}  Gaussian states have Wigner functions that are positive everywhere; note that the only pure quantum states with positive Wigner functions are the Gaussian pure states~\cite{Hud74}.  Next, we can define the Gaussian measurements as those which, implemented on one half of a
system in a Gaussian state, necessarily leave the other half in a Gaussian state as well.  We can also define the Gaussian operations as those which implemented on a system or part of a system in a Gaussian state, take the system to another Gaussian state.  Doing so, one finds that the Gaussian measurements and transformations are those whose Wigner representations are Gaussian.  Specifically, the Gaussian measurements are the POVMs $\{E_{\mathbf{y}}\}$ for which we have
\begin{equation}
  W_{E_{\mathbf{y}}}(\mathbf{z})=W_{E_{\mathbf{y}}}(0)e^{-\frac{1}{2}(\mathbf{z}-\mathbf{d}_{\mathbf{y}})^{T}\gamma _{\mathbf{y}}^{-1}(\mathbf{z}-\mathbf{d}_{\mathbf{y}})}\qquad \forall\ \mathbf{y}\,,
\end{equation}
where $\gamma _{\mathbf{y}}$ is the covariance matrix of $E_{\mathbf{y}}$ and $\mathbf{d}_{\mathbf{y}}$ is the vector of its means, and for which $\int \mathrm{d}\mathbf{y} W_{E_{\mathbf{y}}}(\mathbf{z}) =1$ for all $\mathbf{z}$.
The Gaussian transformations are the CP maps $\mathcal{E}$ for which we have
\begin{equation}
  W_{\mathcal{E}}(\mathbf{z}|\mathbf{z}^{\prime })=W_{\mathcal{E}}(0|0)e^{-\frac{1}{4}\mathbf{z}^{\prime \prime T}\gamma _{\mathcal{E}}^{-1}\mathbf{z}^{\prime \prime }+d_{\mathcal{E}}^{T}\mathbf{z}^{\prime \prime }}
\end{equation}
where $\mathbf{z}'' \equiv (\mathbf{z},\mathbf{z}')$.

\subsection{Proof of Equivalence}
\label{sec:proofmain}

We can now provide the proof of theorem \ref{thm:Gaussian}.

Note first that the Wigner representation of a Gaussian state
can be interpreted as a probability distribution on phase-space.
This is because it is both positive, by virtue of the fact that a
Gaussian distribution is positive, and normalized to unity, which
one verifies by noting that $\int \mathrm{d}\mathbf{z}\, W_{\rho}(\mathbf{z})=\mathrm{Tr}(\rho\int
\mathrm{d}\mathbf{z}\,A_{\mathbf{z}})=\mathrm{Tr}(\rho I)=1$.

Note further that the Wigner representation $\{W_{E_{\mathbf{y}}}(\mathbf{z})\}$ of a POVM $\{E_{\mathbf{y}}\}$ can be interpreted
as a set of conditional probabilities for the outcome to lie within $\textrm{d}\mathbf{y}$ of ${\mathbf{y}}$ given that the ontic state is $\mathbf{z}$.  Again, positivity follows from Gaussianity. The fact that the $W_{E_{\mathbf{y}}}(\mathbf{z})$ form a probability density over $\mathbf{y}$ for all $\mathbf{z}$ becomes evident when one notes that $\int \textrm{d}\mathbf{y} W_{E_{\mathbf{y}}}(\mathbf{z})=(\pi \hbar)^n\int \textrm{d}\mathbf{y} \mathrm{Tr}(A_{\mathbf{z}}E_{\mathbf{y}})= (\pi \hbar)^n\mathrm{Tr}(A_{\mathbf{z}}I)=1$ for all $\mathbf{z}.$

Furthermore, it follows from Eq.~(\ref{eq:Bornrule}) that one can interpret the probability density of obtaining outcome $\mathbf{y}$ in a measurement associated with a Gaussian POVM $\{E_{\mathbf{y}} \}$ upon a Gaussian state $\rho$ as the probability density of obtaining outcome $\mathbf{y}$ given $\mathbf{z}$ weighted by the probability density of $\mathbf{z}$.  In other words, the preparation procedure associated with a Gaussian state $\rho$ can be understood as the preparation of a system at some unknown point $\mathbf{z}$ in phase space, with probability distribution $W_{\rho}(\mathbf{z})$, and the measurement procedure associated with a Gaussian POVM $\{E_{\mathbf{y}}\}$ can be understood as revealing information about $\mathbf{z}$ by the fact that different $\mathbf{z}$ may vary in the probability densities they assign to the different outcomes.

Finally, we can interpret the Wigner representation $W_{\mathcal{E}}(\mathbf{z}|\mathbf{z}')$ of a Gaussian trace-preserving CP map $\mathcal{E}$ as a conditional probability of $\mathbf{z}$ given $\mathbf{z}'$ (thereby justifying the choice of notation).  Positivity of $W_{\mathcal{E}}(\mathbf{z},\mathbf{z}')$ follows from its Gaussianity, and the fact that $\int \mathrm{d}\mathbf{z}\, W_{\mathcal{E}}(\mathbf{z},\mathbf{z}')=1$ for all $\mathbf{z}'$ is verified by noting that $\int \mathrm{d}\mathbf{z}\, \mathrm{Tr}\left( A_{\mathbf{z}}\mathcal{E}(A_{\mathbf{z}^{\prime }})\right) =\mathrm{Tr}\left(
\mathcal{E}(A_{\mathbf{z}^{\prime }})\right) =\mathrm{Tr}\left( A_{\mathbf{z}^{\prime }}\right) =1$ where the second identity is due to the assumption that $\mathcal{E}$ is trace-preserving.  Thus, Eq.~(\ref{eq:Wigneractionofmap}) can be interpreted as follows.  If the initial distribution over phase space is $W_{\rho}(\mathbf{z}'),$ and the probability of $\mathbf{z}'$ being mapped to $\mathbf{z}$ is $W_{\mathcal{E}}(\mathbf{z},\mathbf{z}^{\prime }),$ then the final distribution over phase space is $\int\mathrm{d}\mathbf{z}'\, W_{\mathcal{E}}(\mathbf{z},\mathbf{z}')W_{\rho}(\mathbf{z}')$.

We have seen, therefore, that the Wigner representation of Gaussian quantum mechanics yields the same sorts of descriptions of preparations, measurements and transformations that one finds in Liouville mechanics.  But are they precisely the subset picked out by our epistemic constraint?  Yes.  To demonstrate this, we need only show that the Wigner representations satisfy the conditions of the classical uncertainty principle (with $\lambdabar$ replaced with $\hbar$), that is, the conditions implied by demanding that the phase-space distributions satisfy the classical version of the Heisenberg uncertainty relation.

From Eq.~(\ref{eq:CanonicalUncertaintyRelations}), we have that any quantum state $\rho$ has a covariance matrix $\gamma(\rho)$ that satisfies the uncertainty relation $\gamma (\rho)+i\hbar \Sigma \geq 0$.  To relate this result to the Wigner function, we require the following lemma:

\begin{lemma}
The covariance matrix $\gamma(\rho)$ of a quantum state $\rho$ (defined in terms of quantum expectation values $\left\langle f\right\rangle_{\rho}$) is equal to the covariance matrix $\gamma (W_{\rho})$ of its Wigner function $W_{\rho}$, considered as a function over phase-space (defined in terms of classical expectation values $\left\langle f\right\rangle_{W_{\rho}}=\int \mathrm{d}\mathbf{z}\,W_{\rho}(\mathbf{z})f(\mathbf{z})$).
\end{lemma}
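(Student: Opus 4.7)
The plan is to use the Wigner representation identity $\mathrm{Tr}(\rho O) = (\pi\hbar)^n \int \mathrm{d}\mathbf{z}\, W_{\rho}(\mathbf{z}) W_O(\mathbf{z})$, already established as Eq.~\eqref{eq:WignerObservables}, with $O$ taken to be the first- and second-order canonical operators that appear in the definition of $\gamma(\rho)$. Comparing the two covariance matrices then reduces to computing the Wigner symbols of $\hat{z}_i$ and of the symmetrized product $\tfrac{1}{2}(\hat{z}_i \hat{z}_j + \hat{z}_j \hat{z}_i)$, and recognizing that the second alternative form in Eq.~\eqref{eq:CovarianceMatrix} involves precisely the anticommutator.

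First I would show that the means agree. Using the explicit form of $A_{z_i}$ given in Sec.~\ref{sec:Wigner}, a direct calculation (the Dirac delta arising from the $y$-integral forces the term linear in $y$ to vanish) yields $W_{\hat{z}_i}(\mathbf{z}) = z_i/(\pi\hbar)^n$. Inserting this into Eq.~\eqref{eq:WignerObservables} gives $\mean_i(\rho) = \mathrm{Tr}(\rho \hat{z}_i) = \int \mathrm{d}\mathbf{z}\, W_{\rho}(\mathbf{z})\, z_i = \mean_i(W_{\rho})$.

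The main step is the analogous result for the quadratic monomials. I would compute $W_O(\mathbf{z})$ for $O = \tfrac{1}{2}(\hat{z}_i \hat{z}_j + \hat{z}_j \hat{z}_i)$ directly from the definition of $A_{\mathbf{z}}$, which requires handling the cases $\hat{q}_i\hat{q}_j$, $\hat{p}_i\hat{p}_j$, and the mixed $\hat{q}_i\hat{p}_j + \hat{p}_j\hat{q}_i$ (the only nontrivial one, where the cross term coming from the derivative with respect to $y$ is cancelled by the symmetrization). The outcome is the Weyl-symmetric correspondence $W_{\mathrm{sym}(\hat{z}_i\hat{z}_j)}(\mathbf{z}) = z_i z_j/(\pi\hbar)^n$, so that Eq.~\eqref{eq:WignerObservables} yields
\begin{equation}
\tfrac{1}{2}\mathrm{Tr}\bigl(\rho(\hat{z}_i\hat{z}_j+\hat{z}_j\hat{z}_i)\bigr) = \int \mathrm{d}\mathbf{z}\, W_{\rho}(\mathbf{z})\, z_i z_j = \langle z_i z_j\rangle_{W_{\rho}}.
\end{equation}
This is the main technical obstacle, since one must carry out the $y$-integral carefully to see the symmetric ordering emerge.

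Finally, I would close the argument by observing that for Hermitian $\hat{z}_i,\hat{z}_j$ and a density operator $\rho$, $\mathrm{Re}\,\mathrm{Tr}(\rho\hat{z}_i\hat{z}_j) = \tfrac{1}{2}\mathrm{Tr}(\rho(\hat{z}_i\hat{z}_j+\hat{z}_j\hat{z}_i))$, so the second line of Eq.~\eqref{eq:CovarianceMatrix} becomes
\begin{equation}
\gamma_{ij}(\rho) = \mathrm{Tr}\bigl(\rho\{\hat{z}_i-\mean_i,\hat{z}_j-\mean_j\}\bigr) = 2\langle (z_i-\mean_i)(z_j-\mean_j)\rangle_{W_{\rho}},
\end{equation}
which is exactly $\gamma_{ij}(W_{\rho})$ by Eq.~\eqref{eq:CovarianceMatrixLiouville}, using that the means agree. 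This completes the identification $\gamma(\rho) = \gamma(W_{\rho})$, which then allows the quantum uncertainty relation $\gamma(\rho)+i\hbar\Sigma\ge 0$ to be transcribed into the CUP constraint on $W_{\rho}$ and drives the proof of Theorem~\ref{thm:Gaussian}.
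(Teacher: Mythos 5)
Your proposal is correct and takes essentially the same approach as the paper's proof: both reduce the claim to the Wigner symbols of $\hat{z}_i$ and of the symmetrized product $\hat{z}_i\hat{z}_j+\hat{z}_j\hat{z}_i$ obtained by direct evaluation in Eq.~(\ref{eq:WignerObservables}), combined with the observation that $\gamma_{ij}(\rho)$ as defined in Eq.~(\ref{eq:CovarianceMatrix}) is exactly the anticommutator (symmetrically ordered) second moment. As a minor point, your normalization $W_{\hat{z}_i}(\mathbf{z})=z_i/(\pi\hbar)^n$ is the one consistent with the paper's convention $W_O(\mathbf{z})=\mathrm{Tr}(OA_{\mathbf{z}})$, so your bookkeeping of the $(\pi\hbar)^n$ factors is sound.
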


\begin{proof}
All moments of the Wigner function are given by the expectation values of symmetrically-ordered products of the canonical operators.  See, for example, Ref.~\cite{GardinerZoller} for a proof; here, we reproduce this result in detail for the first two moments.  Recall the definition of the covariance matrix of $\rho$, Eq.~(\ref{eq:CovarianceMatrix}).  We wish to rewrite this in
the Wigner representation.  First, note that
\begin{align}
  \gamma (\rho) &= 2\mathrm{Re}\,\mathrm{Tr}\left( \rho(\hat{z}_{i}-\xi _{i})(\hat{z}_{j}-\xi _{j})\right)
  \nonumber \\
  &=\,\mathrm{Tr}\left( \rho\left[ (\hat{z}_{i}-\xi _{i})(\hat{z}_{j}-\xi _{j})
  +(\hat{z}_{j}-\xi _{j})(\hat{z}_{i}-\xi _{i})\right]\right)\, .
\end{align}
The Wigner representations of quadratic observables are
\begin{align}
  W_{\hat{z}_{i}}(\mathbf{z}) &= z_{i}, \\
  W_{\hat{z}_{i}\hat{z}_{j}+\hat{z}_{j}\hat{z}_{i}}(\mathbf{z})
  &=2z_{i}z_{j}\,.
\end{align}
This follows from direct evaluation of the Gaussian integrals in Eq.~(\ref{eq:WignerObservables}).

Thus, first-order moments of $\rho$ coincide with those of $W_{\rho}(\mathbf{z})$,
\begin{equation}
  \xi_i =\left\langle \hat{z}_{i}\right\rangle _{\rho}
  =\int \mathrm{d}\mathbf{z}\,z_{i}W_{\rho}(\mathbf{z})
  =\left\langle \hat{z}_{i}\right\rangle_{W_{\rho}(\mathbf{z})} \,,
\end{equation}
and the covariance matrix of the quantum state $\rho$ coincides with
that of the phase-space function $W_{\rho}(\mathbf{z})$,
\begin{align}
  \gamma (\rho) &=2\int \mathrm{d}\mathbf{z}(z_{i}-\xi_{i})(z_{j}-\xi_{j})W_{\rho}(\mathbf{z}) \nonumber \\
  &=\gamma (W_{\rho})\,.
\end{align}
\end{proof}

Therefore, the Wigner representations of Gaussian states satisfy the classical uncertainty principle, with $\lambdabar = \hbar$.  All that remains is to show that the Wigner representations of Gaussian measurements and transformations coincide with the phase-space representations of measurements and transformations in ERL mechanics.

Any POVM element $E$, when normalized, can be viewed as a density operator.  Thus, the covariance matrix of a normalized POVM element, $\gamma (E/\mathrm{Tr}(E))$, satisfies the Heisenberg uncertainty relation, $\gamma (E/\mathrm{Tr}(E))+i\hbar \Sigma \geq 0$.  This implies that the normalized Wigner function $W_{E}(\mathbf{z})/|W_{E}(\mathbf{z})|$ where $|W_{E}(\mathbf{z})|=\int \mathrm{d}\mathbf{z}'\,W_{E}(\mathbf{z})$ satisfies the classical uncertainty relation, and thus $W_{E}(\mathbf{z})$ is a valid indicator function.

\begin{equation}
W_{\Psi^{\text{corr}}}(\mathbf{z}_A,\mathbf{z}_B) \propto \prod_i \delta(q_{iA}-q_{iB})\delta(p_{iA}+p_{iB})\,,
\end{equation}
corresponds to a Gaussian state $| \Psi^{\text{corr}} \rangle \langle \Psi^{\text{corr}}|$.  Necessity follows from the Choi-Jamiolkowski isomorphism in Gaussian quantum mechanics, which ensures that transformations $\mathcal{E}$ that lead to a Gaussian bipartite state when acting on the perfectly correlated state in $\mathcal{M} \times \mathcal{M}$,
\begin{equation}
  \rho =(\mathcal{E}\otimes \mathcal{I}) | \Psi^{\text{corr}} \rangle \langle \Psi^{\text{corr}}|\,,
\end{equation}
are necessarily Gaussian transformations.

With this, we have proved Theorem~\ref{thm:Gaussian}.

\section{Conclusions}

In the introduction, we emphasized that ERL mechanics can reproduce a large number of quantum phenomena. We have explained at length how it does so for several important examples.  These phenomena can therefore be understood intuitively in terms of a simple story about uncertainty in a classical world. Given that ERL mechanics is operationally equivalent to Gaussian quantum mechanics, if a phenomena exists in Gaussian quantum mechanics, then we are assured that it exists within ERL mechanics and that such a story can be provided, even if we do not bother to extract it from the formalism.  Therefore, to know the explanatory scope of ERL mechanics,  it suffices to determine which quantum phenomena are found within Gaussian quantum mechanics.  Fortunately, much work has already been done in determining what aspects of quantum theory, in particular, what aspects of quantum information theory, are present in Gaussian quantum mechanics, and so we simply refer the reader to this work.  The list includes: basic phenomena of quantum theory such as the no-cloning theorem~\cite{Cer00}, the Einstein-Podolsky-Rosen effect~\cite{Reid09} and quantum teleportation~\cite{Bra98}; information-processing tasks such as dense coding~\cite{Bra00}, quantum key distribution~\cite{Ral00}, and quantum error correction~\cite{Bra98b}; and many aspects of entanglement theory~\cite{Gie03,Eis03}.  For a review of the subject of information theory using continuous-variable systems and Gaussian quantum mechanics, see Ref.~\cite{Bra05,Wee11}.

The classical theory that we have used as our starting point has been particle mechanics. However, we could have equally well considered any degrees of freedom described by canonical coordinates on a symplectic vector space.  In particular, we could have considered fields.  Indeed, the most significant application of Gaussian quantum mechanics is to quantum optics.  One can interpret the theory proposed here as a classical statistical theory of optics with an epistemic restriction which is operationally equivalent to the subtheory of quantum optics which consists of Gaussian states, measurements and transformations.  The set of Gaussian states is the set of all coherent states (including the vacuum state), all squeezed states (including quadrature eigenstates) and all multimode versions of these.  The Gaussian transformations are those that can be achieved using the standard toolkit of optical elements -- beam splitters, phase shifters and squeezers -- as well as linear attenuation and amplification.  The Gaussian measurements can all be constructed from a homodyne detection preceded by one of the above transformations\footnote{Note, however, that this does not include direct photodetection.}.  All experiments in quantum optics that make use of only these elements can therefore be furnished with an intuitive explanation in terms of statistical optics with an epistemic restriction; for instance, such a description of the quantum teleportation experiment of Ref.~\cite{Fur98} is provided in Ref.~\cite{Cav04}.


The explanatory scope of ERL mechanics adds further credibility to the research program wherein the quantum state is taken to be a representation of an agent's incomplete knowledge of reality rather than a representation of reality itself. A skeptic might challenge the notion that our results constitute interpretational progress on the grounds that the mystery of quantum theory has just been shifted to the question: why the epistemic restriction?  We have several responses to this charge.  First, any progress in reconstructing quantum theory from simple principles holds interpretational lessons, even if further elucidation and justification of the principles is required.  Second, and more importantly, we feel that the interpretation of Gaussian quantum mechanics in terms of ERL mechanics is more compelling than most competing interpretations, for instance, Everett's~\cite{Everett} or the one of de Broglie and Bohm~\cite{Bohmian}, because the latter interpretations are mathematically inspired -- they start from the mathematical formalism of quantum theory and attempt to tell an ontological story that does justice to this formalism, while the reconstruction provided here is conceptually inspired -- we start with a classical ontology that is conceptually unproblematic, add the conceptual innovation of an epistemic restriction, and \emph{derive} the mathematical formalism of Gaussian quantum mechanics.  Third, we feel that the approach described here succeeds at unscrambling  Jaynes' omelette of ontological and epistemological notions in a more satisfying fashion than other approaches.

Of course, although the length of the list of quantum phenomena that are reproduced by ERL mechanics is long, it is not complete. Neither ERL mechanics, nor any classical statistical theory with an epistemic restriction, can do justice to Bell's theorem or the Kochen-Specker theorem.  We must grow the list of such outstanding phenomena and focus upon them for it is these that will dictate what other conceptual innovations are required to reproduce the full quantum theory within this program.

The relation between Gaussian quantum mechanics and ERL mechanics is strongly analogous to the relation that exists between the stabilizer theory for qutrits \cite{Got01} and a classical statistical theory of trits with an epistemic restriction (trits are three-level classical systems and qutrits are three-level quantum systems). The latter sort of theory, which makes use of a classical phase space over a discrete field, has been developed in Refs.~\cite{SchreiberSpekunpublished,EdwardsCoeckeSpekkens}. The proof that it is operationally equivalent to the stabilizer theory for qutrits proceeds by showing that it reproduces the discrete Wigner representation for odd-dimensional systems that was proposed by Gross \cite{Gross} (which is positive on stabilizer states).  Just as it is well-known in quantum information circles that stabilizer states are the natural discrete analogues of Gaussian states, the classical statistical theory of trits with an epistemic restriction of Ref.~\cite{SchreiberSpekunpublished} is the natural discrete analogue of ERL mechanics.

One is naturally led to ask whether one can find a similar relation between the stabilizer theory for qu\emph{bits} and and a classical statistical theory of bits with an epistemic restriction.  The latter sort of theory has been developed by Ref.~\cite{Spe07} and is commonly known as the ``Spekkens Toy Theory''.  One finds that in this case the two theories in question are \emph{not} operationally equivalent.  Such inequivalence is inevitable because a classical theory with an epistemic restriction is \emph{by construction} a local noncontextual hidden variable theory and it is known that one can prove Bell's theorem and the Kochen-Specker theorem within the stabilizer theory of qubits (for instance, by using the GHZ version of Bell's theorem~\cite{GHZ}).  For the case of Gaussian quantum mechanics and the stabilizer theory of qu\emph{trits}, the fact that one \emph{can} reconstruct these from a restriction upon a classical statistical theory shows that one \emph{cannot} prove Bell's theorem or the Kochen-Specker theorem within these subtheories.

Another question that arises naturally is whether we might be able to find another epistemic restriction that yields a theory which is more comprehensive than ERL mechanics, that is, one that is operationally equivalent to a subtheory of quantum mechanics that has a larger scope than Gaussian quantum mechanics. 
As it turns out, if one demands that this larger subtheory \emph{includes} Gaussian quantum mechanics, then the question has a negative answer.  The reason is that there is no subtheory of quantum mechanics that is ``between'' Gaussian quantum mechanics and the full theory, so there is nothing to shoot for in such a reconstruction.  To be precise, it has been shown~\cite{Lloyd99,Bar02} that if one adds to the set of unitaries allowed in Gaussian quantum mechanics (those generated by quadratic Hamiltonians) even a \emph{single} unitary from outside this set and then closes under composition, one obtains \emph{all} unitaries. An analogous result is widely believed to hold (but to our knowledge has not been rigorously proven) for the stabilizer theory of qudits: if one adds \emph{any} additional unitary to those allowed within the stabilizer theory, commonly known as the \emph{Clifford group}, and closes under composition, one obtains all unitaries over the qudits. (This question can be rephrased in the language of quantum computation as a question about the universality of a gate set \cite{Bar95}.)  In other words, ``next stop: quantum theory''.

\begin{acknowledgments}
We thank Sarah Croke, Andrew Doherty, Matthew Palmer, Roberta Rodriquez and Olaf Schreiber for discussions, Robin Blume-Kohout for helping us to recognize the need for entropy maximization in the epistemic restriction, and Giulio Chiribella, Paolo Perinotti and Caslav Brukner for a useful discussion on the classical analogue of no-cloning.
SDB acknowledges support from the Australian Research Council.  Part of this work was completed
while RWS was at the University of Cambridge where he was supported by the Royal Society as an international research fellow.  Research at Perimeter Institute is supported in part by the Government of Canada through NSERC and by the Province of Ontario through MRI.  TR acknowledges the support of the UK Engineering and Physical Sciences Research Council.
\end{acknowledgments}

\appendix

\section{Motivation for the max-ent condition}
\label{sec:Whymaxent}

In Sec.~\ref{sec:EpistemicRestriction}, we noted that one of the reasons for incorporating the max-ent condition into the epistemic constraint is that without it, one obtains a much smaller set of valid measurements\footnote{We thank Robin Blume-Kohout for pointing this out to us.  The demonstration we provide is a modification of one that he suggested.}.  We are now in a position to see why this is the case.  Imagine a distribution over a single system of the form
\begin{multline}
\mu_{\textrm{test}}(q_A,p_A) \propto
( \frac{1}{2} G_{-q_0,\delta q} (q_A) +  \frac{1}{2} G_{q_0,\delta q} (q_A))
 \\ \nonumber
\times ( \frac{1}{2} G_{-p_0,\delta p} (p_A) +  \frac{1}{2} G_{p_0,\delta p} (p_A))  \\ \nonumber
\end{multline}
where $q_0 \gg \delta q \gg \lambdabar/p_0$, and $p_0 \gg \delta p \gg \lambdabar/q_0$.  This satisfies the CUP because the variances are $\Delta q_A \simeq q_0$ and $\Delta p_A \simeq p_0$, such that $\Delta q_A \Delta p_A \gg \lambdabar$ (there are no cross-correlations). However, because $\mu_{\textrm{test}}$ is not a multi-variate Gaussian it violates the max-ent condition.  We will be interested in the limiting case where $\delta q, \delta p \to 0$ and $q_0,p_0 \to \infty$.

If there is no max-ent condition, then for a pair of systems we would also have to allow a distribution of the form
\begin{multline}
\mu'_{\textrm{test}}(q_A,p_A,q_B,p_B) \\ \nonumber
\propto \mu_{\textrm{test}}(q_A,p_A)
G_{0,\delta q} (q_A-q_B ) G_{0,\delta p} (p_A + p_B )
\end{multline}
Note that $q_A$ and $q_B$ are strongly correlated (positively) if $\delta q$ is small and $p_A$ and $p_B$ are strongly correlated (negatively) if $\delta p$ is small. The correlation is perfect in the limit that $\delta q, \delta p \to 0$. The distribution $\mu'_{\textrm{test}}$ also satisfies the CUP but not the max-ent condition.

Now imagine a Gaussian indicator function on $A$ that has variances $\Delta q_A \simeq \delta' q$ and $\Delta p_A \simeq \delta' p$ where $q_0 \gg \delta' q \gg \delta q$ and $p_0 \gg \delta' p \gg \delta p$.
Because we can take the limiting case of $\delta q, \delta p \to 0$ and $q_0,p_0 \to \infty$, these inequalities place no constraint on $\delta' p$, $\delta' q$, so we can consider an arbitrary Gaussian indicator function on $A$ that satisfies Eq.~\eqref{eq:validindicatorfunctions}. It is not too difficult to see that if distributions of the form of $\mu'_{\textrm{test}}$ were allowed, then every such indicator function will be ruled out.

The argument is by contradiction.  We show that if an indicator function satisfying Eq.~\eqref{eq:validindicatorfunctions} were allowed, then it would imply a violation of the CUP.
 We assume that the initial state of $AB$ is $\mu'_{\textrm{test}}$ where $\delta q, \delta p \to 0$ and $q_0,p_0 \to \infty$.  First, note that for such a state, $q_A$ prior to the measurement is arbitrarily close in value to either $q_0$ or $-q_0$.  Furthermore, given that we have chosen
$q_0 \gg \delta' q \gg \delta q$, a measurement of an indicator function with $\Delta q_A \simeq \delta' q$  can reveal which value $q_A$ takes with arbitrarily high accuracy.
By virtue of the arbitrarily strong correlation between $q_A$ and $q_B$ in $\mu'_{\textrm{test}}$ and the lack of any influence from $A$ to $B$, one would thereby learn with arbitrarily high accuracy what the value of $q_B$ was after the measurement (whether it is close in value to $q_0$ or to $-q_0$).  Meanwhile, such a measurement could also distinguish with arbitrarily high accuracy whether $p_A$ had a value close to $-p_0$ or to $p_0$ prior to the measurement and again by virtue of the arbitrarily strong correlation between $p_A$ and $p_B$ and the lack of any influence from $A$ to $B$, one would thereby learn with arbitrarily high certainty what the value of $p_B$ was after the measurement (whether it is close in value to $p_0$ or to $-p_0$). Consequently, if such a measurement were allowed, one would be able to infer both the values of $q_B$ and $p_B$ after the measurement to arbitrary accuracy.  Because this would violate the CUP part of the epistemic constraint, such a measurement would have to be ruled out.  Therefore, if we relaxed the max-ent condition, then the resulting theory would include none of the indicator functions that are included in ERL mechanics.

\end{document}